\newenvironment{proof}{{\bf Proof:  }}{\hfill\rule{2mm}{2mm}\vspace*{5pt}}
\newenvironment{proofof}[1]{{\vspace*{5pt} \noindent\bf Proof of #1:  }}{\hfill\rule{2mm}{2mm}\vspace*{5pt}}
\numberwithin{figure}{section}
\numberwithin{equation}{section}
\newtheorem{definition}{Definition}[section]
\newtheorem{remark}{Remark}[section]
\newtheorem{corollary}{Corollary}[section]
\newtheorem{theorem}{Theorem}[section]
\newtheorem{lemma}{Lemma}[section]
\newtheorem{fact}{Fact}[section]
\newcommand{\ranking}{\textsf{Ranking}\xspace}
\newcommand{\mrg}{\textsf{MRG}\xspace}
\newcommand{\rdt}{\textsf{RDO}\xspace}
\newcommand{\irp}{\textsf{IRP}\xspace}
\newcommand{\eqdef}{\stackrel{\textrm{def}}{=}}
\newcommand{\expect}[2]{\underset{#1}{\operatorname{\mathbf E}}\left[#2\right]}
\newcommand{\vect}[1]{\ensuremath{\vec{#1}}}
\newcommand{\vecy}{\vect{y}}
\newcommand{\tbomb}{\textsf{Double-Bomb}\xspace}
\title{Towards a Better Understanding of Randomized Greedy Matching\thanks{This work is supported by National Natural Science Foundation of China (NSFC) 61902233.
The research leading to these results has received funding from the European Research Council under the European Community’s Seventh Framework Programme (FP7/2007-2013) / ERC grant agreement No. 340506. }}
\author{Zhihao Gavin Tang\thanks{ITCS, Shanghai University of Finance and Economics. {\texttt{tang.zhihao@mail.shufe.edu.cn}}}
	\and Xiaowei Wu\thanks{IOTSC, University of Macau. {\texttt{xiaoweiwu@um.edu.mo}}. Part of the work was done when the author was a postdoc at the University of Vienna.}
	\and Yuhao Zhang\thanks{Department of Computer Science, The University of Hong Kong. {\texttt{yhzhang2@cs.hku.hk}}}}
\date{}
\begin{document}
	
\begin{titlepage}
	\thispagestyle{empty}
	\maketitle	
	\begin{abstract}
		There has been a long history for studying randomized greedy matching algorithms since the work by Dyer and Frieze~(RSA 1991). We follow this trend and consider the problem formulated in the oblivious setting, in which the algorithm makes (random) decisions that are essentially oblivious to the input graph.
		
		We revisit the \textsf{Modified Randomized Greedy (\mrg)} algorithm by Aronson et al.~(RSA 1995) that is proved to be $(0.5+\epsilon)$-approximate.
		In particular, we study a weaker version of the algorithm named \textsf{Random Decision Order (\rdt)} that in each step, randomly picks an unmatched vertex and matches it to an arbitrary neighbor if exists.
		We prove the \rdt algorithm is $0.639$-approximate and $0.531$-approximate for bipartite graphs and general graphs respectively. As a corollary, we substantially improve the approximation ratio of \mrg.
		
		Furthermore, we generalize the \rdt algorithm to the edge-weighted case and prove that it achieves a $0.501$ approximation ratio. This result solves the open question by Chan et al.~(SICOMP 2018) about the existence of an algorithm that beats greedy in this setting. As a corollary, it also solves the open questions by Gamlath et al.~(SODA 2019) in the stochastic setting.
	\end{abstract}
\end{titlepage}

\section{Introduction} \label{sec:intro}
Maximum matching is a fundamental problem in combinatorial optimization. Although the problem admits an efficient polynomial time algorithm, the greedy heuristic is widely used and observed to have good performance~\cite{anor/Tinhofer84,jea/Magun98}.
Since the initial work by Dyer and Frieze~\cite{rsa/DyerF91} in the early-nineties, computer scientists have been interested in the worst-case performance of (randomized) greedy algorithms.
In addition to this pure theoretical interest, greedy algorithms have also attracted attention due to kidney exchange applications~\cite{jet/RothSU05}. 
In such scenarios, information about the graph is incomplete and greedy algorithms are the only algorithms one can implement.

To this end, the \emph{oblivious matching} model is formulated~\cite{focs/GoelT12, sicomp/ChanCWZ18}.
Consider a graph $G=(V,E)$ in which the vertices $V$ are revealed while the edges $E$ are unknown.
The algorithm picks a permutation of unordered pairs of vertices $\binom{V}{2}$.
Each pair is probed one-by-one according to the permutation to form a matching greedily.
In particular, when the pair $(u,v)$ is probed, if the edge exists and both $u,v$ are unmatched, then we match the two vertices; otherwise, we continue to the next pair.
That is, the algorithm works under the query-commit model.
We compare the performance of an algorithm to the size of a maximum matching.

Note that any permutation induces a maximal matching and, hence, any algorithm is $0.5$-approximate. On the other hand, no deterministic algorithms can do better than this ratio. The interesting question is then to design a randomized algorithm with approximation ratio greater than $0.5$.

\subsection{Prior Works}
The first natural attempt for randomization is to permute all pairs of vertices uniformly at random. Unfortunately,  Dyer and Frieze~\cite{rsa/DyerF91} proved that it fails to beat the $0.5$ approximation ratio.
The first non-trivial theoretical guarantee for the problem is provided by Aronson et al.~\cite{rsa/AronsonDFS1995}. In the paper, they proposed the \textsf{Modified Randomized Greedy (MRG)} algorithm and proved a $(0.5+\epsilon)$\footnote{$\epsilon=1/400000$.} lower bound on its approximation ratio. The analysis of \mrg was later improved by Poloczek and Szegedy~\cite{focs/PoloczekS12} to $0.504$.\footnote{It was pointed out by Chan et al.~\cite{sicomp/ChanCWZ18} that their paper contains some gaps in the proof.} Different randomized greedy algorithms are also studied, including \ranking~\cite{stoc/KarpVV90, stoc/KarandeMT11, stoc/MahdianY11, sicomp/ChanCWZ18}, \textsf{FRanking}~\cite{stoc/HKTWZZ18, soda/HPTTWZ19}, etc.

Interestingly, all existing algorithms fall into the family of vertex-iterative algorithms~\cite{focs/GoelT12}:
\begin{enumerate}
	\item Iterates through the vertices according to a \emph{decision order}.
	\item In the iteration of $u$, probe $u$ with other vertices according to the \emph{preference order} of $u$.
\end{enumerate}
There are $|V|+1$ different orders to be specified, including a decision order in the first step and $|V|$ individual preference orders in the second step.
The names are chosen due to the following equivalent statement of the framework: 1) let the vertices make decisions sequentially according to the decision order; 2) if a vertex is not matched at its decision time, it would choose its favorite unmatched neighbor, according to its individual preference.

Existing algorithms differ in the way the orders are generated.
In particular, \mrg samples the decision order and the preference orders independently and uniformly at random. \ranking~\cite{stoc/KarpVV90} samples only one random permutation uniformly and uses it as the decision order and the common preference order. It was observed~\cite{focs/GoelT12,focs/PoloczekS12} that the approximation ratio of \ranking for the oblivious matching problem on bipartite graphs is the same as for online bipartite matching with random arrival order.
Hence the result of Mahdian and Yan~\cite{stoc/MahdianY11} translates to a $0.696$-approximation in the oblivious setting for bipartite graphs. For general graphs, \ranking is proved to be $0.526$-approximate~\cite{sicomp/ChanCWZ18, talg/ChanCW18}\footnote{Goel and Tripathi (FOCS 2012) claimed that \ranking is 0.56-approximate, but later withdrew the paper when they discovered	a bug in their proof.}.
Recently, the \textsf{FRanking} algorithm has been studied in the fully online matching model~\cite{stoc/HKTWZZ18, soda/HPTTWZ19}\footnote{In their paper, the algorithm is named \ranking. We use \textsf{FRanking} to distinguish it from the original \ranking algorithm, since they have different behavior when adapted to the oblivious matching setting.}. 
Their results can be transformed to an oblivious matching algorithm that uses an arbitrary decision order\footnote{The feature of arbitrary decision time comes from the online nature of their setting.} and a common random preference order.
The competitive ratios $0.567$ and $0.521$ achieved for bipartite graphs~\cite{soda/HPTTWZ19} and general graphs~\cite{stoc/HKTWZZ18} are then inherited in our setting.

\begin{table}[h]
	\centering
	\begin{tabular}{|c|c|c|c|c|}
		\hline
		& Decision         & Preferences         & Bipartite & General          \\ \hline
		\ranking    & \multicolumn{2}{c|}{Common Random} & $0.696$~\cite{stoc/MahdianY11}           & $0.526$~\cite{sicomp/ChanCWZ18, talg/ChanCW18}   \\ \hline
		\textsf{FRanking}    & Arbitrary       & Common Random & $0.567$~\cite{soda/HPTTWZ19}           & $0.521$~\cite{stoc/HKTWZZ18}         \\ \hline
		\mrg         &
		\multicolumn{2}{c|}{Independent Random} & \multicolumn{2}{c|}{$0.5+\epsilon$~\cite{rsa/AronsonDFS1995,focs/PoloczekS12} $\to {\bf 0.531}$ (Section~\ref{sec:unweighted_gen})} \\ \hline
		{\textsf{IRP}} & Arbitrary	& Independent Random & \multicolumn{2}{c|}{$0.5$~\cite{rsa/DyerF91}} \\ \hline
		{\bf \rdt} & Random       & Arbitrary & ${\bf 0.639}$~(Section~\ref{sec:unweighted_bipartite})           & ${\bf 0.531}$~(Section~\ref{sec:unweighted_gen})          \\ \hline
	\end{tabular}
	\caption{A summary of prior works and our results. Our results are marked in bold. ``Common random'' means a single random permutation is used for all orders. ``Independent random'' means the orders are generated independently.}
	\label{table:prior_work}
\end{table}

\subsection{Our Contributions}
Despite the successful progresses of these results, we lack a systematic understanding on the roles of the decision order and preference orders. 
In this paper, we revisit the \mrg algorithm and study separately the two kinds of randomness on the decision order and the preference orders.

Consider a variant of the \mrg algorithm in which the decision order is fixed arbitrarily while the preference orders are drawn independently and uniformly for all vertices. We refer to this algorithm as \textsf{Independent Random Preferences (IRP)}. Due to the instance from~\cite{rsa/DyerF91}, \irp is not better than $0.5$-approximate.\footnote{This is observed in the online matching literature. For completeness, we include a formal discussion in Appendix~\ref{app:hardness}.}
In other words, the random decision order is necessary and crucial for \mrg to work. On the other hand, it is not clear from the previous analysis~\cite{rsa/AronsonDFS1995, focs/PoloczekS12} whether the random decision order alone is sufficient.

In this paper, we answer this question affirmatively and conclude that the randomness of decision order plays a more important role than the preference orders for the \mrg algorithm. Consider the following variant of the \mrg algorithm.

\paragraph{Random Decision Order (RDO)} We sample the decision order uniformly at random and fix the individual preference orders arbitrarily.

\begin{theorem}
	\label{th:unweighted_bipartite}
	\rdt is $0.639$-approximate for the oblivious matching problem on bipartite graphs.
\end{theorem}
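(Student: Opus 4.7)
The plan is to analyze \rdt on a bipartite graph $G=(L,R,E)$ via the standard continuous-time reformulation: draw i.i.d. arrival times $t_v \sim U[0,1]$ for every $v\in V$, and at time $t_v$ process $v$ (if still unmatched) by matching it to its most-preferred available neighbor according to its fixed preference. Without loss of generality, assume the optimal matching $M^*$ is perfect on $G$ (pad with isolated vertices; this only makes the ratio harder). The goal becomes to lower bound $\sum_v \Pr[v \text{ is matched}] \ge 2 \cdot 0.639 \cdot |M^*|$.

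The first move is to record the fundamental invariant: when $v$ arrives at time $t$ and its $M^*$-partner $v^*$ is still unmatched, then $v$ must match (to some neighbor, possibly $v^*$). Integrating this alone across $t\in[0,1]$ only recovers the trivial $0.5$ bound, so the plan is to push further by tracking, for each OPT edge $(u,v)$, not only the marginals $y_u(t)=\Pr[u\text{ unmatched at time }t]$ and $y_v(t)$ but also the joint ``alive'' probabilities along alternating structures induced by $M^*$ and the \rdt matching. The bipartite structure is crucial: every such alternating path alternates between $L$ and $R$, so a charging chain can be unwound cleanly without the odd-cycle complications that would appear for general graphs.

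I would then cast the analysis as a factor-revealing linear program (or a coupled system of differential inequalities in the arrival-time variable) whose variables are these marginal and pairwise probabilities at selected times, with constraints derived from the above invariant together with the obvious monotonicity $y_v'(t)\le 0$ and handover rules when an arriving $v$ routes its match to a non-OPT neighbor $w$. Because the random decision order induces exchangeability among arrival times, the constraints become time-homogeneous, and the worst-case LP value can be computed (numerically, or analytically in an appropriate continuous limit) to obtain the claimed constant $0.639$. Checking the bound against the Karande–Mehta–Tripathi / Mahdian–Yan family of hard instances should verify that the LP formulation is not loose by more than lower-order terms.

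The main obstacle is clearly the adversarial nature of the preferences. Unlike \ranking, where randomness over preferences gives a direct symmetry on \emph{which} neighbor is selected, here the adversary may deterministically route every match to an already-doomed neighbor in order to kill OPT edges. The key observation that saves us is that the adversary commits to the preferences \emph{before} the decision order is drawn, so when an arriving $v$ picks some $w \neq v^*$, the vertex $w$ itself is then matched and can no longer be used to block $w^*$; propagating this accounting along alternating paths is how the LP captures a global gain of roughly $0.139$ over the naive $0.5$ bound. Making this rigorous through a careful parameterization of the LP, and identifying the ``right'' set of pairwise events to include without blowing up the state space, is the step I expect to be the most delicate.
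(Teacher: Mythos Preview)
Your proposal is a plan rather than a proof, and it is missing the structural idea that makes the paper's argument go through. The paper does \emph{not} track marginal or joint ``alive'' probabilities across time, nor does it build an LP whose variables are such probabilities. Instead it uses a gain-sharing (randomized primal--dual) rule: whenever $u$ actively matches some vertex at its decision time $y_u$, set $\alpha_u=g(y_u)$ and give $1-g(y_u)$ to the passive endpoint, for a non-decreasing function $g$ to be chosen later. The analysis is then \emph{localized} to a single edge $(u,v)\in M^*$ by fixing the decision times of all other vertices arbitrarily and lower-bounding $\alpha_u+\alpha_v$ as a function of the two remaining variables $(y_u,y_v)\in[0,1]^2$ only. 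A case split on whether $(u,v)\in M(1,1)$ and on two transition thresholds $\theta,\tau$ yields explicit integral lower bounds; the factor-revealing LP is then a \emph{maximization} over step functions $g$, not a minimization over adversarial probability profiles.

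Your proposed route runs into exactly the obstacle you flag in your last paragraph, and it is not merely delicate---it does not close as stated. With adversarial preferences, the ``handover'' from an arriving $v$ to a non-partner $w$ launches an alternating path whose length and shape are graph- and adversary-dependent, so the collection of pairwise events needed for a self-contained LP has no a priori bound and the ``right set of pairwise events'' you hope to identify does not exist in any uniform sense. The paper's gain-sharing rule neutralizes adversarial preferences at the source: the active vertex's share $g(y_u)$ depends only on its own decision time, never on whom it chose, so no chain ever needs to be followed. Bipartiteness enters only once, via the elementary observation that inserting a vertex on one side cannot unmatch a vertex on the other side; this supplies the ``extra gain'' terms in the two-variable picture and is precisely what lifts the bound from $0.5$ to $0.639$.
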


\begin{theorem}
	\label{th:unweighted_general}
	\rdt is $0.531$-approximate for the oblivious matching problem on general graphs.
\end{theorem}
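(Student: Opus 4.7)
The plan is to reduce Theorem~\ref{th:unweighted_general} to a local probability lower bound per edge of a reference optimum matching, and then establish that bound through a structural analysis of the random decision order.

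Fix a maximum matching $M^*$ and let $F$ denote the (random) size of the matching returned by \rdt. For each $e=(u,u^*)\in M^*$, let $I_e$ be the indicator that both endpoints of $e$ are matched by \rdt (possibly to vertices outside $e$), and $J_e$ the indicator that exactly one of them is matched. Because the edges of $M^*$ are vertex-disjoint we have $2F \ge \sum_{e\in M^*}(2 I_e + J_e)$, and because \rdt returns a maximal matching, $I_e + J_e = 1$ for every $e\in M^*$. Taking expectations yields
\[
\mathbb{E}[F] \;\ge\; \tfrac{1}{2}|M^*| + \tfrac{1}{2}\sum_{e\in M^*}\Pr[I_e=1],
\]
so it is enough to show that the average of $\Pr[I_e=1]$ over $e\in M^*$ is at least $0.062$.

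Fix a reference edge $e=(u,u^*)\in M^*$. Whichever of $u$ and $u^*$ is visited first in the random decision order is certainly matched, because its $M^*$-partner is an unmatched neighbor at that moment. The task is therefore to lower-bound the probability that the later of $u,u^*$ is \emph{also} matched. Conditioning on the relative positions of $N(u)\cup N(u^*)$ in the random permutation reduces this to a combinatorial quantity depending only on the constant-radius neighborhood of $e$ and the adversarial preferences. In the easy regime, where $u$ or $u^*$ has few private neighbors, the probability is close to $1$; the nontrivial regime is when the neighborhood is moderately rich and the adversary uses the preferences to funnel ``stealing'' vertices onto $u$ or $u^*$ at harmful moments in the order.

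The key new difficulty in the general-graph setting, compared with the bipartite case of Theorem~\ref{th:unweighted_bipartite}, is twofold: shared neighbors of $u$ and $u^*$ can create short odd alternating structures that resist a direct bipartite-style argument, and the indicators $I_e$ are not independent across edges of $M^*$ because their second-neighborhoods overlap. To handle this, I would set up a factor-revealing linear program in the spirit of the \ranking analyses~\cite{sicomp/ChanCWZ18, talg/ChanCW18}, with variables of the form ``probability that vertex $v$ is still unmatched when vertex $w$ is processed'' and with constraints encoding the local dynamics of \rdt; the bound $0.531$ would then follow from an explicit dual feasible solution. A parallel, more combinatorial route is a case analysis on the possible local configurations around $e$ (length-$3$ augmenting structures and small odd cycles), uniformly lower-bounding the contribution of each case to the average.

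I expect the main obstacle to be either the dual feasibility verification or, equivalently, the case analysis in the combinatorial approach. The worst case is driven by dense local configurations in which adversarial preferences correlate the events $\{u\text{ unmatched}\}$ and $\{u^*\text{ unmatched}\}$, so that a purely per-edge bound is too loose; a global charging step will likely be required so that edges of $M^*$ with favorable local structure can subsidize the hard edges. The $0.531$ ratio should then be tight against an explicit worst-case family, which would also account for the sizeable gap between the general and bipartite bounds.
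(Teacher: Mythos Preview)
Your proposal is not a proof but a plan, and the plan contains a genuine gap that would make it fail as stated.

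The central error is the locality claim: you assert that, after conditioning on the relative positions of $N(u)\cup N(u^*)$ in the decision order, the event ``the later of $u,u^*$ is matched'' becomes a combinatorial quantity depending only on a constant-radius neighborhood of $e$. This is false for \rdt. Whether a neighbor $w$ of $u^*$ is still available when $u^*$ acts depends on whether $w$ was grabbed earlier by some vertex arbitrarily far from $e$, and with adversarial preferences you cannot bound this by anything local. The same non-locality makes your proposed LP variables (``probability that $v$ is unmatched when $w$ is processed'') globally coupled, so neither the LP nor the ``case analysis on local configurations around $e$'' can be carried out as written. You yourself flag that a per-edge bound is too loose and that a ``global charging step will likely be required,'' but you never define one; without it, nothing beyond the trivial $0.5$ follows from your framework.

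The paper's proof is structurally quite different and sidesteps exactly this non-locality. It uses a randomized primal--dual (gain-sharing) argument: for each pair of perfect partners $(u,v)\in M^*$ it \emph{fixes} the decision times of all other vertices and integrates only over $(y_u,y_v)\in[0,1]^2$, splitting the gain of every matched edge between its endpoints via a function $g$. The key idea that replaces your missing charging step is a \emph{victim--compensation} rule: if $z$ actively matches $u$ and $(u,v)\in M_z(\vecy)$, then $v$ is declared the victim of $z$ and, if unmatched, receives a small compensation $h(y_z)$ from $z$. This transfer is what lets the analysis recover extra gain in precisely the regions where your indicator $I_e$ equals $0$. A case analysis on whether $(u,v)\in M(1,1)$ and whether $(u,v)\in M_z(1,1)$ then yields explicit integral lower bounds on $\mathbb{E}[\alpha_u+\alpha_v]$, and a factor-revealing LP over step functions $g,h$ certifies the $0.531$ ratio. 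None of these ingredients --- conditioning on $\vecy_{-\{u,v\}}$, the gain split via $g$, the victim notion, or the compensation rule --- appear in your outline, and they are essential to beating $0.5$ on general graphs.
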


As an immediate corollary, the approximation ratios apply to \mrg, that substantially improve the $(0.5+\epsilon)$ ratios by Aronson et al.~\cite{rsa/AronsonDFS1995} and Poloczek and Szegedy~\cite{focs/PoloczekS12} for general graphs.
Besides, our result beats the state-of-the-art $0.526$-approximation by \ranking~\cite{talg/ChanCW18} for the oblivious matching problem.

\begin{corollary} \label{corollary:mrg}
	\mrg is $0.531$-approximate for the oblivious matching problem.
\end{corollary}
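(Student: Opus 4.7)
The plan is to derive the corollary as an immediate consequence of Theorem~\ref{th:unweighted_general} by an averaging argument over the random preference orders. Observe that \mrg and \rdt agree on the distribution of the decision order (both uniform), and differ only in how the preference orders are generated: \mrg samples them independently and uniformly at random, whereas \rdt allows them to be chosen arbitrarily (and, in particular, adversarially).

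First, I would fix an arbitrary graph $G$ with maximum matching size $\mathrm{OPT}(G)$, and let $M(\pi,\sigma)$ denote the (random) matching produced by the vertex-iterative framework when the decision order is $\pi$ and the tuple of preference orders is $\sigma$. Theorem~\ref{th:unweighted_general} asserts that for every fixed tuple of preference orders $\sigma$,
\[
\expect{\pi}{|M(\pi,\sigma)|} \;\geq\; 0.531 \cdot \mathrm{OPT}(G),
\]
where the expectation is over the uniformly random decision order $\pi$. This is the key input: the guarantee of \rdt holds pointwise in $\sigma$.

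Next, I would apply the tower property of expectation to the expected size of the matching returned by \mrg. Writing $\sigma$ for the random preference profile used by \mrg, which is independent of $\pi$, we have
\[
\expect{\pi,\sigma}{|M(\pi,\sigma)|} \;=\; \expect{\sigma}{\,\expect{\pi}{|M(\pi,\sigma)|}\,} \;\geq\; \expect{\sigma}{0.531\cdot\mathrm{OPT}(G)} \;=\; 0.531\cdot \mathrm{OPT}(G),
\]
which establishes that \mrg is $0.531$-approximate on general graphs.

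The argument involves no genuine obstacles; the only subtle point is conceptual, namely the observation that Theorem~\ref{th:unweighted_general} is proved \emph{uniformly} over all preference profiles, so that averaging over any distribution on preferences (in particular, the product-uniform distribution used by \mrg) preserves the bound. The analogous bipartite statement would follow in the same way from Theorem~\ref{th:unweighted_bipartite}, but since $0.531 < 0.639$ the general-graph bound is the binding one claimed in the corollary.
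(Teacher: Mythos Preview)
Your proposal is correct and matches the paper's own reasoning: the paper presents this as an immediate corollary of Theorem~\ref{th:unweighted_general}, relying on exactly the observation you spell out---that the \rdt guarantee holds for \emph{arbitrary} preference orders, so averaging over the random preference orders used by \mrg preserves the $0.531$ bound.
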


We also show some hardness results for the \rdt algorithm.
First, we give a $0.646$ upper bound on the approximation ratio of \rdt on bipartite graphs by experiments, which is very close to our $0.639$ lower bound.
Second, we prove that \rdt is at most $0.625$-approximate on general graphs. Together with Theorem~\ref{th:unweighted_bipartite}, it gives a separation on the approximation ratio of \rdt on bipartite and general graphs. To the best of our knowledge, no such separation has been shown to exist for other algorithms including \mrg, \ranking and \textsf{FRanking}.

\paragraph{Extension to Edge-weighted Case.} Interestingly, the better understanding of \mrg towards \rdt lends us to a natural generalization of the algorithm to the edge-weighted oblivious matching problem. In the edge-weighted case, the graph is associated with a weight function $w$ defined on all pairs of vertices that is known by the algorithm. If an edge $(u,v)$ exists, its weight is given by $w_{uv}$.
It is straightforward to check that the greedy algorithm that probes edges in descending order of the weights is $0.5$-approximate.
It has been proposed as an open question in \cite{sicomp/ChanCWZ18} to study the existence of a better than $0.5$-approximate algorithm.
In this paper, we answer it positively by studying the following generalization of \rdt.

\paragraph{Perturbed Greedy.} 
Each vertex $u\in V$ draws a rank $y_u\in [0,1]$ independently and uniformly.
Then probe all pairs of vertices $(u,v)$ in descending order of their perturbed weights\footnote{Break ties arbitrarily and consistently.}, which is defined as $(1-g(\min\{y_u,y_v\}))\cdot w_{uv}$, where $g$ is a non-decreasing function to be fixed later.

\begin{theorem}\label{th:weighted_general}
	There exists a function $g$ so that \textsf{Perturbed Greedy} is $0.501$-approximate for the edge-weighted oblivious matching problem.
\end{theorem}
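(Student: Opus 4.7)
The plan is to analyze \textsf{Perturbed Greedy} via a randomized primal-dual (``gain-sharing'') framework, following the template of the Devanur--Jain--Kleinberg analysis of \ranking and the perturbed-greedy analyses used in the stochastic matching literature (in particular Gamlath et al., referenced in the abstract). Fix an optimal matching $M^*$ of weight $\mathrm{OPT}$; for every edge $(a,b)$ that the algorithm actually matches, split its gain $w_{ab}$ between its two endpoints as shares $\alpha_a,\alpha_b\ge 0$ with $\alpha_a+\alpha_b=w_{ab}$, and set $\alpha_v=0$ for unmatched $v$. A natural split parameterized by $g$ is to let the endpoint with the smaller rank retain the ``unperturbed'' portion $(1-g(\min\{y_a,y_b\}))\cdot w_{ab}$ and pass the complementary $g(\min\{y_a,y_b\})\cdot w_{ab}$ to the other endpoint. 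Then $\mathbb{E}[\sum_v\alpha_v]=\mathbb{E}[\mathrm{ALG}]$, and it suffices to prove the per-edge inequality
\[
\mathbb{E}[\alpha_u+\alpha_v]\ \geq\ (0.5+\delta)\cdot w_{uv}\qquad\text{for every }(u,v)\in M^*
\]
for some absolute constant $\delta>0$.

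For a fixed edge $(u,v)\in M^*$, I would freeze the ranks of all vertices outside $\{u,v\}$ and view the algorithm as a function of $(y_u,y_v)\in[0,1]^2$. The key structural ingredient I expect to rely on is a monotonicity / ``critical rank'' lemma: conditional on the outside randomness and on $y_v$, there is a threshold $t_u(y_v)\in[0,1]$ such that whenever $y_u<t_u$ vertex $u$ is matched by the algorithm to a best-available neighbor $u^*(y_u)$ strictly before $(u,v)$ is probed, while whenever $y_u\ge t_u$ vertex $u$ is still unmatched at the time $(u,v)$ is probed; an analogous threshold $t_v(y_u)$ governs $v$. This is the weighted analogue of the critical-rank argument used for unweighted \rdt and \ranking, and it decouples the two-dimensional integration over $(y_u,y_v)$ into three clean regions: the edge $(u,v)$ is itself picked (full credit $w_{uv}$), or $u$ is stolen by $u^*$ (credit from $\alpha_u$), or $v$ is stolen by $v^*$ (credit from $\alpha_v$). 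Integrating the chosen split over these regions gives a closed-form lower bound on $\mathbb{E}[\alpha_u+\alpha_v]/w_{uv}$ expressed through $g$ and the weight ratios $w_{uu^*}/w_{uv}$, $w_{vv^*}/w_{uv}$.

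The per-edge inequality then reduces to a functional inequality in $g$ that must hold uniformly over all admissible weight-ratio profiles. The main obstacle is exhibiting a \emph{single} function $g$ that survives this worst case: unlike in the unweighted setting, where the perturbation only shuffles a random order, here $(1-g(\min\{y_u,y_v\}))$ multiplies raw weights and can flip the probing order in either direction, so the extremal configuration is not a priori obvious. My plan is to identify the worst local gadget (plausibly a short path or ``caterpillar'' in which each of $u,v$ has exactly one heavy competitor) and choose $g$ as a small perturbation of $0$, for instance $g(x)=c\cdot x$ or $g(x)=c(e^{x-1}-e^{-1})$ with $c$ a tiny constant. On configurations where pure weighted greedy achieves exactly $1/2$, the first-order-in-$c$ derivative of the algorithm's gain is strictly positive; the remaining task is to verify that this first-order improvement dominates the second-order loss on all other configurations, which is plausible given the very modest $0.501$ target and consistent with the statement asserting only the \emph{existence} of a beneficial $g$ rather than the optimal one.
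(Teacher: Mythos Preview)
Your plan has a genuine gap: the basic gain-sharing split you describe, by itself, cannot beat $\tfrac12$ on general graphs, and the missing ingredient is precisely the \emph{victim/compensation} mechanism that the paper introduces.

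Concretely, consider an optimal edge $(u,v)$ and freeze the outside ranks. In the asymmetric case, some vertex $z$ passively steals $u$ in $M(1,1)$, and for a large region of $(y_u,y_v)$ (the paper's \textsf{Zone-A}) $u$ is passive with share $\alpha_u \approx (1-g(y_z))w_{uz}$ while $v$ may be \emph{unmatched} with $\alpha_v=0$. Your three-region decomposition implicitly assumes that whenever one of $u,v$ is stolen the other is either matched to $u,v$ or itself stolen with substantial credit; on non-bipartite graphs this fails, because the alternating-path argument that rescues the partner (Corollary~2.2 in the bipartite analysis) is unavailable. The paper states explicitly that the basic gains alone ``do not yield an approximation ratio strictly above $0.5$'' and remedies this by declaring $v$ the \emph{victim} of $z$ whenever $u$ and $v$ would match in $M_z(\vec y)$, and then transferring an amount $h(y_z)\cdot w_{uz}$ from $z$ to $v$. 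This compensation is what produces the extra $h(\min\{\theta,\lambda\})$ term in \textsf{Zone-A} and is indispensable; your proposal contains no analogue of it.

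Two further points where your plan diverges from what actually works. First, the right regime for $g$ is not a tiny perturbation of $0$: the paper takes $g(y)\in[0.4,0.6]$ (with $h=\tfrac{1}{10}(1-g)$), so that both active and passive shares are bounded away from $0$ and the compensation a victim receives is comparable to $w_{uv}$; with $g\approx 0$ a passive endpoint receives almost nothing and the per-edge inequality collapses for exactly the stolen-and-partner-unmatched configuration above. Second, your ``critical rank'' picture is too coarse for the weighted algorithm: a vertex can be passively stolen by a heavy edge even at very small rank, so there is no single threshold below which $u$ actively picks a best-available neighbor. The paper's case analysis accordingly tracks several transition ranks ($\theta,\lambda,\tau,\gamma$) and splits on whether $(u,v)\in M_z(1,1)$, rather than the three-region picture you sketch.
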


It is easy to show that the \textsf{Perturbed Greedy} algorithm degenerates to \rdt for unweighted graphs, for any increasing function $g$. Indeed, consider $y_u$ as the \emph{decision time} of vertex $u$. We list all vertices in ascending order of their decision times. This is equivalent to sampling a random decision order. Upon the decision of $u$, all remaining edges incident to $u$ would have the same smallest perturbed weight $1-g(y_u)$. By breaking ties arbitrarily and consistently, it is equivalent to study arbitrary individual preference orders, i.e. the \rdt algorithm.

\paragraph{Extension to Stochastic Probing.}
Another well-studied maximum matching model is the \emph{stochastic probing} problem~\cite{icalp/ChenIKMR09,icalp/CostelloTT12,soda/GamlathKS19}.
In addition to the edge-weighted oblivious setting, each edge is associated with an existence probability that is known by the algorithm.
It is assumed that the underlying graph is generated by sampling each edge independently with the existence probability.

Very recently, Gamlath et al.~\cite{soda/GamlathKS19} proposed a $(1-\frac{1}{e})$-approximate algorithm for the problem on bipartite graphs, which is the first result bypassing the $0.5$ barrier.
They proposed\footnote{The open questions are raised in their SODA 2019 conference presentation.} two open questions: (1) does algorithm with approximation ratio strictly above $0.5$ exist on general graphs? (2) what approximation ratio can be obtained if the existence probabilities of edges are correlated?

Observe that any algorithm in the oblivious setting can be applied in the stochastic setting by ignoring the extra stochastic information. Moreover, the approximation ratio is preserved since it holds for every realization of the underlying graph, even when the edges are sampled from correlated distributions.
Thus, as a corollary of Theorem~\ref{th:weighted_general}, we answer both questions positively.
\begin{corollary}\label{th:weighted_general_stochastic}
	There exists an algorithm for the edge-weighted stochastic probing problem on general graphs that is $0.501$-approximate, even when the existence probabilities of edges are correlated.
\end{corollary}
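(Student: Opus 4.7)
The plan is to derive Corollary~\ref{th:weighted_general_stochastic} as a direct consequence of Theorem~\ref{th:weighted_general} via a generic reduction from stochastic probing to oblivious matching. First I would make precise what it means to execute \textsf{Perturbed Greedy} in the stochastic probing setting. The algorithm only looks at the weight function $w$ (which is part of the input) and at the independent uniform ranks $y_u \in [0,1]$; it never uses the edge set nor any existence probabilities to decide the probing order. Hence we can run it verbatim: traverse all pairs $(u,v)$ in descending order of $(1-g(\min\{y_u,y_v\})) \cdot w_{uv}$, and at each pair query whether the edge exists, committing to it whenever both endpoints are still unmatched. This is consistent with the query-commit rule.

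Next I would carry out a per-realization argument. Fix any realization $G=(V,E)$ of the underlying random graph and condition on it. Then the joint distribution of the decisions of \textsf{Perturbed Greedy} depends only on $G$ and the random ranks $\{y_u\}_{u\in V}$, and it is identical to its behaviour on $G$ in the oblivious model. Theorem~\ref{th:weighted_general} therefore yields
\[
\expect{\vecy}{w(\text{ALG}(G))} \;\geq\; 0.501 \cdot \text{OPT}(G),
\]
where $\text{OPT}(G)$ denotes the weight of a maximum weight matching in $G$.

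Finally I would take expectation over the (possibly correlated) distribution of $G$. By the tower rule and linearity,
\[
\mathbf{E}\bigl[w(\text{ALG})\bigr] \;=\; \mathbf{E}_G\!\left[ \expect{\vecy}{w(\text{ALG}(G))} \right] \;\geq\; 0.501 \cdot \mathbf{E}_G\bigl[\text{OPT}(G)\bigr],
\]
and $\mathbf{E}_G[\text{OPT}(G)]$ is exactly the benchmark used for the stochastic probing problem. Since the inequality is applied pointwise in $G$, no independence among the edges is used anywhere; the reduction is insensitive to arbitrary correlations in the edge existences.

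There is no substantive obstacle: the entire content lies in observing that \textsf{Perturbed Greedy} is oblivious, i.e.\ its probing order is a function of $(V,w)$ and the internal randomness $\vecy$ alone. The only point that warrants care is verifying that ``commit upon a successful probe'' is the same operation in the oblivious and stochastic probing models, so that the guarantee of Theorem~\ref{th:weighted_general} transfers realization by realization without any loss.
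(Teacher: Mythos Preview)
Your proposal is correct and matches the paper's own reasoning: the paper does not give a separate proof environment for this corollary but simply observes, in the paragraph preceding it, that an oblivious algorithm can be run in the stochastic setting by ignoring the extra information and that the approximation ratio is preserved realization by realization (hence under arbitrary correlations). Your write-up is a faithful, more detailed unpacking of exactly that argument.
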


\subsection{Our Techniques}

We will prove our results starting with \rdt on bipartite graphs, then \rdt on general graphs and finally \textsf{Perturbed Greedy} on edge-weighted general graphs, in progressive order of difficulty.
Our analysis builds on the randomized primal dual framework introduced by Devanur et al.~\cite{soda/DevanurJK13} and recently developed in a sequence of results~\cite{stoc/HKTWZZ18, icalp/HTWZ18, soda/HPTTWZ19}.
Roughly speaking, we split the gain of each matched edge to its two endpoints.
By proving that the expected combined gain of any pair of neighbors $u,v$ is at least $r\cdot w_{uv}$, we have that the approximation ratio is at least $r$.
However, our approach differs from previous works in a way that we only look at the pairs that appear in some fixed maximum matching.
We refer to such pairs as perfect partners. In order to prove the algorithm is $r$-approximate, we observe that it suffices to show the expected combined gain of $(u,v)$ is at least $r\cdot w_{uv}$ for all perfect pairs $(u,v)$.
As far as we know, our result is the first successful application of the randomized primal dual technique to the oblivious matching problem, or to any edge-weighted maximum matching problems.

\paragraph{Unweighted Bipartite Graphs.}
As discussed in the previous subsection, the random decision order can be generated by sampling decision times independently and uniformly from $[0,1]$ for all vertices and then list the vertices in ascending order of their decision times. We shall use this interpretation for the purpose of analysis. We use $y_u$ to denote the decision time of vertex $u$, and it plays a similar role as the rank of $u$ in the \ranking algorithm in our analysis.

Consider a pair of neighbors $(u,v)$, where $u$ has an earlier decision time than $v$.
Existing analysis on \ranking relies on an important structural property that whenever $v$ is unmatched, $u$ must be matched to some vertex with rank smaller than $y_v$.
However, we have nothing to say about $u$'s choice on its decision time, since we assume it makes the decision based on its individual preference. %
To this end, whenever an edge is matched, we split the gain
between the endpoints based on the current decision time instead of the rank/decision time of the passively chosen vertex.
We specify such a gain sharing rule so that by fixing the decision times of all vertices other than $u,v$ arbitrarily and taking the randomness over $y_u,y_v$, the expected combined gain of $u$ and $v$ is at least $0.639$.

\paragraph{Unweighted General Graphs.}
For bipartite graphs, our analysis relies on a crucial structural property that if $u$ is matched when its neighbor $v$ is removed from the graph, then $u$ remains matched when $v$ is added back, no matter what the decision time $v$ has.
Going from bipartite graphs to general graphs takes away this nice property.
A similar issue arises in the analysis of \ranking, \textsf{FRanking}.
The recent work by Huang et al.~\cite{stoc/HKTWZZ18} introduce a notion of \emph{victim} to tackle the problem.
Roughly speaking, they call an unmatched vertex $v$ the victim of its neighbor $z$ if $v$ becomes matched when $z$ is removed.
Then they define a compensation rule in which each active vertex sends an amount of gain, named compensation, to its victim (if any).

In this work, we propose a new definition of victim together with a compensation rule that is arguably more intuitive and has a clearer structure compared to that of~\cite{stoc/HKTWZZ18}.

Suppose vertex $z$ actively matches $u$ in our algorithm.
If $u$ is matched with its \emph{perfect partner} $v$ when $z$ is removed, we call $v$ the \emph{victim} of $z$.
After sharing the gain as we have described for bipartite graphs, let each active vertex send a portion of its gain to its victim if the victim is unmatched. This compensation rule is designed to retrieve some extra gain for $u,v$ when the aforementioned property for bipartite graphs fails to hold.

\paragraph{Weighted General Graphs.}
For unweighted graphs, the contribution of each matched vertex is fixed. Therefore, it suffices to summarize the status of a vertex as matched or unmatched.
However, being matched is no longer a meaningful signature on edge-weighted graphs.
E.g., being matched in an $\epsilon$-weighted edge and being matched in a large weight edge should be treated differently.
This observation prevents the victim notion of Huang et al.~\cite{stoc/HKTWZZ18} from generalizing to edge-weighted case.

More specifically, under their notion, the victim $v$ of $z$ will become matched when $z$ is removed.
However, there is no guarantee on the weight of edge $v$ matches.
In contrast, we define $v$ the victim of $z$ only if $v$ matches its perfect partner $u$ when $z$ is removed.

Consequently, our definition of victim generalizes naturally to the weighted case, and the analysis for unweighted graphs extends with a minimum change of the compensation rule.
In particular, we fix a function $h$ that represents the amount of compensation one would like to send. The compensation rule works in the following way. Suppose $v$ is the victim of $z$. We know that $z$ is matched with $v$'s perfect partner $u$. Our compensation rule ensures that after the compensation, $v$ has gain at least $h(y_z)\cdot w_{zu}$.
Note that the amount of compensation that $z$ sends depends on the current gain of $v$.
This generalization further justifies the advantage of our new notion of victim and we believe the new notion will find more applications in other matching problems on general graphs.

\subsection{Other Related Works}
For the hardness results of the oblivious matching problem, Goel and Tripathi~\cite{focs/GoelT12} show a $0.7916$ upper bound on the approximation ratio of any algorithm for unweighted graphs.
When restricted to the family of vertex iterative algorithms, they give a stronger bound of $0.75$.
For the \mrg algorithm, Dyer and Frieze~\cite{rsa/DyerF91}'s bomb graphs give a $2/3$ upper bound on its approximation ratio.
For the \ranking algorithm, the hard instance provided by Mahdian and Yan~\cite{stoc/MahdianY11} implies a $0.727$ upper bound, which was later improved to $0.724$ by Chan et al.~\cite{sicomp/ChanCWZ18}. 

Independent from our work, Huang~\cite{corr/Huang19} studied the edge-weighted online bipartite matching problem using the randomized primal-dual technique. They provide a simplification of the algorithm by Zadimoghaddam~\cite{corr/Zadimoghaddam17}, and show that it obtains an improved competitive ratio of $0.514$.

\section{Preliminaries}\label{sec:prelim}

Recall the general algorithm for the edge-weighted oblivious matching problem as follows.

\begin{algorithm}
	\caption{\textsf{Perturbed Greedy}}
	\label{alg:rog}
	\begin{algorithmic}
		\State Fix a non-decreasing function $g:[0,1]\to[0,1]$.
		\State Each vertex $u$ independently draws a rank $y_u \in [0,1]$ uniformly at random.
		\State Probe pairs in descending order of their perturbed weights $(1-g(\min \{y_u, y_v\})) w_{uv}$. 
	\end{algorithmic}
\end{algorithm}

We use $\vec{y}$ to denote the ranks of vertices, and $M(\vec{y})$ to denote the matching produced by our algorithm with ranks $\vec{y}$.
We use $M_{u}(\vec{y})$ to denote the matching produced by our algorithm on $G-\{u\}$, i.e. when vertex $u$ is removed from the graph.
Fix any maximum weight matching $M^*$, the approximation ratio is the ratio between the expected total weight of edges in $M(\vecy)$, and the total weight of edges in $M^*$.

The gain sharing framework we use in this paper is formalized as follows.

\begin{lemma}\label{lemma:dual}
	If there exist non-negative random variables $\{ \alpha_u \}_{u\in V}$ depending on $\vecy$ such that
	\begin{compactitem}
		\item for every $\vecy$, $\sum_{u\in V} \alpha_u = \sum_{(u,v)\in M(\vecy)}w_{uv}$;
		\item for every $(u,v)\in M^*$, $\expect{\vecy}{\alpha_u + \alpha_v} \geq r\cdot w_{uv}$,
	\end{compactitem}
	then our algorithm is $r$-approximate.
\end{lemma}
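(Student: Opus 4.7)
The plan is to prove the lemma by a short chain of (in)equalities, exploiting the first condition to rewrite the expected algorithmic weight as $\sum_u \expect{\vecy}{\alpha_u}$, then using non-negativity of $\alpha$ together with the fact that $M^*$ is a matching (so each vertex appears in at most one edge of $M^*$) to bound this sum from below by $\sum_{(u,v)\in M^*}\expect{\vecy}{\alpha_u+\alpha_v}$, and finally invoking the second condition to compare against $r\cdot w(M^*)$.

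Concretely, first I would take expectation over $\vecy$ in the first bullet to get
\[
\expect{\vecy}{\sum_{(u,v)\in M(\vecy)} w_{uv}} \;=\; \expect{\vecy}{\sum_{u\in V}\alpha_u} \;=\; \sum_{u\in V}\expect{\vecy}{\alpha_u}.
\]
Then, since $M^*$ is a matching, the endpoint-sets of its edges are pairwise disjoint, so $\sum_{(u,v)\in M^*}(\alpha_u+\alpha_v)$ counts each vertex of $V(M^*)$ exactly once and every other vertex zero times. Using $\alpha_u\ge 0$ for $u\notin V(M^*)$, I get
\[
\sum_{u\in V}\expect{\vecy}{\alpha_u} \;\ge\; \sum_{u\in V(M^*)}\expect{\vecy}{\alpha_u} \;=\; \sum_{(u,v)\in M^*}\expect{\vecy}{\alpha_u+\alpha_v}.
\]
Finally, the second hypothesis gives $\expect{\vecy}{\alpha_u+\alpha_v}\ge r\cdot w_{uv}$ for every $(u,v)\in M^*$, so the right-hand side is at least $r\cdot\sum_{(u,v)\in M^*}w_{uv}=r\cdot w(M^*)$, which is $r$ times the value of the optimal matching. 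Dividing yields the claimed approximation ratio.

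There is no real obstacle here — the argument is the standard LP-duality-style accounting that the paper's ``randomized primal dual'' framework is built on. The only point worth being careful about is that the summation on the right is over \emph{edges} of $M^*$, not vertices, so the step going from a per-vertex sum to a per-edge sum uses both the matching property of $M^*$ (no double counting) and the non-negativity of $\alpha$ (to drop vertices unmatched in $M^*$). Everything else is rearrangement and linearity of expectation.
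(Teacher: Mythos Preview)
Your proposal is correct and follows essentially the same chain of (in)equalities as the paper's own proof; the only difference is that you spell out explicitly why non-negativity of the $\alpha_u$'s and the matching property of $M^*$ justify the inequality $\sum_{u\in V}\alpha_u \ge \sum_{(u,v)\in M^*}(\alpha_u+\alpha_v)$, which the paper leaves implicit.
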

\begin{proof}
	The approximation ratio of \textsf{Perturbed Greedy} is given by
	
	$\frac{\expect{\vecy}{\sum_{(u,v)\in M(\vecy)} w_{uv}}}{\sum_{(u,v)\in M^*} w_{uv}}
	= \frac{\expect{\vecy}{\sum_{u \in V} \alpha_u}}{\sum_{(u,v)\in M^*} w_{uv}}
	\geq \frac{\expect{\vecy}{\sum_{(u,v)\in M^*} (\alpha_u + \alpha_v)}}{\sum_{(u,v)\in M^*} w_{uv}}
	\geq \frac{\sum_{(u,v)\in M^*}r\cdot w_{uv}}{\sum_{(u,v)\in M^*} w_{uv}} = r$.
\end{proof}

Next, we define the notion of active and passive vertices for edge-weighted general graphs and provide the monotonicity on the ranks (the proof can be found in Appendix~\ref{appendix:prelim}).

\begin{definition}[Active, Passive]
	For every edge $(u,v)$ added to the matching by \textsf{Perturbed Greedy} with $y_u < y_v$, we say that $u$ is active and $v$ is passive.
\end{definition}

\begin{lemma}[Monotonicity]\label{lemma:monotonicity}
	Consider any matching $M(\vec{y})$ and any vertex $v$.
	If $v$ is passive or unmatched, then there exists some threshold $y \leq y_v$ such that if we reset $y_v$ to be any value in $(y,1)$, the matching remains unchanged; if we reset $y_v$ to be any value in $(0, y)$, then $v$ becomes active.
	Moreover, the weight of the edge $v$ actively matches is non-increasing w.r.t. $y_v\in (0,y)$.
\end{lemma}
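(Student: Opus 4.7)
The plan is to exploit the structural dependence of \textsf{Perturbed Greedy}'s probe order on $y_v$: the perturbed weight of any edge $(u,v)$ equals $(1-g(\min\{y_u,y_v\})) w_{uv}$, which is weakly non-decreasing as $y_v$ decreases, while the perturbed weights of all edges not incident to $v$ are independent of $y_v$. Hence, as $y_v$ decreases, edges incident to $v$ can only move earlier in the probe order, and the relative order of all edges not touching $v$ is preserved. I will organize the proof around this observation.

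For the direction $y_v' \geq y_v$, the plan is to verify the matching is unchanged. If $v$ is unmatched in $M(\vecy)$, then no $v$-incident edge was matched, so pushing such edges later in the probe order has no effect. If $v$ is passively matched to $u$ with $y_u < y_v$, then $(u,v)$'s perturbed weight depends only on $y_u$ and remains fixed, while the remaining $v$-incident edges (which were unmatched) merely slide later and stay unmatched. A short induction on the probe order then confirms that every matched non-$v$ edge is processed at an identical vertex-state in both executions. For $y_v' < y_v$, I would define $y$ as the infimum of $y_v' \in [0, y_v]$ for which $M(y_v')$ still coincides with $M(\vecy)$; a symmetric argument shows the matching is preserved on $(y, y_v]$, and at $y_v' = y$ an edge $(u^*, v)$ with $y_{u^*} > y$ slides just early enough to match with $v$ still free, so $v$ transitions to being active.

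The main obstacle is the monotonicity claim for $y_v' \in (0, y)$. I plan to use an exchange argument by contradiction: take $y_v^{(a)} < y_v^{(b)} \leq y$ with $v$ actively matched to $u_a$ and $u_b$ respectively, and assume $w_{u_a v} < w_{u_b v}$. Since $v$ is active in both executions, $y_v^{(a)}, y_v^{(b)} < \min\{y_{u_a}, y_{u_b}\}$, so at $y_v^{(a)}$ the edge $(u_b, v)$ has perturbed weight $(1-g(y_v^{(a)})) w_{u_b v}$, strictly dominating that of $(u_a, v)$, and is probed first of the two. If $v$ and $u_b$ are both free at this probe, then $(u_b, v)$ matches, yielding the contradiction $v \to u_b$. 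Freeness of $v$ follows because $v$ ultimately matches $u_a$ in the $y_v^{(a)}$ execution, so no $v$-edge can match before $(u_b, v)$.

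Establishing freeness of $u_b$ is the technical heart. The edges incident to $u_b$ processed before $(u_b, v)$ in the $y_v^{(a)}$ execution form a subset of those processed before $(u_b, v)$ in the $y_v^{(b)}$ execution (since $(u_b, v)$ moved earlier), and each of these failed to match at $y_v^{(b)}$. For any such edge $(u_b, x)$ with $x \neq v$, I want to show the failure carries over. This is immediate when $x$ is not a neighbor of $v$, since $x$'s state at the probe of $(u_b, x)$ is identical in the two executions. The delicate case is when $x$ itself neighbors $v$, because the earlier arrival of $(x,v)$ in the $y_v^{(a)}$ execution could flip $x$'s status; however, if $(x,v)$ actually matched, then $v$ would be matched to $x$, contradicting $v \to u_a$. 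Formalizing these observations by induction on probe position, processing the finitely many crossings that occur as $y_v$ is slid continuously from $y_v^{(b)}$ down to $y_v^{(a)}$, will complete the monotonicity argument.
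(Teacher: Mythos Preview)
Your treatment of the invariance under increasing $y_v$ (when $v$ is passive or unmatched) and of the existence of the threshold $y$ matches the paper's. For the monotonicity on $(0,y)$, however, your reasoning for the freeness of $u_b$ has a gap. You assert that when $x$ is not a neighbor of $v$, ``$x$'s state at the probe of $(u_b,x)$ is identical in the two executions,'' as though this followed from non-adjacency; it does not, since changing $y_v$ can alter $x$'s state through a chain of intermediate vertices even when $(x,v)\notin E$. The observation you actually need---and it applies uniformly, whether or not $x$ is adjacent to $v$---is that in both executions $v$ is still free when $(u_b,x)$ is probed (this probe precedes $(u_b,v)$, which in turn precedes $v$'s match in either execution), so every $v$-probe up to that point has failed and the matching state on $V\setminus\{v\}$ is exactly that of the execution on $G-v$, which is independent of $y_v$. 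Once you see this, your case split on adjacency and the proposed continuous-sliding induction become unnecessary.

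The paper's argument is shorter because it exploits this structure directly rather than via contradiction. It observes that every edge in $E_v=\{(u,v):y_v<y_u\}$ has perturbed weight $(1-g(y_v))w_{uv}$, so all such edges scale by the \emph{same} factor as $y_v$ varies and their relative order never changes. If $v$ actively matches $u$ at some $y_v$ and we decrease $y_v$, then at the moment $(v,u)$ is probed either $v$ has already matched some $x$ with $(x,v)\in E_v$ and hence $w_{xv}>w_{uv}$, or $v$ is still free and matches $u$ (with $u$ free because strictly fewer non-$v$ edges have been processed than at the original $y_v$, where $u$ was free). Either way the matched weight does not decrease. This uniform-scaling observation is the key simplification your proposal misses.
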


\subsection{Unweighted Graphs.}
In this subsection, we focus on unweighted graphs and provide the standard \emph{alternating path} property.
An analog for edge-weighted graphs will be provided in Section~\ref{sec:weighted_general}.

As we have discussed in Section~\ref{sec:intro}, we can equivalently interpret the algorithm as follows.

We call the rank $y_u$ of vertex $u$ as the \emph{decision time} of $u$, and let vertices act in ascending order of their decision times.
If a vertex is not matched yet at its decision time, then it will match the unmatched neighbor according to its own preference order.
We also call $\min\{y_u,y_v\}$ the decision time of edge $(u,v)\in E$, which is the only possible time the edge is included in the matching.

\begin{algorithm}[H]
	\caption{\textsf{Random Decision Order}}
	\label{alg:rog_time}
	\begin{algorithmic}
		\State Each vertex $u$ independently draws a rank $y_u \in [0,1]$ uniformly at random.
		\State At decision time $y_u$ of $u$, if $u$ is unmatched, $u$ chooses its favourite unmatched neighbor. 
	\end{algorithmic}
\end{algorithm}

Note that for each vertex, the preference order of its neighbors is arbitrary but fixed for each vertex. In other words, it does not depend on the ranks $\vec{y}$ of vertices.

It is easy to see that for the unweighted case, if a vertex $v$ is passively matched by $u$, then the threshold of $v$ in Lemma~\ref{lemma:monotonicity} coincides with the rank $y_u$ of $u$. Thus we have the following.

\begin{corollary}[Unweighted Monotonicity]\label{corollary:monotonicity_unweighted}
	Consider any matching $M(\vec{y})$ and any vertex $v$.
	If $v$ is passively matched by $u$, then when $v$ has decision time in $(y_u,1)$, the matching remains unchanged; when $v$ has decision time in $(0, y_u)$, then $v$ becomes active.
	If $v$ is active, then the set of unmatched neighbors $v$ sees at its decision time grows when $y_v$ decreases.
\end{corollary}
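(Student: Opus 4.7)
The plan is to specialize Lemma~\ref{lemma:monotonicity} to the unweighted setting by showing that the threshold $y$ promised there can be identified explicitly with $y_u$, and then to establish the monotonicity of unmatched neighbors via a direct coupling argument. The key tool throughout is the following observation: if we change only $y_v$ while keeping all other ranks fixed, then by induction on decision time the set of actions taken before time $\min(y_v, y_v')$ is identical in the two scenarios. This uses the fact that $v$ itself has not acted before this time in either scenario, and $v$'s status as an unmatched potential partner depends only on the actions of strictly earlier vertices, which inductively coincide.

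For the first statement, assume $v$ is passively matched by $u$ in $M(\vec{y})$, so $y_u < y_v$. Reset $y_v$ to some $y_v' \in (y_u, 1)$. The coupling observation shows that both executions agree up to time $y_u$: every earlier-acting vertex sees the same unmatched neighbors in both scenarios, because $v$ has not been passively matched before $y_u$ in the original (as $u$ actively matches $v$ at $y_u$). At time $y_u$, $u$ still sees $v$ as unmatched and actively matches $v$ by its fixed preference order, after which $v$'s new rank is irrelevant. Conversely, for $y_v' \in (0, y_u)$, the coupling shows that at time $y_v'$ in the new scenario, $v$ is still unmatched and has $u$ as an unmatched neighbor (since $y_u > y_v'$ and $u$ has not been passively matched before $y_u$ in the original, hence not in the new scenario either), so $v$ actively matches some neighbor.

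For the second statement, suppose $v$ is active with rank $y_v^{(1)}$, and let $y_v^{(2)} < y_v^{(1)}$. Let $M_0$ be the common partial matching obtained in both scenarios after processing all vertices with decision time strictly less than $y_v^{(2)}$. In the new scenario, $v$ acts on top of $M_0$ at time $y_v^{(2)}$, so its set of unmatched neighbors is determined by $M_0$. In the original scenario, the matching at the moment $v$ acts is some $M_1 \supseteq M_0$, obtained by further processing vertices with decision time in $[y_v^{(2)}, y_v^{(1)})$. Any vertex $w$ unmatched in $M_1$ is therefore also unmatched in $M_0$, which shows that every unmatched neighbor of $v$ at decision time $y_v^{(1)}$ remains an unmatched neighbor of $v$ at decision time $y_v^{(2)}$.

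The main subtlety is verifying the coupling observation rigorously, which requires an induction on decision time ruling out the possibility that $v$'s altered rank indirectly changes some earlier vertex's decision. This closes cleanly because $v$ being unmatched and not-yet-acted before time $\min(y_v, y_v')$ depends only on the behavior of strictly earlier vertices, which by the induction hypothesis act identically in the two scenarios.
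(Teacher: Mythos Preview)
Your proof is correct. The paper derives this corollary in one line from Lemma~\ref{lemma:monotonicity}, simply observing that in the unweighted setting the threshold $y$ must coincide with $y_u$; you instead give a direct coupling argument by induction on decision time, which is essentially an unwinding of the proof of Lemma~\ref{lemma:monotonicity} specialized to this case. One small advantage of your route is that you explicitly justify the second statement about the growing set of unmatched neighbors, which the paper asserts without proof (it does not follow immediately from the ``weight is non-increasing'' clause of Lemma~\ref{lemma:monotonicity} in the unweighted case); your containment argument $M_1 \supseteq M_0$ handles this cleanly.
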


The following important property characterizes the effect of the removal of a single vertex.
For continuity of presentation we defer the proof of the lemma to Appendix~\ref{appendix:prelim}.

\begin{lemma}[Alternating Path]\label{lemma:alternating}
	If $u$ is matched in $M(\vecy)$, then the symmetric difference between $M(\vecy)$ and $M_{u}(\vecy)$ is an alternating path $(u_0=u,u_1,u_2,\ldots)$ such that for all even $i$, $(u_i,u_{i+1}) \in M(\vecy)$ and for all odd $i$, $(u_i, u_{i+1}) \in M_u(\vecy)$.
	Moreover, the decision times of edges along the path are non-decreasing.
	Consequently, vertices $u_1,u_3,\ldots$ are matched no later in $M(\vecy)$ than in $M_{u}(\vecy)$.
\end{lemma}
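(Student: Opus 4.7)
The plan is to run Algorithm~\ref{alg:rog_time} on $G$ and on $G - \{u\}$ in parallel, processing decision events in chronological order, and to maintain an inductive invariant on how the symmetric difference between the two partial matchings evolves. Let $M^t(\vecy)$ and $M^t_u(\vecy)$ denote the partial matchings just after time $t$ in the two executions. I will prove by induction over events that $M^t(\vecy) \triangle M^t_u(\vecy)$ is either empty or an alternating path anchored at $u$ of exactly the claimed form, with decision times of its edges non-decreasing along the path from $u$.

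For the base case, the symmetric difference is empty before any event. The first divergence occurs either (a) at time $y_u$ when $u$ is active in $M(\vecy)$, in which case $u$ picks its favorite available neighbor $u_1$ in $M$ while nothing happens in $M_u$, yielding the one-edge path $(u, u_1)$; or (b) at time $y_{u_1}$ when $u$ is passive, in which case $u_1$ is unmatched in both executions at this time, picks $u$ in $M$, but picks its next available favorite $u_2$ in $M_u$ (or no one, in which case the path is $(u, u_1)$), yielding the alternating path $(u, u_1, u_2)$. That no earlier event causes a divergence is because otherwise some earlier vertex would actively match $u$ in $M$, contradicting $u$'s partner being $u_1$. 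In either case the new edges share a common decision time, so monotonicity holds trivially.

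For the inductive step, suppose the invariant holds just before an event at time $y_v$, with current vertex sequence $(u_0 = u, u_1, \ldots, u_k)$. The key structural observation is that the only path vertex whose matched status differs between the two executions is the tip $u_k$: the interior vertices $u_1, \ldots, u_{k-1}$ are matched in both (to different partners), while $u$ is matched in $M$ and absent in $M_u$. Hence for any $v$, its sets of available neighbors in the two executions coincide except possibly for the single vertex $u_k$, which is available in exactly one of them. I split into three subcases. (i) If $v$ is matched in both, or $v$ is unmatched in both and its favorite available neighbor coincides across the two executions, the symmetric difference is unchanged. (ii) If $v = u_k$, then $v$ acts only in the execution where it is unmatched, adding at most one new edge attached to $u_k$ and extending the path by at most one vertex. (iii) If $v$ is off the path, unmatched in both, and its favorite in the execution where $u_k$ is available is exactly $u_k$, then $v$ matches $u_k$ in that execution and matches some other neighbor (or no one) in the other execution, extending the path by one or two vertices. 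In every case, the newly added edges are introduced at time $y_v$, which is at least the decision time of every edge already on the path, so the non-decreasing property is preserved and the symmetric difference remains a single alternating path anchored at $u$.

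The concluding claim that $u_1, u_3, \ldots$ are matched no later in $M(\vecy)$ than in $M_u(\vecy)$ then follows immediately: each odd-indexed vertex $u_{2i+1}$ is matched in $M$ via the edge $(u_{2i}, u_{2i+1})$ and in $M_u$ via $(u_{2i+1}, u_{2i+2})$, and by the monotonicity of decision times the former has decision time no larger than the latter. The main obstacle is subcase (iii) of the induction, where one must verify that $v$'s fixed preference order translates consistently between the two executions; this hinges on the fact that the two available sets for $v$ differ by exactly $\{u_k\}$, together with the assumption that individual preference orders are fixed a priori and independent of $\vecy$, so that $v$'s top choice in one execution is determined by the same ordering restricted to a set that differs by a single element.
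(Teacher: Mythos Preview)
Your proof is correct and complete. The paper takes a different route: rather than running the two executions in parallel and inducting over events, it argues recursively by peeling off the first pair of edges. Specifically, the paper identifies $u_1$ (the partner of $u$ in $M(\vecy)$) and, if $u_1$ is matched in $M_u(\vecy)$, its partner $u_2$ there; it observes directly that the decision time of $(u_1,u_2)$ is at least that of $(u,u_1)$ (else $u_1$ would already be matched to $u_2$ in $M(\vecy)$), and then invokes the inductive hypothesis on a smaller residual instance to obtain the remainder of the alternating path starting from $u_2$.

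Your temporal-induction argument is more explicit and self-contained: the invariant that the two partial matchings disagree only at the single dangling tip $u_k$ is exactly the structural fact driving both proofs, but you make it visible at every step rather than hiding it inside a recursive call. The paper's approach is terser but requires the reader to verify that the residual problem after peeling off $(u,u_1)$ really is an instance of the same lemma on fewer vertices. Your case analysis (i)--(iii) at each event is the standard parallel-run technique and arguably easier to verify, at the cost of a longer write-up.
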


Given the above lemma, we show the following useful property, which roughly says that any vertex $v$ can not affect the matching status of another vertex $u$ if $u$ is matched before $y_v$.

\begin{corollary}\label{corollary:limit_of_affect}
	Suppose at time $y$, vertex $u$ is matched while vertex $v$ is still unmatched.
	Then resetting $y_v$ to be any value in $(y,1)$ does not change the matching status of $u$.
\end{corollary}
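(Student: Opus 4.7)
The plan is to run \rdt on $\vecy$ and on the perturbed rank vector $\vecy'$ (where only $y_v$ is reset to some $y_v' \in (y,1)$) in parallel, and to argue that both runs produce the same partial matching at time $y$. Once this is established, the conclusion follows immediately: by hypothesis $u$ is matched by time $y$ in the run on $\vecy$, hence it is matched by time $y$ in the run on $\vecy'$ as well; and since \rdt never unmatches a vertex, $u$ remains matched in $M(\vecy')$.

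The key observation is that if the original $y_v \leq y$, then $v$'s decision event at time $y_v$ is necessarily a \emph{no-op}. Indeed, $v$ is still unmatched at time $y \ge y_v$ by hypothesis, and the algorithm only grows the matching, so $v$ is unmatched from $y_v$ onwards. But a vertex that reaches its decision time unmatched can only remain unmatched thereafter if it had no unmatched neighbor at that very moment---otherwise it would have been matched to one. Therefore, at time $y_v$ every neighbor of $v$ is already matched, and $v$'s action changes nothing. In the remaining case $y_v > y$, $v$ simply does not act before $y$ in either run.

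Given this, I would proceed by induction on the decision events with time at most $y$ (sorted in ascending order, with $v$'s event in $\vecy$ simply omitted since it has no effect) to show that the matching after each event coincides in the two runs. Each $w \neq v$ has the same decision time in both runs, and by the inductive hypothesis it sees the same matched/unmatched state of its neighbors at $y_w$; in particular $v$ is unmatched in both, so $w$ picks the same favorite unmatched neighbor according to its fixed preference order. The induction then goes through immediately. The only real subtlety is the bookkeeping of coupling the two event streams by treating the no-op $v$-event in $\vecy$ as invisible, but this is routine and I do not expect any substantial obstacle.
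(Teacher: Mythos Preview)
Your proof is correct and takes a genuinely different route from the paper. The paper argues via removal and reinsertion: it first observes that since $v$ is unmatched when $u$ gets matched, removing $v$ leaves $u$'s matching status unchanged; it then reinserts $v$ at the new rank in $(y,1)$, notes that $v$ can only be matched at a time later than $y$, and invokes the alternating path lemma (Lemma~\ref{lemma:alternating}) to conclude that the path triggered by this insertion consists only of edges with decision time larger than $y$ and therefore cannot touch $u$. Your approach instead couples the two runs on $\vecy$ and $\vecy'$ directly and shows by induction over the decision events up to time $y$ that the partial matchings coincide, using only the elementary observation that $v$'s decision event (if it occurs by time $y$) is a no-op. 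Your argument is self-contained, avoids the alternating path machinery entirely, and in fact proves the slightly stronger statement that the two matchings agree on \emph{all} vertices up to time $y$; the paper's approach, by contrast, reuses Lemma~\ref{lemma:alternating} as a black box and so fits more modularly into the paper's toolkit.
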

\begin{proof}
	Since $v$ is unmatched when $u$ gets matched, if $v$ is removed, then the matching status of $u$ is not affected.
	Now suppose that we insert $v$ with $y_v \in (y,1)$, then if $v$ is matched, then it must be matched at time later than $y$ (the time when $u$ gets matched).
	In other words, the insertion of $v$ triggers a (possibly empty) alternating path in which all edges (and therefore vertices) have decision time at most $y$.
	Hence, $u$ is not included in the alternating path and its matching status is not affected.
\end{proof}

For bipartite graphs, Lemma~\ref{lemma:alternating} also implies the following nice property.

\begin{corollary}\label{corollary:insertion_bipartite}
	For bipartite graphs, if $u$ is matched in $M_{v}(\vecy)$ and $v$ is a neighbor of $u$, then $u$ is also matched in $M(\vecy)$. Moreover, the time $u$ is matched in $M(\vecy)$ is no later than in $M_{v}(\vecy)$.
\end{corollary}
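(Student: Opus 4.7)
The plan is to apply Lemma~\ref{lemma:alternating} with the removed vertex playing the role of $v$. This yields an alternating path $(v_0 = v, v_1, v_2, \ldots)$ in $M(\vecy) \triangle M_v(\vecy)$ such that for every even $i$, the edge $(v_i, v_{i+1})$ lies in $M(\vecy)$ and for every odd $i$, $(v_i, v_{i+1})$ lies in $M_v(\vecy)$, and such that the decision times of the edges are non-decreasing along the path.

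I would then split into cases based on whether $u$ appears on this alternating path. If $u$ is not on the path, then its matching edge is the same in $M(\vecy)$ and $M_v(\vecy)$, giving the conclusion immediately.

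If $u$ does lie on the path, write $u = v_i$ for some $i \geq 1$. The key step is to invoke bipartiteness: consecutive vertices on the path are adjacent and therefore alternate between the two sides, so $v_i$ lies on the same side as $v_0 = v$ precisely when $i$ is even. But the hypothesis $(u,v) \in E$ in a bipartite graph forces $u$ to lie on the opposite side of $v$, so $i$ must be odd. Then $u$ is matched in $M(\vecy)$ via $(v_{i-1}, v_i)$ and in $M_v(\vecy)$ via $(v_i, v_{i+1})$; in particular $u$ is matched in $M(\vecy)$. The non-decreasing decision times along the path, together with the fact that an edge $(a,b)$ is included in the matching at time $\min\{y_a, y_b\}$, yield that $u$'s match time in $M(\vecy)$ is no later than in $M_v(\vecy)$, matching the ``moreover'' clause of Lemma~\ref{lemma:alternating}.

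I do not expect any real obstacle: once the parity argument is in hand, the corollary follows directly from Lemma~\ref{lemma:alternating}. The only subtlety worth flagging is that, without bipartiteness, $u$ could occur at an even index $i \geq 2$ of the alternating path, in which case $u$'s match in $M(\vecy)$ would use an edge with strictly larger decision time than its match in $M_v(\vecy)$; it is precisely the bipartite parity argument that rules out this scenario and thereby yields both the ``matched'' and the ``no later'' parts of the statement.
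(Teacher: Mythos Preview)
Your proposal is correct and follows essentially the same approach as the paper: both apply Lemma~\ref{lemma:alternating} to the symmetric difference $M(\vecy)\triangle M_v(\vecy)$, then use bipartiteness to force $u$ to occupy an odd index on the alternating path, from which the ``matched'' and ``no later'' conclusions follow directly. The only minor point you might make explicit is that Lemma~\ref{lemma:alternating} is stated for a vertex that is matched in $M(\vecy)$; if $v$ is unmatched there, the symmetric difference is empty and the conclusion is immediate, which your ``$u$ not on the path'' case already covers.
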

\begin{proof}
	By Lemma~\ref{lemma:alternating}, inserting $v$ (at any rank) creates a (possibly empty) alternating path $(v,v_1,v_2,\ldots)$.
	As the graph is bipartite, if $u$ appears in the path, then it must be one of $\{v_1,v_3,\ldots\}$, which is matched no later than in $M_{v}(\vecy)$.
	Otherwise its matching status is not affected.
\end{proof}

\section{Unweighted Bipartite Graphs}\label{sec:unweighted_bipartite}

Recall that for unweighted case, vertices act in ascending order of their decision times, and each unmatched vertex chooses its favorite unmatched neighbor.
We first give the gain sharing rule for every matched pair $(u,v)\in M(\vecy)$.

\paragraph{Gain Sharing.} Whenever $u$ actively chooses $v$ at time $y_u$, let $\alpha_u=g(y_u)$ and $\alpha_v=1-g(y_u)$, where $g$ is a non-decreasing function to be fixed later.

\paragraph{General Framework.}
Recall that to prove an approximation ratio of $r$, it suffices to show that $\expect{\vecy}{\alpha_u + \alpha_v} \geq r$ for every $(u,v)\in M^*$.
Fix such a pair $(u,v)$, and fix the decision times of all vertices other than $u,v$ arbitrarily.
For ease of notation, we use $M(y_u,y_v)$ to denote the matching produced by our algorithm when $u,v$'s decision times are $y_u$ and $y_v$, respectively.
In the following, we will give a lower bound $f(y_u,y_v)$ of $\alpha_u + \alpha_v$ for each $M(y_u, y_v)$, and show that there exists an appropriate function $g$ such that $\int_0^1 \int_0^1 f(y_u, y_v) dy_u dy_v \geq 0.639$, finishing the proof of Theorem~\ref{th:unweighted_bipartite}.

\medskip

Consider matching $M(1,1)$, i.e., when $u,v$ have the latest decision times compared with other vertices. 
Depending on whether $u,v$ are matched together, we divide our analysis into two parts.

\subsection{Symmetric Case: $(u,v)\in M(1,1)$}

In this case, $u$ and $v$ are not chosen by any other vertex. At time $1$, $u$ and $v$ are matched together.

\begin{lemma}
	If $(u,v)\in M(1,1)$, then when $y_u < y_v$, $u$ is active in $M(y_u,y_v)$ and $v$ is unmatched before time $y_u$; when $y_v < y_u$, $v$ is active in $M(y_u,y_v)$, and $u$ is unmatched before time $y_v$.
\end{lemma}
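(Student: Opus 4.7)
By symmetry it suffices to handle the case $y_u < y_v$. My plan is to argue that the execution of the algorithm on the decision times $(y_u, y_v)$ proceeds identically to that on $(1,1)$ throughout the interval $[0, y_u)$, and then read off the conclusion at time $y_u$. This is the natural analogue, restricted to two vertices, of the coupling step used in earlier work on \ranking.

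First I would record the structure of $M(1,1)$. Since $u$ and $v$ both have the latest decision time and are matched to each other, in the run producing $M(1,1)$ they remain unmatched throughout $[0,1)$ and, crucially, no other vertex $w$ ever actively chooses $u$ or $v$; otherwise $u$ or $v$ would already be matched when its own decision time $1$ arrives, contradicting $(u,v) \in M(1,1)$. Next, I would couple the two runs by induction on the sequence of decision events occurring before time $y_u$. Let $w_1, w_2, \ldots$ be the vertices other than $u,v$ with decision time less than $y_u$, listed in increasing order. Assuming the two runs agree on the matching just after processing $w_{i-1}$, and that $u, v$ are unmatched in both, vertex $w_i$ faces the same set of unmatched neighbors in both runs (the state of every other vertex agrees by hypothesis, and $u,v$ have not yet reached their decision times so they are still available to be chosen). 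Since $w_i$'s preference order is fixed, $w_i$ makes the same decision, and by the first paragraph that decision is not $u$ or $v$; hence $u$ and $v$ remain unmatched in the $(y_u, y_v)$-run as well, completing the induction.

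Finally, at time $y_u$ in the $(y_u,y_v)$-run, the inductive invariant gives that $v$ is unmatched (which is the second half of the claim) and $u$ is unmatched as well. Because $v$ is an unmatched neighbor of $u$, the set of available neighbors that $u$ inspects at its decision time is nonempty, so $u$ picks its most preferred one and becomes active, proving the first half. I expect the only real care needed is the bookkeeping of the coupling invariant "no other vertex matches $u$ or $v$", which is exactly what the hypothesis $(u,v)\in M(1,1)$ supplies; coincidences in decision times occur with measure zero and are handled by the standard tie-breaking convention.
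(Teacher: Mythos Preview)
Your argument is correct. The paper's proof is more compact and proceeds by contradiction via the monotonicity lemma: if some $x\in\{u,v\}$ were matched at a time $t<y_u$ in $M(y_u,y_v)$, then $x$ would be passive (chosen by a vertex $z$ with $y_z=t$), and since raising the rank of a passive vertex leaves the matching unchanged, one would still have $(x,z)\in M(1,1)$, contradicting the hypothesis. Your coupling argument instead establishes directly that the executions on $(y_u,y_v)$ and on $(1,1)$ coincide throughout $[0,y_u)$, without invoking monotonicity as a black box. The paper's route is shorter because it reuses machinery already in place; yours is more elementary and makes explicit why changing $y_u,y_v$ above $t$ cannot affect the run before time $t$---which is, of course, precisely the content that the monotonicity lemma encapsulates.
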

\begin{proof}
	Suppose $y_u < y_v$.
	Consider the first time $t$ when one of $u, v$ (say, $x\in \{u,v\}$) is matched.
	Obviously, $t \le y_u$.
	If $t < y_u$, then $x$ must be chosen by a vertex $z$ at its decision time $y_z=t$.
	By Lemma~\ref{lemma:monotonicity}, we have $(x,z)\in M(1,1)$, which violates the assumption of the lemma.
	Thus $t = y_u$, i.e, $u$ is active, and $v$ is unmatched before time $y_u$.
	The case when $y_v < y_u$ is similar.
\end{proof}

We decrease $y_u$ gradually from $1$ to $0$ and study $M(y_u,1)$.
By Corollary~\ref{corollary:monotonicity_unweighted}, the set of unmatched neighbors of $u$ at time $y_u$ grows when $y_u$ decreases.
Hence there exists a transition time $\theta$ such that $v$ is no longer $u$'s favorite vertex when $y_u < \theta$.
In other words, when $y_u > \theta$, $u$ actively matches $v$ in $M(y_u,1)$; when $y_u < \theta$, $u$ actively matches a vertex other than $v$ in $M(y_u,1)$. 
Moreover, by Corollary~\ref{corollary:limit_of_affect}, the matching status of $u$ in $M(y_u,y_v)$ is the same as in $M(y_u,1)$, as long as $y_v > y_u$.
Thus we have (refer to Figure~\ref{fig:bip1})
\begin{compactitem}
	\item $\alpha_u + \alpha_v = 1$ when $y_u > \theta$ and $y_v > y_u$;
	\item $\alpha_u = g(y_u)$ when $y_u < \theta$ and $y_v > y_u$.
\end{compactitem}

Similarly, we decrease $y_v$ gradually from $1$ to $0$ and study $M(1,y_v)$.
Let $\tau$ be the transition time such that $u$ is $v$'s favourite neighbor if and only if $y_v > \tau$.
Then we have (refer to Figure~\ref{fig:bip1})
\begin{compactitem}
	\item $\alpha_u + \alpha_v = 1$ when $y_v > \tau$ and $y_u > y_v$;
	\item $\alpha_v = g(y_v)$ when $y_v < \tau$ and $y_u > y_v$.
\end{compactitem}
\vspace*{-10pt}
\begin{figure}[H]
	\centering
	\begin{subfigure}{.4\textwidth}
		\centering
		\includegraphics[width=0.75\linewidth]{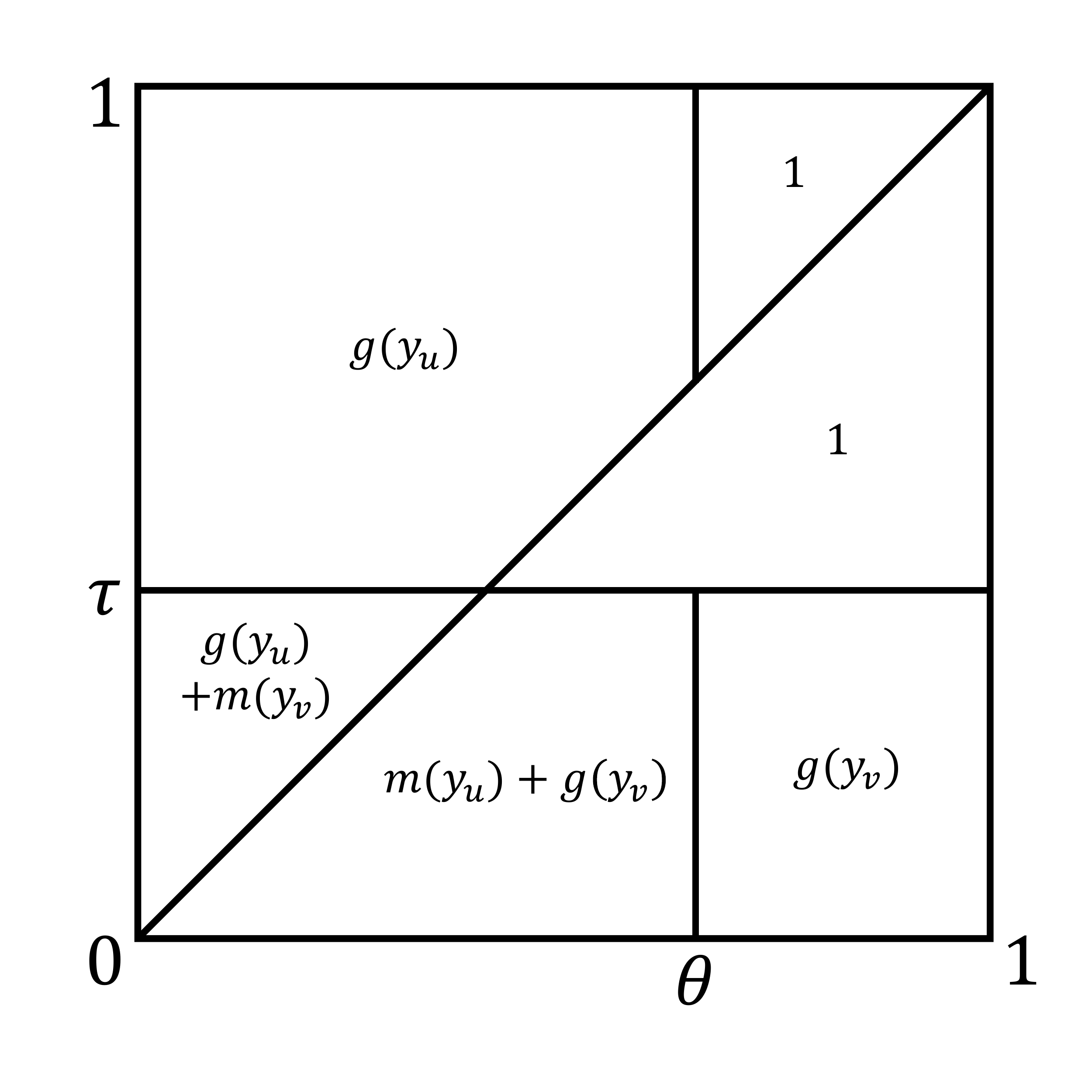}
		\vspace*{-5pt}
		\caption{Symmetric case: $u,v$ match each other}
		\label{fig:bip1}
	\end{subfigure}
	\begin{subfigure}{.4\textwidth}
		\centering
		\includegraphics[width=0.75\linewidth]{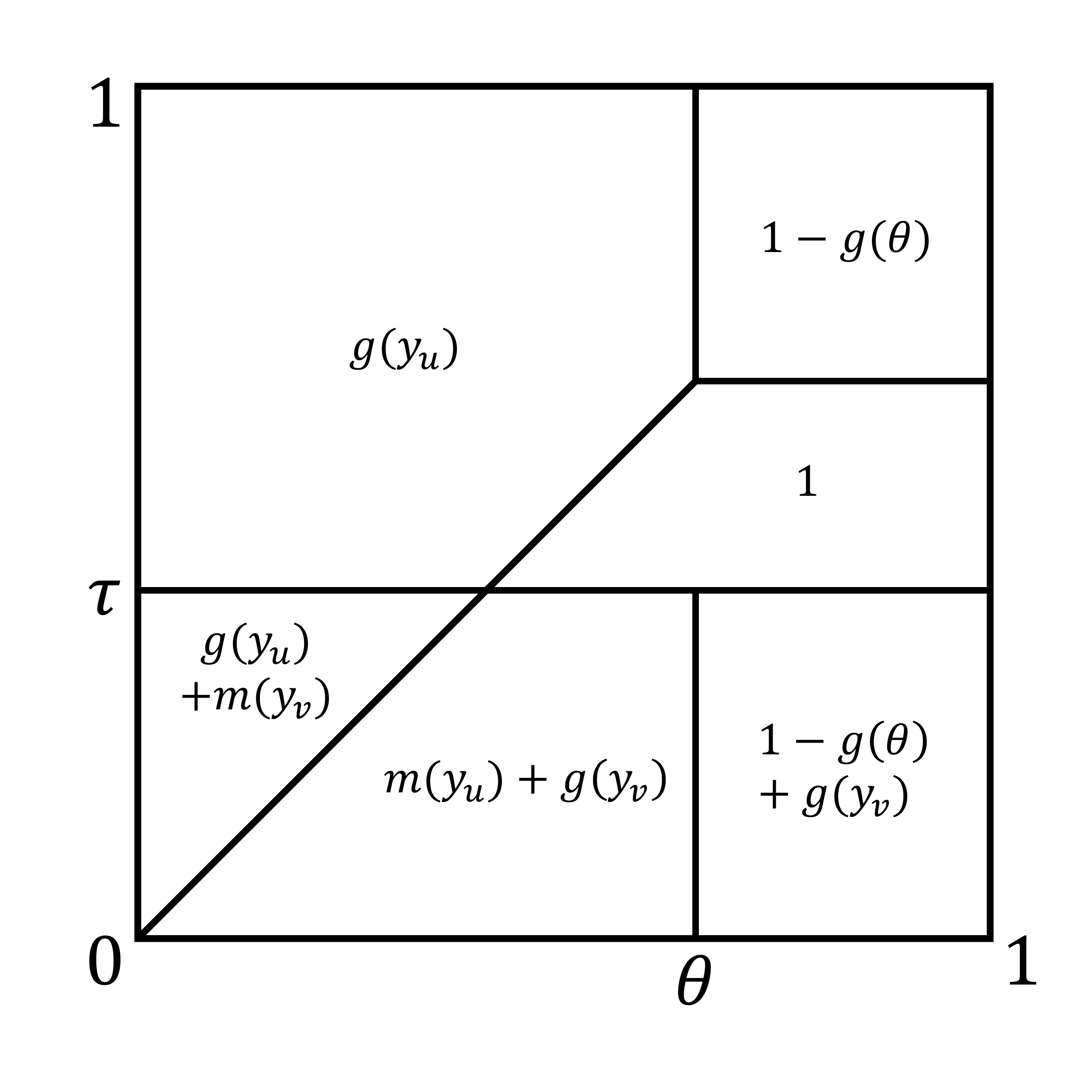}
		\vspace*{-5pt}
		\caption{Asymmetric case: $u$ is chosen at $\theta$}
		\label{fig:bip2}
	\end{subfigure}
	\caption{Unweighted bipartite graphs: the horizontal and vertical axes correspond to $y_u,y_v$ respectively. For each region, the formula written serves as a lower bound of $\alpha_u+\alpha_v$.}
	\label{fig:test}
\end{figure}

We refer to these gains as the basic gain of our analysis as they come immediately after we properly define $\theta,\tau$. Next, we study the matching status of the vertex with later decision time and achieve some extra gains, where we crucially use the bipartiteness of the graph.

\begin{lemma}[Extra Gain]
	For all $y_u < \theta$ and $y_v < \tau$, both $u$ and $v$ are matched in $M(y_u, y_v)$.
\end{lemma}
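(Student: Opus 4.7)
By the symmetry of the hypothesis in $u$ and $v$, it suffices to treat the case $y_u < y_v$, since the case $y_v < y_u$ follows by interchanging the roles of $u$ and $v$. I would prove the two conclusions (``$u$ is matched'' and ``$v$ is matched'') separately.

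To show $u$ is matched, I would argue by a straightforward induction that the matchings $M(y_u, y_v)$ and $M(y_u, 1)$ coincide at every time strictly before $y_u$. The inductive invariant is that the set of matched pairs is identical in the two matchings; this is maintained because $v$---the only vertex whose rank differs between the two---has not yet acted in either, and its passive-match status is determined by the identical previous actions. Consequently, at time $y_u$ the vertex $u$ sees the same set of unmatched neighbors in $M(y_u, y_v)$ as in $M(y_u, 1)$, and by the definition of $\theta$ combined with $y_u < \theta$, it actively matches its favorite $u' \neq v$ in $M(y_u, y_v)$.

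To show $v$ is matched, I would argue by contradiction. Assume $v$ is unmatched in $M(y_u, y_v)$. Then Lemma~\ref{lemma:monotonicity} applied to $v$ yields $M(y_u, 1) = M(y_u, y_v)$, and the contrapositive of Corollary~\ref{corollary:insertion_bipartite} (applied to inserting $u$, a neighbor of $v$, into $M_u(y_v)$) yields that $v$ is also unmatched in $M_u(y_v)$. Now invoke $y_v < \tau$: vertex $v$ actively matches some $v' \neq u$ at time $y_v$ in $M(1, y_v)$. Applying Lemma~\ref{lemma:alternating} to the removal of $u$ from $M(1, y_v)$, we obtain an alternating path $P=(u_0 = u, u_1, u_2,\dots, u_{2k+1} = v)$ with non-decreasing decision times, whose final edge $(u_{2k}, u_{2k+1}) = (v', v)$ has decision time $y_v$. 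Bipartiteness places $v$ at an odd index of $P$, and $v$'s unmatched status in $M_u(y_v)$ forces $v$ to be the terminal vertex of $P$ (at any earlier odd position $v$ would be matched in $M_u(y_v)$).

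The contradiction is then obtained by ``replaying'' the path $P$ inside $M(y_u, y_v)$. A short argument first gives $y_{u_1} > y_u$: otherwise, by the coincidence argument used for $u$, the vertex $u_1$ would actively match $u$ at $y_{u_1} < y_u$ in $M(y_u, y_v)$, making $u$ passive and contradicting that $u$ actively chooses $u' \neq u_1$ at time $y_u$. At time $y_{u_1}$ in $M(y_u, y_v)$, vertex $u$ is already matched to $u' \neq u_1$ and hence unavailable to $u_1$; one then argues inductively along $P$---using the structure of $M_u(y_v)$, in which $u_i$ actively matches $u_{i+1}$ for each odd $i$---that each odd-indexed $u_i$ of $P$ also actively matches $u_{i+1}$ in $M(y_u, y_v)$, ultimately placing $(v', v)$ in $M(y_u, y_v)$ and contradicting $v$'s unmatched status. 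The main obstacle is making this replay rigorous: one must control the cascading triggered by the inserted edge $(u, u')$, ensuring that it does not disturb the state along $P$ before each $u_i$ makes its decision. This relies crucially on the bipartite structure (which is also what enables Corollary~\ref{corollary:insertion_bipartite}), and it is consistent with the paper that the analog breaks for general graphs, motivating the compensation/victim machinery introduced in Section~\ref{sec:unweighted_gen}.
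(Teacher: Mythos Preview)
Your argument that $u$ is matched (in the case $y_u<y_v$) is fine: the coincidence of $M(y_u,y_v)$ and $M(y_u,1)$ up to time $y_u$ is exactly the content of the paper's observation that removing $v$ does not disturb $u$'s choice when $y_u<\theta$.

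The problem is in your argument that $v$ is matched. You correctly derive, via the contrapositive of Corollary~\ref{corollary:insertion_bipartite}, that if $v$ were unmatched in $M(y_u,y_v)$ then $v$ would be unmatched in $M_u(y_v)$. You also correctly observe that $y_v<\tau$ forces $v$ to actively match some $v'\neq u$ in $M(1,y_v)$. But at this point you already have the contradiction, and you miss it: in $M(1,y_v)$ the vertex $u$ has decision time $1$, and by the symmetric-case lemma $u$ is unmatched before time $y_v$. Hence removing $u$ cannot alter anything up to and including time $y_v$, so $v$ still actively matches $v'$ in $M_u(y_v)$. This directly contradicts ``$v$ unmatched in $M_u(y_v)$''. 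There is no need for the alternating-path replay, and in particular no need to ``control the cascading'' of the edge $(u,u')$, which you yourself flag as an unresolved obstacle.

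The paper's proof does exactly this short argument, phrased in the forward direction rather than by contradiction: since $y_v<\tau$, $v$ is matched in $M_u(\cdot)$; Corollary~\ref{corollary:insertion_bipartite} then gives $v$ matched in $M(y_u,y_v)$ for every $y_u$. Symmetrically, $y_u<\theta$ gives $u$ matched in $M_v(\cdot)$, hence in $M(y_u,y_v)$ for every $y_v$. Two applications of Corollary~\ref{corollary:insertion_bipartite} finish the lemma in three lines; your detour through Lemma~\ref{lemma:alternating} and the replay along $P$ is unnecessary and, as written, incomplete.
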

\begin{proof}
	By definition, when $y_u < \theta$, $u$ actively matches a vertex other than $v$ in $M(y_u, 1)$.
	Thus removing $v$ does not change the matching status of $u$. In other words, $u$ is matched in $M_v(y_u,y_v)$.
	By Corollary~\ref{corollary:insertion_bipartite}, $u$ remains matched in $M(y_u,y_v)$.
	Similarly, we have $v$ is matched in $M(y_u,y_v)$ for all $y_v <\tau$, which finishes the proof.
\end{proof}

Let $m(y) \eqdef \min\{ g(y), 1-g(y) \}$. It is easy to see that whenever $u$ is matched (actively or passively), $\alpha_u \geq m(y_u)$.
In summary, we have the following lower bound (refer to Figure~\ref{fig:bip1}).
\begin{equation} \label{eq:uw_bip_uv}
	\begin{split}
		\expect{y_u,y_v}{\alpha_u + \alpha_v} \geq & \int_0^\theta (1-y_u)g(y_u)dy_u + \frac{1}{2}(1-\theta)^2 + \int_0^\tau (1-y_v)g(y_v) dy_v + \frac{1}{2}(1-\tau)^2 \\
		&  +\int_0^\theta \min\{y_u,\tau\} m(y_u) dy_u + \int_0^\tau \min\{y_v,\theta\} m(y_v) dy_v .
	\end{split}
\end{equation}

\subsection{Asymmetric Case: $(u,v)\notin M(1,1)$}

In this case, at least one of $u,v$ is matched before time $1$.
Without loss of generality, suppose $u$ is matched at time $\theta < 1$, and strictly earlier than $v$.
Observe that $u$ must be passive since $y_u=1$. Let $z$ be the vertex that actively matches $u$.
Then we have $y_z = \theta$.
Intuitively, $u$ is the ``luckier'' vertex compared with $v$ since it is favored by a vertex with early decision time.
Indeed, $u$ would remain matched even when $v$ is removed from the graph.

First, observe that when both $u$ and $v$ have decision times larger than $\theta$, $u$ is always matched by $z$, and thus $\alpha_u = 1-g(\theta)$.
When $y_u < \theta$, $u$ must be active in $M(y_u,1)$, since at time $y_u$, $u$ is unmatched and has unmatched neighbors $z$ and $v$ (with later decision times).
Moreover, by Corollary~\ref{corollary:limit_of_affect}, $u$ is active in $M(y_u,y_v)$ as long as $y_u< \theta$ and $y_v > y_u$. Thus $\alpha_u = g(y_u)$.
Similarly, for all $y_v < \theta$ and $y_u > y_v$, $v$ is active and $\alpha_v = g(y_v)$.

Again, we decrease $y_v$ gradually from $\theta$ to $0$ and study $M(1,y_v)$.
Observe that at time $y_v$, $u$ is an unmatched neighbor of $v$.
Then there exists a transition time $\tau$ such that $u$ is the favourite neighbor of $v$ when $y_v\in (\tau,\theta)$; and $v$ matches a vertex other than $u$ when $y_v < \tau$.
In summary, we have the following basic gains (refer to Figure~\ref{fig:bip2})
\begin{compactitem}
	\item $\alpha_u = 1-g(\theta)$ when $y_u,y_v > \theta$;
	\item $\alpha_u + \alpha_v = 1$ when $y_v \in (\tau, \theta)$ and $y_u > y_v$;
	\item $\alpha_u=g(y_u)$ when $y_u <\theta$ and $y_v>y_u$;
	\item $\alpha_v = g(y_v)$ when $y_v < \tau$ and $y_u > y_v$.
\end{compactitem}

Next we retrieve some extra gains. Again, the following holds only for bipartite graphs.

\begin{lemma}[Extra Gain]
	When $y_u < \theta$ and $y_v < \tau$, both $u$ and $v$ are matched in $M(y_u, y_v)$.
	When $y_u > \theta$ and $y_v < \tau$, $\alpha_u \geq 1-g(\theta)$.
\end{lemma}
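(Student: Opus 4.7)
My plan is to prove both parts by reducing to the graph with one endpoint of $(u,v)$ removed and then lifting via the bipartite insertion property (Corollary~\ref{corollary:insertion_bipartite}). The enabling observation, used throughout, is that in the asymmetric case the evolution of $M(1,1)$ coincides with that of $M_v(1)$ up to time $\theta$, and the evolution of $M(1,y_v)$ coincides with that of $M_u(y_v)$ up to time $y_v$. Indeed, by the asymmetric hypothesis no vertex picks $v$ before $\theta$ in $M(1,1)$, and since $y_v<\tau<\theta$ no vertex picks $u$ before $y_v$ in $M(1,y_v)$; hence removing $v$ (respectively $u$) strips no earlier-acting vertex of its actual top choice, so the two processes run identically over that prefix.

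For Part 1, fix $y_u<\theta$ and $y_v<\tau$. The coincidence above implies that in $M_v(y_u,y_v)=M_v(y_u)$ the vertex $z$ (which passively matches $u$ at time $\theta$ in $M(1,1)$) is still unmatched when $u$ acts at time $y_u$, so $u$ is active and matched. Likewise, in $M_u(y_u,y_v)=M_u(y_v)$ the vertex $w\ne u$ that $v$ matches in $M(1,y_v)$ is still unmatched and remains $v$'s favorite (since $u$ is not $v$'s top choice in $M(1,y_v)$, otherwise $v$ would have matched $u$ there, contradicting $y_v<\tau$), so $v$ is matched. Corollary~\ref{corollary:insertion_bipartite} applied to each deletion then transfers both facts to $M(y_u,y_v)$.

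For Part 2, fix $y_u>\theta$ and $y_v<\tau$. I extend the $M_v$ argument one step further: for any $y_u\ge\theta$, the evolution of $M_v(y_u)$ up to time $\theta$ still agrees with that of $M_v(1)$ (since $u$ has not yet acted), and at time $\theta$ vertex $z$ still picks $u$ in $M_v(1)$ because removing $v$ does not change $z$'s top choice among its unmatched neighbors (that top choice being $u$ in $M(1,1)$). Hence $u$ is passively matched at exactly time $\theta$ in $M_v(y_u)$. Corollary~\ref{corollary:insertion_bipartite} then gives that $u$ is matched in $M(y_u,y_v)$ at some time $t\le\theta<y_u$, which forces $u$ to be passive there, matched by some vertex $x$ with $y_x\le\theta$; thus $\alpha_u=1-g(y_x)\ge 1-g(\theta)$.

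The main subtlety is the coincidence-of-evolution claim: it requires verifying at every earlier decision time that the removed endpoint was not the favorite of the acting vertex. Both instances reduce cleanly to the asymmetric-case hypothesis together with the defining properties of $\theta$ and $\tau$, so once the claim is properly set up the rest of the argument is mechanical.
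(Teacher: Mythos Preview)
Your proof is correct and follows essentially the same approach as the paper: reduce to the graph with one endpoint removed, verify the remaining endpoint is matched there (using that the deleted endpoint was never anyone's actual choice in the relevant time prefix), and then lift via Corollary~\ref{corollary:insertion_bipartite}. The paper phrases these reductions slightly more tersely (e.g., for Part~1 it notes that $u$ has both $z$ and $v$ as unmatched neighbors at time $y_u$ in $M(y_u,1)$, hence remains active in $M_v(y_u,1)$), but your ``coincidence of evolution'' formulation captures exactly the same content and the concluding applications of Corollary~\ref{corollary:insertion_bipartite} are identical.
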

\begin{proof}
	Consider when $y_u < \theta$ and $y_v=1$. According to the previous discussion, $u$ has two unmatched neighbors $z$ and $v$ at time $y_u$.
	Thus $u$ would still be actively matched even if we remove $v$ from the graph. That is, $u$ is active in $M_{v}(y_u,1)$.
	Then by Corollary~\ref{corollary:insertion_bipartite}, after inserting $v$ at any rank, $u$ remains matched. In other words, $u$ is matched in $M(y_u,y_v)$ for all $y_u < \theta$.
	
	By definition of $\tau$, $v$ actively matches a vertex other than $u$ in $M(1,y_v)$ for all $y_v < \tau$. Thus $v$ is active in $M_{u}(1,y_v)$, and matched in $M(y_u,y_v)$ for every $y_u$ by Corollary~\ref{corollary:insertion_bipartite}.
	
	Now we consider the second statement, when $y_u > \theta$ and $y_v < \tau$.
	Observe that in $M(y_u,1)$, $z$ matches $u$ at time $\theta$ while at this moment $v$ is unmatched. Removing $v$ does not affect $z$ and $u$, i.e. $z$ actively matches $u$ in $M_v(y_u,1)$. 
	Then by Corollary~\ref{corollary:insertion_bipartite}, inserting $v$ at any rank does not increase the time that $u$ gets matched. Hence in $M(y_u, y_v)$, $u$ is passively matched at time no later than $\theta$, which implies $\alpha_u \geq 1-g(\theta)$ by the monotonicity of $g$.
\end{proof}

Adding these extra gains to the basic gains, we have (refer to Figure~\ref{fig:bip2})
\begin{equation} \label{eq:uw_bip_ubetter}
	\begin{split}
		\expect{y_u,y_v}{\alpha_u + \alpha_v} \geq & \int_0^\theta (1-y_u)g(y_u)dy_u + \int_0^\tau (1-y_v)g(y_v) dy_v + (1-\theta)(1-\theta+\tau)(1-g(\theta)) \\
		& + \int_0^\theta \min\{y_u,\tau\} m(y_u) dy_u + \int_0^\tau y_v m(y_v) dy_v + \frac{1}{2}(2-\tau-\theta)(\theta-\tau).
	\end{split}
\end{equation}

\paragraph{Analysis of Approximation Ratio.}
To complete the analysis, it remains to find a non-decreasing function $g:[0,1]\to[0,1]$ so that the minimum (over all possible values of $\theta,\tau$) of Equations~\eqref{eq:uw_bip_uv} and~\eqref{eq:uw_bip_ubetter} is at least $0.639$.
We prove Theorem~\ref{th:unweighted_bipartite} by fixing $g$ to be a step function and running a factor revealing LP.
See Appendix~\ref{sec:factor_revealing_lp} for a detailed discussion.

\section{Unweighted General Graphs}\label{sec:unweighted_gen}

Since Corollary~\ref{corollary:insertion_bipartite} holds only for bipartite graphs, the extra gains we proved in the previous section cease to hold for general graphs.
It is easy to check that applying the previous analysis while only having the basic gains, we are not able to beat the $0.5$ barrier on the approximation ratio.

The same difficulty arises in the fully online matching problem~\cite{stoc/HKTWZZ18}. The authors bypass it by introducing a novel concept of ``victim''.
They call a vertex $v$ the victim of $w$ in $M(\vecy)$ if (1) $v$ is a neighbor of $w$; (2) $w$ is active and $v$ is unmatched; (3) $v$ is matched in $M_{w}(\vecy)$.
Intuitively, $v$ is unmatched in $M(\vecy)$ because of the existence of $w$.
It is then shown that either $u,v$ are both matched for some recipe of $y_u,y_v$, or $v$ is the victim of some vertex and receives compensation. In either case, the improved analysis beats the $0.5$ barrier.

In this paper, we introduce a new notion of victim and compensation, which is arguably clearer and more fundamental than the notion given in~\cite{stoc/HKTWZZ18}.
Fix a maximum matching $M^*$, we call $u$ and $v$ perfect partners of each other if $(u,v)\in M^*$.

\begin{definition}[Victim]
	\label{def:victim}
	Suppose in $M(\vecy)$, $z$ actively matches $u$ and $v$ is the perfect partner of $u$.
	Then we call $v$ the \emph{victim} of $z$ if $u$ and $v$ match each other in $M_{z}(\vecy)$.
\end{definition}

Intuitively, the existence of $z$ prevents the algorithm from making the correct decision of matching $u,v$ together. Compared to the definition of Huang et al.~\cite{stoc/HKTWZZ18}, we regard $v$ the victim of $z$ even when $v$ is matched in $M(\vecy)$. The same definition will be applied to edge-weighted graphs in Section~\ref{sec:weighted_general}.
Built upon this definition, we define the following gain sharing rule.

\paragraph{Gain Sharing.}
Let $g$ be a non-decreasing function and $h$ be a function that is pointwise smaller than $g$.
Consider the following two-step gain sharing procedure in matching $M(\vecy)$:
\begin{compactitem}
	\item Whenever $u$ actively matches $v$ at time $y_u$, let $\alpha_u=g(y_u)$ and $\alpha_v=1-g(y_u)$.
	\item For each active vertex $z$ that has an \emph{unmatched} victim $v$, decrease $\alpha_z$ and increase $\alpha_v$ by the same amount $h(y_z)$.
\end{compactitem}

\medskip

We refer to the second step of gain sharing as the compensation step, and the amount $h(y_z)$ of gain as the \emph{compensation} sent from $z$ to $v$.
Note that the compensation step does not change $\sum_{u\in V}\alpha_u$, which means that Lemma~\ref{lemma:dual} can still be applied.
It is easy to see that the passive gain of a vertex $u$ is at least $1-g(y_u)$ and the active gain is at least $g(y_u) - h(y_u)$.

\begin{fact} \label{fact:matched_gain_unweighted_gen}
	If $u$ is matched in $M(\vecy)$, then $\alpha_u \ge m(y_u) \eqdef \min\{g(y_u)-h(y_u), 1-g(y_u)\}$. 
\end{fact}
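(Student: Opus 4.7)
The plan is to prove the claim by case analysis on whether $u$ is matched actively or passively in $M(\vecy)$; since $u$ is assumed matched, these two cases are exhaustive. The key structural observation is that the compensation step of the gain sharing rule only transfers gain from active vertices to their \emph{unmatched} victims, so a matched vertex $u$ can only lose gain in the second step, never receive it.

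In the active case, the first step of the sharing rule assigns $\alpha_u = g(y_u)$. The second step may strip away exactly $h(y_u)$ from $u$ (if $u$ has an unmatched victim), or nothing (otherwise), giving $\alpha_u \geq g(y_u) - h(y_u)$.

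In the passive case, let $z$ be the vertex that actively matches $u$, so that $y_z < y_u$ and the first step assigns $\alpha_u = 1 - g(y_z)$. Monotonicity of $g$ upgrades this to $1 - g(y_z) \geq 1 - g(y_u)$. Since $u$ is not active, the second step cannot remove any gain from $u$; and since $u$ is matched, $u$ does not qualify as an unmatched victim of anyone either, so it cannot receive compensation. Hence $\alpha_u \geq 1 - g(y_u)$.

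Combining the two cases gives $\alpha_u \geq \min\{g(y_u) - h(y_u),\ 1 - g(y_u)\} = m(y_u)$, as desired. There is no real obstacle here: the statement is essentially direct bookkeeping against the two-step sharing rule. The only subtlety is in the passive case, where one must simultaneously invoke the monotonicity of $g$ to convert the bound in terms of $y_z$ into one in terms of $y_u$, and the fact that matched vertices are excluded from receiving compensation to rule out any hidden decrease from the second step.
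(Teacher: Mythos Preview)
Your proof is correct and follows essentially the same approach as the paper: the paper's justification is the single sentence immediately preceding the Fact, ``It is easy to see that the passive gain of a vertex $u$ is at least $1-g(y_u)$ and the active gain is at least $g(y_u) - h(y_u)$,'' and you have simply fleshed out that active/passive case split. One minor remark: in the passive case, your observation that $u$ ``cannot receive compensation'' is unnecessary, since receiving compensation would only increase $\alpha_u$ and hence could not threaten the lower bound.
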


Moreover, if $v$ is the victim of vertex $z$, either $v$ is matched, $\alpha_v \ge m(y_v)$, or $v$ receives compensation from $z$, $\alpha_v \ge h(y_z)$.
For analysis purpose, we choose $g,h$ so that $\min_y \{m(y)\} \ge \max_y \{h(y)\}$, i.e., the gain of a matched vertex is at least the compensation of an unmatched vertex.
To help understanding, one can imagine the compensation to be a very small amount of gain compared with $m(\cdot)$.
Consequently, we have the following.

\begin{fact}
	If $v$ is the victim of vertex $z$ in $M(\vecy)$, then $\alpha_v \ge h(y_z)$.
\end{fact}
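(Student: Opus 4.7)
The plan is a short case analysis on whether the victim $v$ is matched in $M(\vecy)$, essentially formalizing the one-sentence justification the authors give immediately before stating the fact. The two cases are handled by either invoking the compensation step of the gain-sharing procedure directly, or by invoking the preceding Fact $\alpha_v \ge m(y_v)$ for matched vertices together with the parameter choice $\min_y m(y) \ge \max_y h(y)$.

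If $v$ is unmatched in $M(\vecy)$, then after step one of the gain-sharing procedure we have $\alpha_v = 0$, since step one only assigns non-zero gain to endpoints of matched edges. By Definition~\ref{def:victim}, $v$ is the victim of the active vertex $z$, and the compensation step then adds exactly $h(y_z)$ to $\alpha_v$, giving $\alpha_v \ge h(y_z)$. I would briefly check that this is unambiguous: the perfect partner $u$ of $v$ is uniquely determined by $M^*$, and $u$ is actively matched by at most one vertex in $M(\vecy)$, so $v$ has at most one such $z$ and thus receives compensation from a single source.

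If $v$ is matched in $M(\vecy)$, then by Fact~\ref{fact:matched_gain_unweighted_gen} (after the first gain-sharing step) we have $\alpha_v \ge m(y_v)$, and the compensation step can only increase $\alpha_v$ further (since deductions in step two are taken only from active vertices $z$, and additions go to unmatched victims — so a matched vertex never loses gain from step two). Combined with the standing assumption $\min_y m(y) \ge \max_y h(y)$, this yields $\alpha_v \ge m(y_v) \ge \max_y h(y) \ge h(y_z)$, as desired.

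There is essentially no obstacle here; the content of the fact is a bookkeeping statement that unifies the matched and unmatched cases into the single clean lower bound $\alpha_v \ge h(y_z)$, which will later be convenient for estimating $\alpha_u + \alpha_v$ on perfect pairs $(u,v) \in M^*$ when the bipartite-style "extra gain" argument fails. The parameter inequality $\min_y m(y) \ge \max_y h(y)$ is precisely what makes the matched case go through, and is imposed for exactly this purpose.
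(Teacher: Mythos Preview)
Your case split on whether $v$ is matched is exactly the paper's one-line justification preceding the fact. One minor inaccuracy: Fact~\ref{fact:matched_gain_unweighted_gen} already incorporates the compensation deduction (the $-h$ term in $m$), so your parenthetical claim that a matched vertex never loses gain in step two is false (an active matched $v$ with an unmatched victim does lose $h(y_v)$)---but this is harmless, since the bound $\alpha_v \ge m(y_v)$ holds \emph{after} both steps, and that together with $\min_y m(y) \ge \max_y h(y)$ is all you need.
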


Following the same framework as for bipartite graphs, we fix a pair of perfect partners $u,v$, and fix the decision times of all vertices other than $u,v$ arbitrarily.
Let $M(y_u,y_v)$ denote the realized matching when $u,v$ have decision times $y_u$ and $y_v$, respectively.
Again, we consider whether $(u,v)\in M(1,1)$ and proceed differently.

\subsection{Symmetric Case: $(u,v) \in M(1,1)$}

The analysis is similar to the bipartite case.
Let $\theta$ be the transition time such that $u$ actively matches $v$ in $M(y_u, 1)$ when $y_u > \theta$; matches a vertex other than $v$ in $M(y_u,1)$ when $y_u < \theta$.
The transition time $\tau$ of $y_v$ is defined analogously.

Following the same analysis for bipartite graphs, we have (refer to Figure~\ref{fig:uw_ge_symmetric})
\begin{compactitem}
	\item $\alpha_u + \alpha_v = 1$ when $y_u > \theta$ and $y_v > y_u$; $u$ is active when $y_u < \theta$ and $y_v > y_u$;
	\item $\alpha_u + \alpha_v = 1$ when $y_v > \tau$ and $y_u > y_v$; $v$ is active when $y_v < \tau$ and $y_u > y_v$.
\end{compactitem}

Observe that for general graphs, the gain of an active vertex $u$ is no longer $g(y_u)$, but is lower bounded by $g(y_u) - h(y_u)$.
However, if $u,v$ match each other, then the active vertex does not need to send compensations (recall that $u,v$ are perfect partners).

\begin{figure}[H]
	\vspace*{-5pt}
	\centering
	\includegraphics*[width=0.3\textwidth]{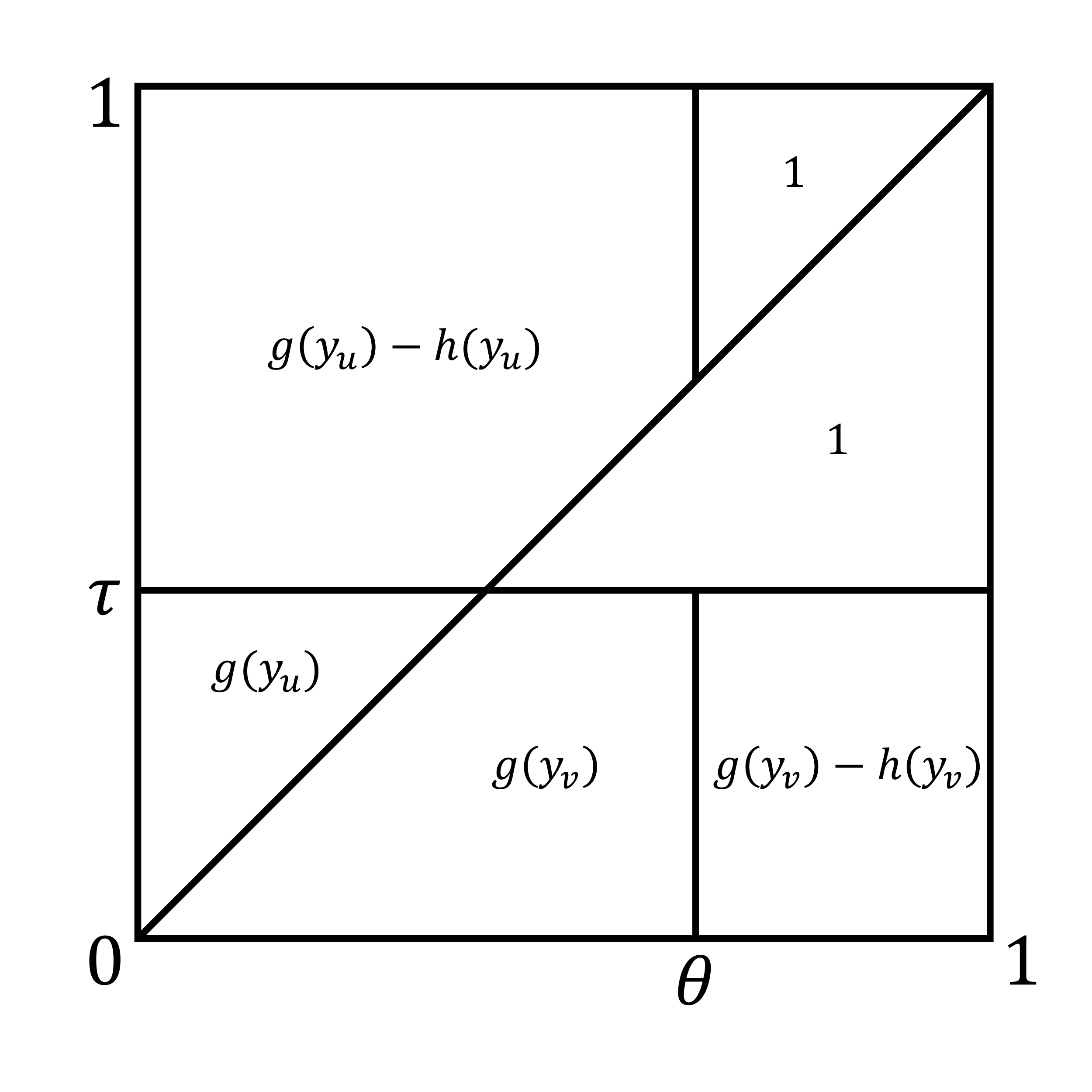}
	\vspace*{-15pt}
	\caption{Unweighted general graphs: $(u,v)\in M(1,1)$.}
	\label{fig:uw_ge_symmetric}
\end{figure}

For bipartite graphs, we show that both $u$ and $v$ are matched in $M(y_u,y_v)$ when $y_u<\theta$ and $y_v < \tau$. Unfortunately, this is not guaranteed in general graphs.
However, we manage to achieve a weaker version of the extra gains that if only one of $u,v$ is matched when $y_u<\theta$ and $y_v <\tau$, then it need not send compensation.

\begin{lemma}[Extra Gain] \label{lemma:unweighted_gen_no_compen}
	For all $y_u < \theta$ and $y_v < \tau$, we have $\alpha_u + \alpha_v \ge g(y_u)$ in $M(y_u,y_v)$ when $y_v>y_u$ and $\alpha_u + \alpha_v \ge g(y_v)$ when $y_u > y_v$.
\end{lemma}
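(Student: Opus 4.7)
Assume without loss of generality that $y_v>y_u$; the case $y_u>y_v$ follows by symmetry after swapping the roles of $u$ and $v$. By the definition of $\theta$ together with Corollary~\ref{corollary:limit_of_affect}, $u$ is active in $M(y_u,y_v)$ and matches some $w\ne v$. The plan is a two-case analysis according to whether $u$ sends compensation in $M(y_u,y_v)$.

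If $u$ sends no compensation, then $\alpha_u=g(y_u)$ directly from the gain-sharing rule, and $\alpha_u+\alpha_v\ge g(y_u)$ follows from $\alpha_v\ge 0$. Otherwise, $u$ sends $h(y_u)$ to its victim $w^*$, the perfect partner of $w$, so $\alpha_u=g(y_u)-h(y_u)$. Observe that $v$ cannot be the victim of any vertex: a victim is by definition the perfect partner of a passively matched vertex, whereas $v$'s perfect partner $u$ is active. Hence $\alpha_v=0$ when $v$ is unmatched and $\alpha_v\ge m(y_v)\ge h(y_u)$ when $v$ is matched, by Fact~\ref{fact:matched_gain_unweighted_gen} and the standing assumption $\min_y m(y)\ge\max_y h(y)$. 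The whole case therefore reduces to showing that $v$ is matched in $M(y_u,y_v)$.

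To handle this reduction I apply Lemma~\ref{lemma:alternating} to the symmetric difference of $M(y_u,y_v)$ and $M_{-u}(y_v)$. The victim condition forces the alternating path from $u$ to begin $(u,w,w^*)$, and the hypothesis that $w^*$ is unmatched in $M(y_u,y_v)$ makes $w^*$ the terminal vertex, pinning the path down to exactly $(u,w,w^*)$. Since $v\notin\{u,w,w^*\}$, the path avoids $v$ entirely, so $v$ has the same matching status in $M(y_u,y_v)$ and in $M_{-u}(y_v)$; it therefore suffices to prove that $v$ is matched in $M_{-u}(y_v)$.

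The main obstacle is this last step. My plan is to exploit the hypothesis $y_v<\tau$, which guarantees that $v$ actively matches some $w'\ne u$ in $M(1,y_v)$, and to apply Lemma~\ref{lemma:alternating} once more to the pair $(M(1,y_v),M_{-u}(y_v))$. If $u$ is unmatched in $M(1,y_v)$ then the two matchings coincide and $v$ matches $w'$ in $M_{-u}(y_v)$; otherwise the sym diff is an alternating path from $u$ with non-decreasing decision times, and unless $v$ is the terminal vertex of this path with its $M(1,y_v)$-edge $(w',v)$ as the final edge, $v$ stays matched in $M_{-u}(y_v)$. I would rule out this last bad scenario by tracing the cascade it encodes: the rerouting propagating along the path from $u$ to $v$ in $M_{-u}(y_v)$ must also interact with the first alternating path $(u,w,w^*)$, forcing $w$ to be matched to a vertex other than $w^*$ in $M_{-u}(y_v)$, which directly contradicts the victim condition that pinned that first path to exactly $(u,w,w^*)$.
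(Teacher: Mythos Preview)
Your plan is structurally the contrapositive of the paper's argument: the paper splits on whether $v$ is matched in $M(y_u,y_v)$ and shows that if $v$ is unmatched then $u$ cannot have an unmatched victim, whereas you split on whether $u$ sends compensation and try to show that if it does then $v$ must be matched. Both routes hinge on the same key fact, namely that $v$ is matched in $M_{-u}(y_v)$ whenever $y_v<\tau$.

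The gap is your last paragraph. Your ``cascade tracing'' argument claims that if $v$ were the terminal vertex of the alternating path between $M(1,y_v)$ and $M_{-u}(y_v)$, this path would have to interact with the first path $(u,w,w^*)$ and force $w$ to match a vertex other than $w^*$ in $M_{-u}(y_v)$. But the first alternating path already pins down that $w$ matches $w^*$ in $M_{-u}(y_v)$; the second path is just a description of the symmetric difference with a \emph{different} matching $M(1,y_v)$ and cannot alter $M_{-u}(y_v)$. There is no reason the second path must pass through $w$ at all, so the contradiction you are reaching for does not materialize.

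The fix is much simpler and is exactly what the paper does. By the lemma immediately preceding this one (applied with $y_v<y_u=1$), $u$ is unmatched at time $y_v$ in $M(1,y_v)$; since $y_u=1$, $u$ does nothing on its own until time $1$. Hence removing $u$ leaves the run of the algorithm unchanged up to and including time $y_v$, so $v$ still actively matches $w'$ in $M_{-u}(y_v)$. Equivalently, in your alternating-path language: if $u$ is matched in $M(1,y_v)$ it is matched at some time strictly greater than $y_v$, so every edge on the alternating path from $u$ has decision time $>y_v$ by Lemma~\ref{lemma:alternating}; the edge $(v,w')$ has decision time $y_v$ and therefore is not on the path, so $v$ cannot be its terminal vertex.
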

\begin{proof}
	We first consider the case when $y_v > y_u$.
	If $v$ is matched, then $\alpha_u + \alpha_v \geq g(y_u)-h(y_u) + m(y_v) \geq g(y_u)$.
	Now suppose $v$ is unmatched.
	By definition, $v$ actively matches a vertex other than $u$ in $M(1,y_v)$ when $y_v < \tau$. Thus, $v$ is also active in $M_{u}(1,y_v)$.
	In other words, $v$ becomes unmatched after inserting $u$ at decision time $y_u < \theta$.
	We show that in this case $u$ need not send compensation in $M(y_u,y_v)$, which implies $\alpha_u = g(y_u)$.
	
	Suppose $u$ matches $z$ in $M(y_u,y_v)$.
	By Lemma~\ref{lemma:alternating}, removing $u$ triggers an alternating path that starts at $u$ and ends at $v$.
	Thus the perfect partner of $z$ is either matched in both of $M(y_u,y_v)$ and $M_{u}(y_u,y_v)$; or unmatched in both.
	Consequently, $u$ does not have an unmatched victim.
	
	Symmetrically, we have $\alpha_u+\alpha_v \ge g(y_v)$ when $y_u > y_v$.
\end{proof}

In summary, we have the following lower bound on $\expect{}{\alpha_u + \alpha_v}$.
Note that $u,v$ are symmetric. We safely assume that $\tau \leq \theta$ (refer to Figure~\ref{fig:uw_ge_symmetric}).
\begin{equation}
	\label{eq:unweighted_gen_uvmatch}
	\begin{split}
		\expect{y_u,y_v}{\alpha_u + \alpha_v} \ge & \frac{1}{2}(1-\theta)^2 + \frac{1}{2}(1-\tau)^2 
		+ \int_0^\tau \Big( (1-y_v) g(y_v) - (1-\theta) h(y_v) \Big) dy_v \\
		& + \int_{0}^{\theta} \Big( (1-y_u)g(y_u) - (1-\max\{\tau, y_u\})h(y_u) \Big) dy_u.
	\end{split}
\end{equation}

\subsection{Asymmetric Case: $(u,v) \notin M(1,1)$} \label{subsec:unweighted_gen_uearlier}

As before, at least one of $u,v$ is matched before time $1$. We assume without loss of generality that $u$ is matched strictly earlier than $v$ in $M(1,1)$, and let $z$ be the active vertex that matches $u$ with decision time $y_z = \theta$.
First, when $y_u, y_v > \theta$, $u$ is always matched by $z$, and thus $\alpha_u = 1-g(\theta)$.
When $y_u < \theta$, we know that $u$ is active in $M(y_u,1)$, and thus active in $M(y_u,y_v)$ as long as $y_v > y_u$ (by Corollary~\ref{corollary:limit_of_affect}).
Now consider $M(1,y_v)$ when $y_v < \theta$.
Following the same analysis as for bipartite case, let $\tau < \theta$ be the transition time such that $v$ chooses $u$ when $y_v\in (\tau,\theta)$ and chooses a vertex other than $u$ when $y_v\in (0,\tau)$.
Moreover, since $v$ is active in $M_{u}(1,y_v)$ when $y_v < \tau$, following the same analysis as in Lemma~\ref{lemma:unweighted_gen_no_compen}, it can be shown that $\alpha_u + \alpha_v \geq g(y_u)$ when $y_v < \tau$ and $y_u < y_v$.
Similarly, it is easy to show that $\alpha_u + \alpha_v \geq g(y_v)$ when $y_v < \tau$ and $y_u > y_v$\footnote{The key observation is, $u$ is matched in $M_{v}(1,y_v)$, and thus in every $M_{v}(y_u,y_v)$. This implies that after inserting $v$ with $y_v < \tau$, either $u$ is matched, or $v$ need not send compensation.}.

In summary, we have (refer to Figure~\ref{fig:uw_gen_basic})
\begin{compactitem}
	\item[(L1)] $\alpha_u = 1-g(\theta)$ when $y_u > \theta$ and $y_v > \theta$;
	\item[(L2)] $\alpha_u + \alpha_v = 1$ when $y_v \in (\tau,\theta)$ and $y_u > y_v$;
	\item[(L3)] $\alpha_u \geq g(y_u) - h(y_u)$ when $y_u < \min\{\theta, y_v\}$ and $y_v > \tau$;
	\item[(L4)] $\alpha_u + \alpha_v \geq g(y_u)$ when $y_u < y_v$ and $y_v < \tau$;
	\item[(L5)] $\alpha_u + \alpha_v \geq g(y_v)$ when $y_v < \tau$ and $y_u > y_v$.
\end{compactitem}

\vspace*{-10pt}
\begin{figure}[H]
	\centering
	\begin{subfigure}{.3\textwidth}
		\centering
		\includegraphics[width=0.9\linewidth]{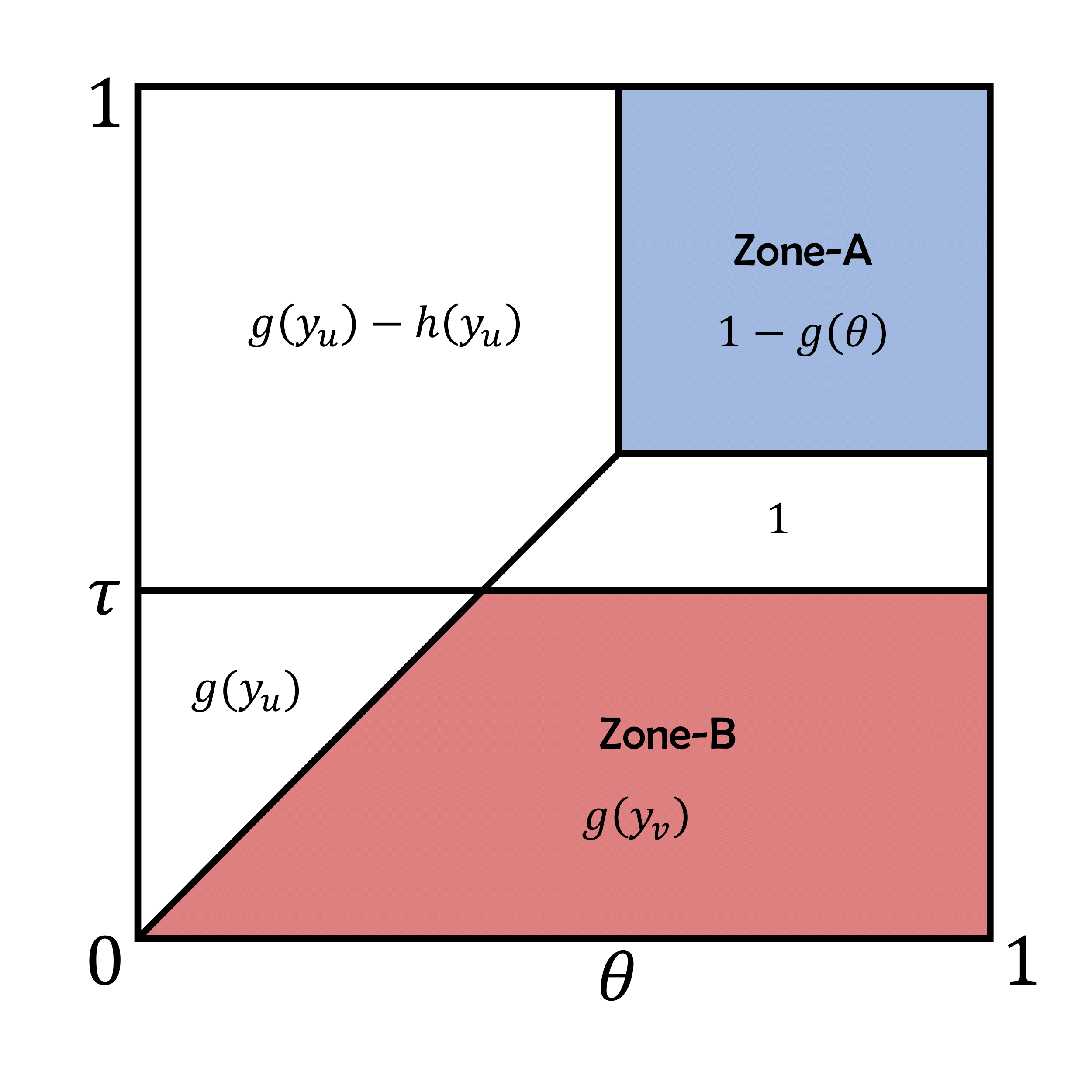}
		\vspace*{-12pt}
		\caption{basic gains}
		\label{fig:uw_gen_basic}
	\end{subfigure}%
	\begin{subfigure}{.3\textwidth}
		\centering
		\includegraphics[width=0.9\linewidth]{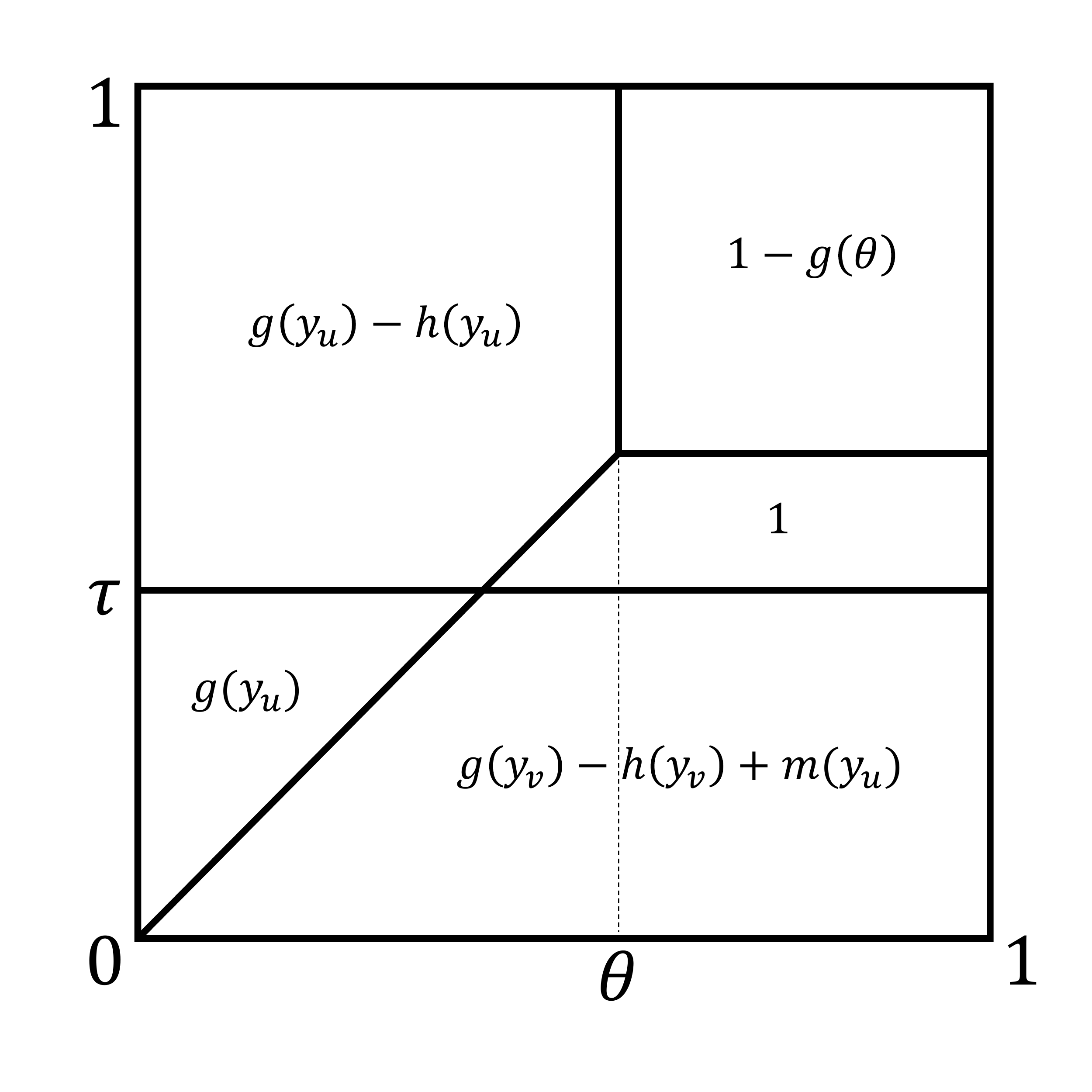}
		\vspace*{-12pt}
		\caption{$u$ is matched earlier}
		\label{fig:uw_gen_uearlier_wo_theta}
	\end{subfigure}%
	\begin{subfigure}{.3\textwidth}
		\centering
		\includegraphics[width=0.9\linewidth]{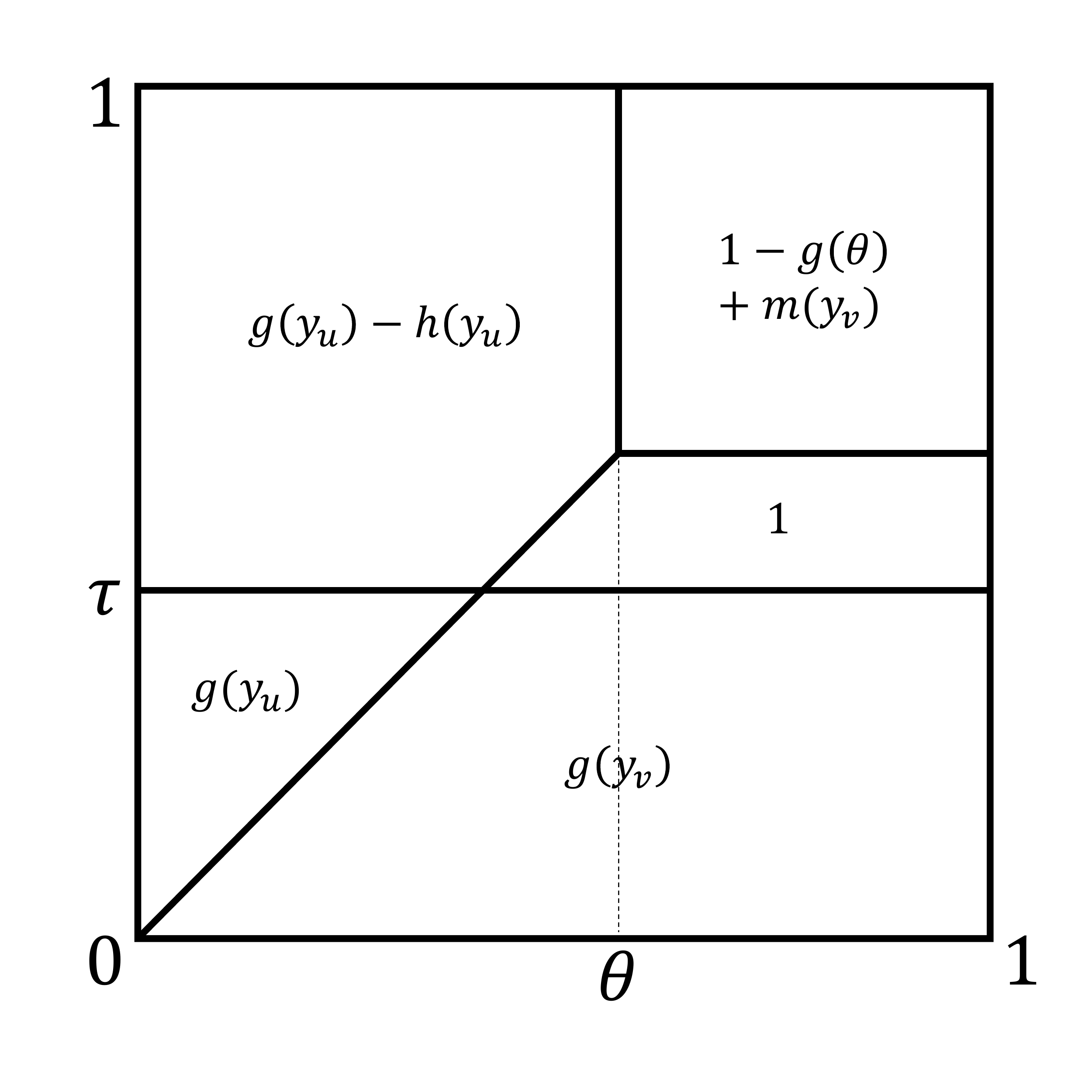}
		\vspace*{-12pt}
		\caption{$v$ is matched earlier}
		\label{fig:uw_gen_vearlier_wo_theta}
	\end{subfigure}%
	\caption{Simple lower bounds and the case when $(u,v)\notin M_{z}(1,1)$.}
\end{figure}
Unsurprisingly, the above basic gains do not yield an approximation ratio strictly above $0.5$.

\paragraph{Extra Gains.}
For convenience of discussion, we define \textsf{Zone-A} to be the matchings $M(y_u,y_v)$ when $y_u>\theta$ and $y_v>\theta$ (where lower bound (L1) is applied), and \textsf{Zone-B} to be the matchings $M(y_u,y_v)$ when $y_v<\tau, y_u>y_v$ (where lower bound (L5) is applied).
In the following, we show that better lower bounds can be obtained for either (L1) or (L5).
Roughly speaking, if $v$ is unmatched in \textsf{Zone-A}, then either it is compensated by $z$ (in which case (L1) can be improved), or $u$ will be matched in \textsf{Zone-B} (in which case (L5) can be improved).
Hence depending on the matching status of $u$ and $v$ in $M_{z}(1,1)$, i.e., when $z$ is removed, we divide our analysis into two cases.

\subsubsection{Case 1: $(u,v)\notin M_{z}(1,1)$}

In this case, at least one of $u,v$ is matched passively before time $1$ in $M_{z}(1,1)$.
We first consider the case when $u$ is matched strictly earlier than $v$.
We show that in this case $u$ is matched in \textsf{Zone-B}, and thus (L5) can be improved (see Figure~\ref{fig:uw_gen_uearlier_wo_theta}).

\begin{lemma}\label{lemma:uw_ge_uearlier}
	If $u$ is matched strictly earlier than $v$ in $M_{z}(1,1)$, then $u$ is matched in \textsf{Zone-B}.
\end{lemma}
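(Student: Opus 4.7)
The plan is to prove the lemma in two stages: first show that $u$ is matched in $M(1, y_v)$ for every $y_v < \tau$, and then extend this to $M(y_u, y_v)$ for all $y_u \in (y_v, 1]$ using the monotonicity in Corollary~\ref{corollary:monotonicity_unweighted}.

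For the first stage, I would compare $M(1, y_v)$ with $M_v(1, y_v) = M_v(1, 1)$ via an insertion argument. Because we are in the asymmetric case with $v$ matched strictly later than $u$ (or unmatched) in $M(1, 1)$, removing $v$ does not affect the edge $(u, z)$, so in $M_v(1, 1)$ the vertex $u$ is matched to $z$ at time $\theta$. Inserting $v$ back with decision time $y_v$ then produces, by Lemma~\ref{lemma:alternating}, an alternating path $(v_0 = v, v_1, \ldots, v_k)$ of non-decreasing edge times in $M(1, y_v)\triangle M_v(1, y_v)$. As long as $u$ is not the ``bad endpoint'' $v_k$ with $k$ even (the only configuration in which $u$ could be unmatched in $M(1, y_v)$), $u$ inherits a matching edge from either $M(1, y_v)$ or $M_v(1, y_v)$ and therefore remains matched.

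The crux is to rule out this bad-endpoint case using the premise about $M_z(1, 1)$. If $u$ were $v_k$ with $k$ even, the path would end with $v_{k-1} = z$ and last edge $(z, u) \in M_v(1, y_v)$ at time $\theta$, while in $M(1, y_v)$ the vertex $z$ is re-matched to $v_{k-2}$. The premise gives that in $M_z(1, 1)$, the vertex $u$ has a backup partner $w$ at some time $t_u = y_w \geq \theta$, while $v$ is unmatched at time $t_u$. Applying the same insertion analysis to $M_z(1, y_v)$ versus $M_{z, v}(1, 1)$ yields a parallel alternating path from $v$, and tracking the interplay between $M(1, y_v)$ and $M_z(1, y_v)$ (whose only structural difference lies at $z$) should force $w$ to match $u$ at time $y_w$ in $M(1, y_v)$ as well, contradicting the assumption that $u$ is unmatched.

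For the second stage, let $t_u^*$ denote the time $u$ is passively matched in $M(1, y_v)$ by some vertex $w^*$, so that $y_{w^*} = t_u^*$. For $y_u \in (t_u^*, 1]$, Corollary~\ref{corollary:monotonicity_unweighted} directly gives $M(y_u, y_v) = M(1, y_v)$ and hence $u$ is still matched. For $y_u \in (y_v, t_u^*]$, the same corollary shows that $u$ becomes active at $y_u$; since the execution of $M(y_u, y_v)$ agrees with that of $M(1, y_v)$ up to time $y_u$, the vertex $w^*$ is still unmatched at time $y_u^-$ and provides $u$ with an unmatched neighbor, so $u$ matches actively. The hard part is clearly the bad-endpoint analysis in the first stage: the ``strictly earlier'' clause of the hypothesis is essential there, as it supplies the backup partner $w$ that forces $u$ to remain matched in $M(1, y_v)$ despite the chain of re-matchings triggered by inserting $v$.
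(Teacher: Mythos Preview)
Your overall plan mirrors the paper's: reduce to showing $u$ is matched in $M(1,y_v)$ for every $y_v<\tau$ (Stage~1) and then extend to all of \textsf{Zone-B} by monotonicity (Stage~2). Stage~2 is fine, if more elaborate than necessary; the paper dispatches it in one line via Lemma~\ref{lemma:monotonicity}. In Stage~1 you also correctly identify the alternating-path structure and the only dangerous configuration: the path $M(1,y_v)\triangle M_v(1,y_v)$ ending at $u$ with $z$ second-to-last.

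The gap is in how you rule out that configuration. Your argument is that the backup partner $w$ from $M_z(1,1)$ ``should'' still match $u$ in $M(1,y_v)$, once one ``tracks the interplay'' between $M(1,y_v)$ and $M_z(1,y_v)$. But this is precisely the kind of ``matched-stays-matched under perturbation'' reasoning that Corollary~\ref{corollary:insertion_bipartite} supplies for bipartite graphs and that \emph{fails} on general graphs --- which is the whole reason this section exists. Concretely, you would need to argue that at time $y_w$ in the execution of $M(1,y_v)$, vertex $w$ is unmatched and $u$ is its favorite unmatched neighbor; neither follows from what you have, since the cascade triggered by inserting $v$ at $y_v$ can alter $w$'s local picture arbitrarily before time $y_w\ge\theta$. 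The phrase ``whose only structural difference lies at $z$'' is also not right: $M(1,y_v)$ and $M_z(1,y_v)$ can differ along an entire alternating path emanating from $z$, not just at $z$ itself.

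The paper closes this gap with a different, cleaner idea: look at the matching $M_{v,z}(1,y_v)$ on $G\setminus\{v,z\}$. On one hand, since $u$ is matched strictly earlier than $v$ in $M_z(1,1)$, removing $v$ leaves $u$ matched, so $u$ is matched in $M_{z,v}(1,1)=M_{v,z}(1,y_v)$. On the other hand, assuming $u$ is unmatched in $M(1,y_v)$, the path $(v,v_1,\ldots,z,u)$ in $M(1,y_v)\triangle M_v(1,y_v)$ lets one read off $M_{v,z}(1,y_v)$ directly: the interior vertices $v_1,\ldots,v_{k-2}$ retain their $M_v$-edges, everything off the path is unchanged, and $u$ stays unmatched. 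These two conclusions contradict each other. The double-removal trick is the missing ingredient in your proposal; it replaces the fragile ``$w$ still picks $u$'' argument with a contradiction that does not require controlling any individual vertex's choice.
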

\begin{proof}
	To show that $u$ is matched in \textsf{Zone-B}, by Lemma~\ref{lemma:monotonicity} it suffices to show that $u$ is matched in $M(1,y_v)$, for all $y_v < \tau$.
	Suppose otherwise, i.e., $u$ is unmatched in $M(1,y_v)$ for some $y_v < \tau$.
	
	Since $z$ matches $u$ in $M_{v}(1,y_v)$, the symmetric difference between $M(1,y_v)$ and $M_{v}(1,y_v)$ is an alternating path that starts from $v$ and ends with $u$. Moreover, $z$ is the second last vertex in the alternating path.
	Now suppose we remove $v$ and $z$ simultaneously in $M(1,y_v)$.
	Then in the resulting matching, all vertices between $v$ and $z$ in the alternating path recover their matching status in $M_{v}(1,y_v)$, while all other vertices remain the same matching status.
	In particular, $u$ remains unmatched when both $v$ and $z$ are removed from $M(1,y_v)$.
	However, since $u$ is matched strictly earlier than $v$ in $M_{z}(1,1)$, $u$ should remain matched to the same vertex when we further remove $v$, which is a contradiction.
\end{proof}

Given the lemma, we improve (L5) and obtain the following (refer to Figure~\ref{fig:uw_gen_uearlier_wo_theta}).
As we will show later, this is not the bottleneck case since this lower bound is strictly larger than \eqref{eq:unweighted_gen_gamma>theta}.
\begin{equation} \label{eq:unweighted_gen_uearlier}
	\begin{split}
		\expect{y_u,y_v}{\alpha_u + \alpha_v} \ge & (1-\theta)^2(1-g(\theta)) + \frac{1}{2}(2-\theta-\tau)(\theta-\tau) + \int_{0}^{\tau} (1-y_v)(g(y_v) - h(y_v)) dy_v\\
		+ \int_{0}^{\theta}& \Big( (1-y_u)g(y_u) - (1-\max\{\tau, y_u\})h(y_u) \Big) dy_u + \int_{0}^{1} \min \{\tau, y_u\} m(y_u) dy_u.
	\end{split}
\end{equation}

Next, we consider the case when $v$ is matched strictly earlier than $u$ in $M_{z}(1,1)$.
We show that in this case $v$ is matched in \textsf{Zone-A}, which improves (L1).

\begin{lemma}\label{lemma:uw_ge_vearlier}
	If $v$ is matched earlier than $u$ in $M_{z}(1,1)$, then $v$ is matched in \textsf{Zone-A}.
\end{lemma}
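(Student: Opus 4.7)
My plan is a three-step argument showing that $v$ is matched in $M(y_u,y_v)$ throughout the entire Zone-A region $\{(y_u,y_v): y_u > \theta, y_v > \theta\}$: first a structural step, then varying $y_v$, then varying $y_u$.

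\textbf{Step 1.} I first show that $v$ is matched in $M(1,1)$ with the same partner $w$ and at the same time as in $M_z(1,1)$. Consider the alternating path $(z=u_0, u_1=u, u_2, u_3, \ldots)$ from Lemma~\ref{lemma:alternating} applied to $M(1,1)$ and $M_z(1,1)$. Its decision times are non-decreasing and begin at $y_{(u_0,u_1)} = \theta$, so the decision time of the edge $(u_1,u_2)$ (which matches $u$ in $M_z(1,1)$) is a lower bound on the decision time of every subsequent path edge. If $v$ lay on this path, the edge realizing $v$'s $M_z(1,1)$-match would be $(u_1,u_2)$ itself or come later, forcing $v$'s match time in $M_z(1,1)$ to be no smaller than $u$'s, contradicting the hypothesis. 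Hence $v$ lies off the path, so $v$ has the same matching partner $w$ and match time $\gamma = y_w$ in both $M(1,1)$ and $M_z(1,1)$. Because we are in the asymmetric case, $v$'s match time in $M(1,1)$ exceeds $\theta$, so $\gamma > \theta$; and because $y_v = 1 > \gamma$, $v$ is passively matched by $w$ in $M(1,1)$.

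\textbf{Step 2.} Next I apply Corollary~\ref{corollary:monotonicity_unweighted} to $v$, which is passively matched by $w$ with $y_w = \gamma$ in $M(1,1)$. For $y_v \in (\gamma,1)$ the matching is unchanged, so $v$ remains passively matched by $w$; for $y_v \in (0,\gamma)$ the corollary says $v$ becomes active, hence matched actively. Combining the two ranges and using $\gamma > \theta$, $v$ is matched in $M(1,y_v)$ for every $y_v > \theta$.

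\textbf{Step 3.} It remains to argue $M(y_u,y_v) = M(1,y_v)$ whenever $y_u > \theta$. Every event at a time $t < \theta$ is driven by vertices whose decision times are below $\theta$; because $y_u > \theta > t$, vertex $u$ has not yet acted, and whether $u$ is unmatched at time $t$ is determined by the actions of the other vertices independently of the exact value of $y_u$. By induction on event times the two processes have identical states up to time $\theta$, so at time $\theta$ vertex $z$ still actively matches $u$ in $M(y_u,y_v)$. From $\theta$ onwards $u$ is matched, so its decision time is irrelevant to all future actions, and the two processes continue to agree. Combined with Step 2, this yields that $v$ is matched in every $M(y_u,y_v)$ with $y_u,y_v > \theta$, which is exactly the statement of the lemma.

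The main subtlety is Step 1: the use of the non-decreasing decision times along the alternating path implicitly treats ``matched earlier'' as strict, so that the degenerate case $v = u_2$ (where $(u,v) \in M_z(1,1)$ and $v$'s $M_z$-match time equals $u$'s) need not be handled separately. Steps 2 and 3 are then routine applications of the unweighted monotonicity tools developed in Section~\ref{sec:prelim}.
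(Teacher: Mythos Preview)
Your proof is correct and follows essentially the same approach as the paper: establish that $v$ is matched in $M(1,1)$, then sweep out \textsf{Zone-A} by two applications of monotonicity. The paper varies $y_u$ first (noting all $M(y_u,1)$ with $y_u>\theta$ coincide) and then $y_v$, whereas you reverse the order; your Step~1 also spells out the alternating-path argument that the paper's terse ``since $z$ chooses $u$, $v$ remains matched'' leaves implicit.
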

\begin{proof}
	Consider adding $z$ back to $M_z(1,1)$. Since $z$ chooses $u$, $v$ remains matched in $M(1,1)$. Moreover,  all matchings $M(y_u,1)$ for $y_u>\theta$ are the same and hence, $v$ is matched in $M(y_u,1)$ for all $y_u > \theta$. Finally, by Lemma~\ref{lemma:monotonicity} $v$ is matched in \textsf{Zone-A}.
\end{proof}

Thus, we obtain the following lower bound (refer to Figure~\ref{fig:uw_gen_vearlier_wo_theta}).
As we will show later, this is neither the bottleneck case since the lower bound is strictly larger than \eqref{eq:unweighted_gen_gamma=theta}.
\begin{equation}
	\begin{split}
		\expect{y_u,y_v}{\alpha_u + \alpha_v} \ge & (1-\theta)^2 (1-g(\theta)) + (1-\theta) \int_\theta^1 m(y_v) dy_v +  \frac{1}{2}(2-\theta-\tau)(\theta-\tau)  \\
		+ \int_0^\theta & (1-y_v)g(y_v) dy_v + \int_0^\theta \Big( (1-y_u)\cdot g(y_u) - (1-\max\{\tau, y_u\})h(y_u) \Big) dy_u.
	\end{split}
	\label{eq:unweighted_gen_vearlier}
\end{equation}

\subsubsection{Case 2: $(u,v)\in M_{z}(1,1)$}

Now we study the second case when $u,v$ match each other in $M_{z}(1,1)$.
This is where the notion of victim applies. Formally, we have the following lower bound of $\alpha_v$ in \textsf{Zone-A}.

\begin{lemma}[Compensation] \label{lemma:compensation_unweighted}
	For all $y_u,y_v>\theta$, if $u$ matches $v$ in $M_{z}(y_u,1)$, then we have $\alpha_v \geq h(\theta)$ in $M(y_u,y_v)$.
\end{lemma}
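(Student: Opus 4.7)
The plan is to case-split on whether $v$ is matched in $M(y_u, y_v)$. In the easy case where $v$ is matched, Fact~\ref{fact:matched_gain_unweighted_gen} gives $\alpha_v \ge m(y_v)$, and the design assumption $\min_y m(y) \ge \max_y h(y)$ upgrades this immediately to $\alpha_v \ge h(\theta)$. The substantive case is $v$ unmatched in $M(y_u,y_v)$; there I will show that $v$ is the victim of $z$ in the sense of Definition~\ref{def:victim}, so that the compensation step contributes $h(y_z)=h(\theta)$ to $\alpha_v$.

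Certifying victimhood requires two facts: (i) $z$ actively matches $u$ in $M(y_u,y_v)$; (ii) $u$ and $v$ match each other in $M_z(y_u,y_v)$. Part (i) follows from exactly the argument already used at the start of Section~\ref{subsec:unweighted_gen_uearlier}: changing $y_v$ from $1$ to any value $>\theta$ affects no action before time $\theta$ (none of $u,v$ has acted in either execution), so $z$ still matches $u$ at $\theta$.

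For (ii) I split on the order of $y_u, y_v$. When $y_u \le y_v$, the hypothesis makes $v$ passively matched by $u$ in $M_z(y_u,1)$ at time $y_u$, so Corollary~\ref{corollary:monotonicity_unweighted} with threshold $y_u$ keeps the $M_z$-matching unchanged for any $y_v \in (y_u,1)$, and $u$ still matches $v$. The delicate sub-case is $\theta < y_v < y_u$. Here I compare the executions on $G$ and $G - \{z\}$ step by step up to time $y_v$: they agree before $\theta$ (otherwise $z$ would not be active at $\theta$ in $M$), and at $\theta$ the only effect of $z$'s absence is that $u$ remains unmatched in $M_z$. Crucially, no vertex $w$ with $y_w \in (\theta, y_v)$ ever picks $u$ in the $M_z$-execution, because the pre-$y_v$ behavior of $M_z(y_u,y_v)$ coincides with that of $M_z(y_u,1)$, in which the hypothesis forces $u$ to remain unmatched throughout $[\theta, y_u)$. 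An easy induction then shows that at time $y_v^-$ the two states differ only in $u$'s matched status, so $v$'s unmatched-neighbor set in $M_z(y_u,y_v)$ equals the corresponding set in $M(y_u,y_v)$ together with $u$. Since $v$ is unmatched in $M(y_u,y_v)$, that set must be empty in $M$, leaving the singleton $\{u\}$ in $M_z$, and $v$ matches $u$ at time $y_v$.

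The principal obstacle is this last sub-case, where $v$ acts strictly before $u$ in the $M_z$-execution and could a priori prefer some neighbor other than $u$. The whole argument hinges on leveraging the hypothesis about $M_z(y_u, 1)$ to rule out $u$ being ``stolen'' by any earlier vertex in $M_z$ and to force $v$'s pool of unmatched neighbors to collapse to exactly $\{u\}$, which pins down $v$'s choice regardless of its preference order.
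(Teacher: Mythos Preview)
Your proof is correct and follows essentially the same route as the paper's: both hinge on the observation that for $y_u,y_v>\theta$ the executions $M(y_u,y_v)$ and $M_z(y_u,y_v)$ agree up to time $y_v$ except for $u$'s status, so either $v$ is matched in $M$ (giving $\alpha_v\ge m(y_v)\ge h(\theta)$) or $v$ must pick $u$ in $M_z$ and is therefore $z$'s victim. The only cosmetic difference is the order of the case split---you branch first on whether $v$ is matched in $M(y_u,y_v)$ and then on $y_u$ versus $y_v$, whereas the paper branches first on $y_u$ versus $y_v$ and then (in the $y_v<y_u$ sub-case) on whether $v$'s active choice in $M_z$ is $u$; your induction showing the states differ only in $u$ makes explicit what the paper compresses into the single line ``$z$ chooses $u$ and does not affect the matching status of $v$.''
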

\begin{proof}
	When $y_u < y_v$, $u$ actively matches $v$ in $M_z(y_u,y_v)$. Hence $v$ is the victim of $z$ and $\alpha_v \ge h(\theta)$. 
	When $y_v < y_u$, either $v$ actively matches $u$ in $M_z(y_u,y_v)$ and thus $v$ is the victim of $z$ in $M(y_u,y_v)$, or $v$ actively matches a vertex other than $u$. In the first case, $\alpha_v \ge h(\theta)$. In the second case, when we add back $z$ to the graph, $z$ chooses $u$ and does not affect the matching status of $v$. Hence, $\alpha_v \ge m(y_v) \ge h(\theta)$.
\end{proof}

To apply this lemma, we first consider the case when $u$ matches $v$ in all $M_{z}(y_u,1)$, where $y_u > \theta$. 
The lemma implies that $\alpha_v \geq h(\theta)$ in \textsf{Zone-A}. Refer to Figure~\ref{fig:uw_ge_compensation}, we have
\begin{equation}
	\label{eq:unweighted_gen_gamma=theta}
	\begin{split}
		\expect{y_u,y_v}{\alpha_u + \alpha_v} \ge & (1-\theta)^2 (1-g(\theta)+h(\theta)) + \frac{1}{2}(2-\tau-\theta)(\theta-\tau) + \int_0^\tau (1-y_v)g(y_v) dy_v\\
		& + \int_{0}^{\theta} \Big( (1-y_u)g(y_u) - (1-\max\{\tau, y_u\})h(y_u) \Big) dy_u.
	\end{split}
\end{equation}

\vspace*{-15pt}
\begin{figure}[H]
	\centering
	\begin{subfigure}{.4\textwidth}
		\centering
		\includegraphics[width=0.75\linewidth]{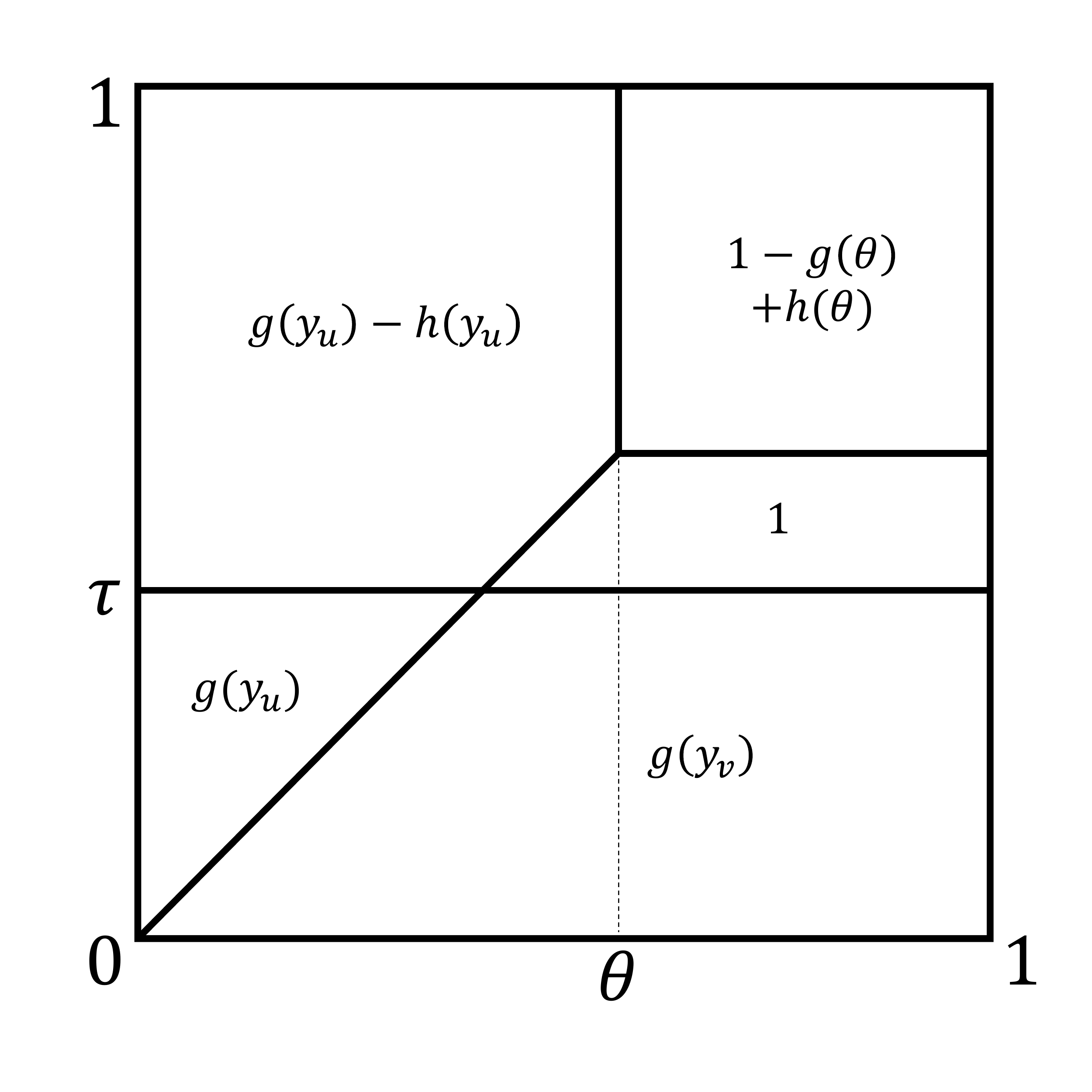}
		\vspace*{-10pt}
		\caption{$u$ matches $v$ in $M_{z}(y_u,1)$ when $y_u > \theta$}
		\label{fig:uw_ge_compensation}
	\end{subfigure}
	\hspace*{10pt}
	\begin{subfigure}{.4\textwidth}
		\centering
		\includegraphics[width=0.75\linewidth]{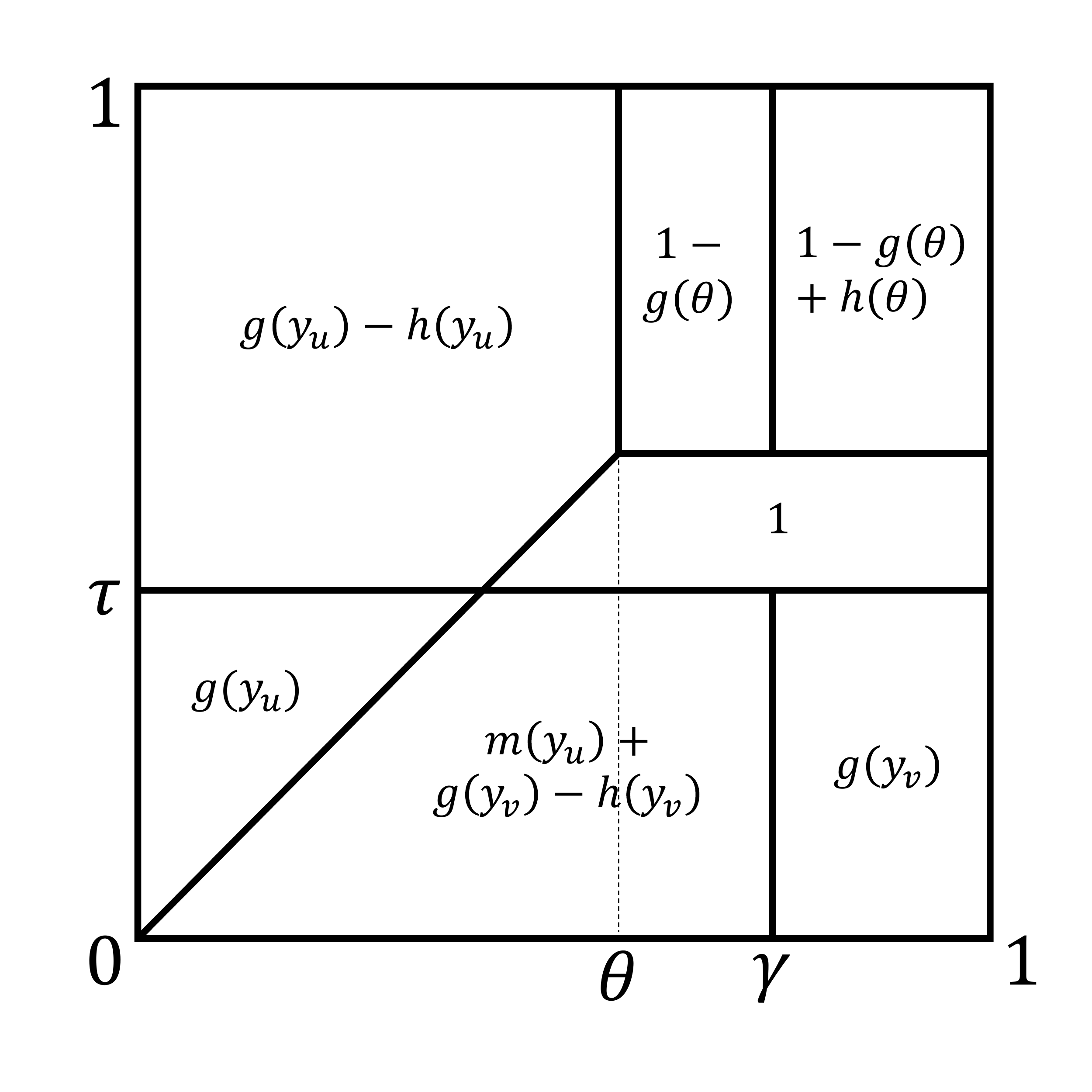}
		\vspace*{-10pt}
		\caption{$u$ matches $v$ in $M_{z}(y_u,1)$ when $y_u > \gamma$}
		\label{fig:uw_ge_gamma}
	\end{subfigure}
	\caption{$u,v$ match each other in $M_{z}(1,1)$.}
\end{figure}

Since $h(x) \leq m(y)$ for all $x,y\in [0,1]$, it is obvious that this lower bound is no larger than~\eqref{eq:unweighted_gen_vearlier}.
We would like to remark that in this case, we can see clearly how the compensation rule helps us achieve an approximation ratio strictly above $0.5$.
Without the compensation $v$ receives in \textsf{Zone-A}, the lower bounds become Figure~\ref{fig:uw_gen_basic}, which cannot beat $0.5$. 

\medskip

Finally, we consider the case when $u$ does not always match $v$ in $M_z(y_u,1)$ for all $y_u > \theta$.
Let $\gamma > \theta$ be the transition time such that $u$ matches $v$ in  $M_z(y_u,1)$ when $y_u >\gamma$ and matches a vertex other than $v$ when $y_u \in (\theta,\gamma)$ (see Figure~\ref{fig:uw_ge_gamma}).

Applying Lemma~\ref{lemma:compensation_unweighted} to $M(y_u,y_v)$ where $y_u > \gamma$ and $y_v > \theta$, we have $\alpha_v \ge h(\theta)$.
Now consider the matching $M_{z}(y_u,1)$, where $y_u < \gamma$.
Intuitively, since $u$ is matched strictly earlier than $v$, (in the same spirit of Lemma~\ref{lemma:uw_ge_uearlier}) $u$ has another neighbor as a backup, and hence is still matched when we decrease $y_v$. Formally, we have the following.
\begin{lemma}
	When $y_u < \gamma$ and $y_v < \tau$, $u$ is matched in $M(y_u,y_v)$.
\end{lemma}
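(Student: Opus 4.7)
The plan is to mirror the proof of Lemma~\ref{lemma:uw_ge_uearlier}, generalizing it from $y_u = 1$ to the whole range $y_u < \gamma$. The guiding intuition is that in the reference matching $M_z$, vertex $u$ gets matched strictly earlier than $v$ to some ``backup'' neighbor $u' \neq v$, and this backup keeps $u$ matched even after $y_v$ drops below $\tau$.

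First, I will establish a structural sub-claim: for every $y_u < \gamma$, in $M_z(y_u, 1)$ vertex $u$ matches some $u' \neq v$ at time $\leq y_u$. For $y_u \in (\theta, \gamma)$ this is the very definition of $\gamma$. For $y_u \in (0, \theta)$ I will apply Corollary~\ref{corollary:monotonicity_unweighted}: as $y_u$ decreases the set of $u$'s unmatched neighbors at its decision time in $M_z$ can only grow, so $u$'s favourite---already some $u' \neq v$ throughout $y_u \in (\theta, \gamma)$---can only move up in the preference order and in particular stays different from $v$.

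Second, I will argue that $u$ remains matched in $M_{v,z}(y_u, y_v) = M_{v,z}(y_u, 1)$. By the sub-claim, $u$ matches $u'$ in $M_z(y_u, 1)$ at some time $t \leq y_u$, and $v$ is unmatched at time $t$ in $M_z(y_u, 1)$ (its decision time is $1$ and no active vertex has passively chosen $v$ by time $t$, which I will verify directly). Then Corollary~\ref{corollary:limit_of_affect}, applied to $M_z(y_u, 1)$ inside the graph $G - \{z\}$, says that further removing $v$ does not affect $u$'s matching status, so $u$ is matched in $M_{v,z}(y_u, 1)$.

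Finally, suppose for contradiction that $u$ is unmatched in $M(y_u, y_v)$. By Lemma~\ref{lemma:alternating}, the symmetric difference of $M(y_u, y_v)$ and $M_v(y_u, y_v) = M_v(y_u, 1)$ is an alternating path that starts at $v$ and, because $u$'s matching status differs between the two matchings, ends at $u$; its penultimate vertex is $u$'s partner in $M_v$. Using the hypothesis $(u, v) \in M_z(1, 1)$ together with Lemma~\ref{lemma:alternating} applied to reinserting $z$ into $M_z$, I will identify this penultimate vertex as $z$ itself. Simulating the simultaneous removal of $v$ and $z$ from $M(y_u, y_v)$ then leaves $u$ unmatched: along the alternating path, every vertex strictly between $v$ and $z$ reverts to its $M_v$-partnership while $u$, having lost its only new $M_v$-edge $(z, u)$, stays unmatched. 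This yields $u$ unmatched in $M_{v,z}(y_u, y_v)$, contradicting the second step.

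The main obstacle is the penultimate-vertex identification in the final step. For $y_u \in (\theta, \gamma)$ the situation is essentially the one analyzed in Lemma~\ref{lemma:uw_ge_uearlier}, but for $y_u \in (0, \theta)$ vertex $u$ becomes active in $M(y_u, 1)$, so pinpointing $z$'s position on the relevant alternating paths requires a careful use of the monotonicity of decision times along alternating paths (Lemma~\ref{lemma:alternating}) to rule out other candidates for $u$'s $M_v$-partner.
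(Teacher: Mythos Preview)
Your first two steps are fine, but the third has a genuine gap precisely where you flag ``the main obstacle.'' For $y_u \in (0,\theta)$, vertex $u$ is \emph{active} in $M_v(y_u,y_v)$ and matches its favourite unmatched neighbor $w$, which need not be $z$: we only know $z$ is available at time $y_u$, not that it is preferred. If $w \neq z$, the last edge of the alternating path $M(y_u,y_v)\triangle M_v(y_u,y_v)$ is $(w,u)$, with decision time $y_u < \theta$. Removing $z$ from $M_v(y_u,y_v)$ then triggers a (possibly empty) alternating path whose edges all have decision time at least $y_z = \theta$, so it cannot touch $(w,u)$; hence $u$ stays matched in $M_{v,z}(y_u,y_v)$ and you obtain no contradiction with step~2. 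The monotonicity of decision times along alternating paths, which you invoke as the intended fix, is exactly what shows the $z$-removal path misses $u$ --- it works against you, not for you.

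The paper sidesteps this with a one-line reduction you overlook: by Lemma~\ref{lemma:monotonicity}, once $u$ is matched in $M(\gamma, y_v)$ it is matched for every smaller $y_u$, so it suffices to treat the single value $y_u = \gamma$. Since $\gamma > \theta$, vertex $u$ is \emph{passively} matched by $z$ in $M_v(\gamma,y_v)$, the penultimate vertex is automatically $z$, and the remainder of your step~3 argument goes through verbatim. Replacing your direct attack on the whole range $y_u < \gamma$ by this reduction completes the proof.
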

\begin{proof}
	The proof is similar to that of Lemma~\ref{lemma:uw_ge_uearlier}. It suffices to show that $u$ is matched in $M(\gamma,y_v)$ for all $y_v < \tau$.
	By definition of $\gamma$, $u$ actively matches a vertex other than $v$ in $M_{z}(\gamma,1)$.
	Thus $u$ remains active if we further remove $v$ from the graph.
	
	On the other hand, if $u$ is unmatched in $M(\gamma,y_v)$, then the symmetric difference between $M(\gamma,y_v)$ and $M_{v}(\gamma,y_v)$ is an alternating path that starts from $v$ and ends $u$. Moreover, $z$ is the second last vertex in the alternating path.
	Consequently, $u$ remains unmatched if we remove both $v$ and $z$ from the graph, which is a contradiction.
\end{proof}

Plugging in the improved version of (L1) and (L5), we obtain the following (refer to Figure~\ref{fig:uw_ge_gamma}).
\begin{equation} \label{eq:unweighted_gen_gamma>theta}
	\begin{split}
		\expect{y_u,y_v}{\alpha_u + \alpha_v} \ge & (1-\theta)^2(1-g(\theta)) + (1-\gamma)(1-\theta)h(\theta) + \frac{1}{2}(2-\tau-\theta)(\theta-\tau) \\
		+ & \int_{0}^{\theta} \Big( (1-y_u)g(y_u) - (1-\max\{\tau, y_u\})h(y_u) \Big) dy_u + \int_0^\gamma \min\{y_u,\tau\} m(y_u) dy_u \\
		+ & \int_0^\tau \Big( (1-y_v)g(y_v) - (\gamma-y_v)h(y_v) \Big) dy_v.
	\end{split}
\end{equation}

Observe that when $\gamma = 1$, this bound degenerates to~\eqref{eq:unweighted_gen_uearlier}.
It is easy to see this by comparing Figure~\ref{fig:uw_ge_gamma} and Figure~\ref{fig:uw_gen_uearlier_wo_theta}.

\paragraph{Analysis of Approximation Ratio.}
Equipped with the previous lower bounds, it suffices to find functions $g,h:[0,1] \to [0,1]$ so that the minimum of \eqref{eq:unweighted_gen_uvmatch}, \eqref{eq:unweighted_gen_gamma=theta}, \eqref{eq:unweighted_gen_gamma>theta} over all possible $\theta,\tau,\gamma$ is maximized. 
Again, a factor revealing LP shows that \rdt is $0.531$-approximate, finishing the proof of Theorem~\ref{th:unweighted_general}.

\section{Weighted General Graph} \label{sec:weighted_general}

We analyze the approximation ratio of our algorithm on weighted general graphs in this section.
Recall that our algorithm probes pairs $(u,v)$ in descending order of the perturbed weights $(1-g(\min\{y_u,y_v\}))w_{uv}$.
Sometimes it will be helpful to interpret the algorithm as replacing every (potential) edge $(u,v)$ with two directed edges $(u,v)$ and $(v,u)$.
Then we set the perturbed weight of a directed edge $(u,v)$ to be $(1-g(y_u))w_{uv}$ and probe the directed edges in descending order of their perturbed weights. 

Our analysis is structured similarly to the unweighted case.
However, we will see many of the previous properties fail when the analysis goes into details.
We first provide some basic properties of \textsf{Perturbed Greedy} for weighted graphs, which will be the building blocks of our analysis.
Notably, we generalize the gain sharing and compensation rule to edge-weighted graphs.

First we observe the following property that is analogous to Lemma~\ref{lemma:alternating} for the unweighted case, where we substitute \emph{decision times} with \emph{perturbed weights}.
The proof is almost identical to that of Lemma~\ref{lemma:alternating}, thus is omitted.

\begin{lemma}[Weighted Alternating Path]\label{lemma:weighted_alternating}
	If $u$ is matched in $M(\vecy)$, the symmetric difference between $M(\vecy)$ and $M_{u}(\vecy)$ is an alternating path $(u,u_1,u_2,\ldots)$ in which the perturbed weights of edges are decreasing.
	Consequently, vertices $u_1,u_3,\ldots$ are matched earlier\footnote{There is no explicit concept of time. However, since the edges are probed in descending order of their perturbed weights, a vertex being matched earlier means that the new edge has larger perturbed weight than the old one.} in $M(\vecy)$ than in $M_{u}(\vecy)$.
\end{lemma}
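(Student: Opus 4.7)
The plan is to mimic the proof of Lemma~\ref{lemma:alternating} with \emph{perturbed weights} playing the role of decision times; since pairs are probed in descending order of perturbed weight rather than ascending order of decision time, the monotonicity statement simply flips from non-decreasing to strictly decreasing, with ties broken consistently across the two runs.

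Concretely, I would run \textsf{Perturbed Greedy} on $G$ and on $G-\{u\}$ in parallel, processing pairs in descending order of their common perturbed weight. Let $P_t$ denote the symmetric difference of the two partial matchings after the $t$-th pair is processed. I would establish, by induction on $t$, the invariant that $P_t$ is either empty or an alternating path that starts at $u$, with the initial edge lying on the $M(\vecy)$-side, the next on the $M_u(\vecy)$-side, and so on, and with perturbed weights strictly decreasing along the path. The inductive step splits into cases according to how the $t$-th pair $(a,b)$ is handled: added on both sides, rejected on both sides, or added on exactly one side. Only the asymmetric case is nontrivial; the key structural observation is that any vertex whose matched status currently differs between the two runs must sit at the non-$u$ endpoint of $P_{t-1}$ — otherwise the earlier processed edge causing the disagreement would already be in $P_{t-1}$ but unattached to the path, violating the invariant.

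The main obstacle is handling this asymmetric case carefully to rule out the possibility that $(a,b)$ creates a cycle or attaches in the interior of $P_{t-1}$, rather than cleanly extending, shortening, or terminating the path at its current tip. Once the invariant is proved for all $t$, taking $t$ to be the total number of processed pairs yields the lemma: the final $P$ is precisely $M(\vecy)\triangle M_u(\vecy)$, and the strictly decreasing perturbed weights along it are inherited directly from the processing order. The `Consequently' clause is then immediate: for each odd $i$, the $M(\vecy)$-edge $(u_{i-1}, u_i)$ has strictly larger perturbed weight than the $M_u(\vecy)$-edge $(u_i, u_{i+1})$, so $u_i$ is matched via a higher-perturbed-weight — hence earlier-probed — edge in $M(\vecy)$ than in $M_u(\vecy)$.
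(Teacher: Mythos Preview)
Your parallel-simulation argument is correct, but it is not the route the paper takes. The paper omits the proof of the weighted lemma, deferring to that of the unweighted Lemma~\ref{lemma:alternating} (in the appendix), which proceeds by \emph{recursion along the path} rather than by a step-by-step invariant. Concretely, the paper lets $u_1$ be $u$'s match in $M(\vecy)$, lets $u_2$ be $u_1$'s match in $M_u(\vecy)$, observes that $(u_1,u_2)$ must be probed after $(u,u_1)$ (else $u_1$ would already be taken when $(u,u_1)$ is reached in $M(\vecy)$), and then applies the lemma inductively to obtain the tail of the path from $u_2$ onward; the weighted adaptation simply replaces ``later decision time'' by ``smaller perturbed weight.'' Both arguments rest on the same core fact---at any point in the execution, the only vertex other than $u$ whose matched status can differ between the two runs is the current non-$u$ endpoint of the symmetric-difference path---but you make this an explicit invariant verified after every probe, whereas the paper leaves it implicit in the recursive peeling. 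Your route is more self-contained and sidesteps the need to argue that the two executions past $(u_1,u_2)$ coincide with a fresh pair of runs on a smaller instance; the paper's recursion is terser once that identification is granted. One small remark: in your case split the path can only extend or stay fixed, never shorten, since edges are never removed from either matching---so that branch of your case analysis is vacuous, though this does not affect correctness.
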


Following the same definition of victim, we define the gain sharing rules for edge-weighted graphs.
For technical reasons, we set $h(y) = \frac{1}{10}(1-g(y))$ and restrict ourselves to non-decreasing function $g$ such that $g(y)\in [0.4,0.6]$ for all $y\in[0,1]$.
As a consequence, for all $y\in [0,1]$ we have
\begin{equation*}
	m(y) \eqdef \min\{g(y)-h(y), 1-g(y)\} = \min\{1.1 g(y)- 0.1, 1-g(y)\} \ge 0.34 > 5 \cdot \max_x\{h(x)\}.
\end{equation*}

\paragraph{Gain Sharing.} 
Consider the following two-step approach for gain sharing in matching $M(\vecy)$:
\begin{compactitem}
	\item Whenever an edge $(u,v)$ is added to the matching with $u$ active and $v$ passive, let $\alpha_u = g(y_u)\cdot w_{uv}$ and $\alpha_v =  (1-g(y_u))\cdot w_{uv}$.
	\item For each active vertex $z$, if $z$ has a victim $v$, decrease $\alpha_z$ and increase $\alpha_v$ by the same amount such that $\alpha_v \geq h(y_z)\cdot w_{uz}$ afterwards, where $u$ is the vertex matched by $z$.
	More specifically, the amount of compensation is (where $[t]^+ \eqdef \max\{t,0\}$)
	\begin{compactitem}
		\item $h(y_z)\cdot w_{uz}$ if $v$ is unmatched;
		\item at most $\left[h(y_z)\cdot w_{uz} - (g(y_v)-h(y_v))\cdot w_{vx}\right]^+$ if $v$ actively matches some vertex $x$;
		\item $\left[h(y_z)\cdot w_{uz} - (1 - g(y_x))\cdot w_{vx}\right]^+$ if $v$ is passively matched by $x$.
	\end{compactitem}
\end{compactitem}

\medskip

Note that the above compensation step is consistent with the unweighted case: if the victim $v$ is matched, then the amount of compensation is $0$, since $m(y_v) > h(y_z)$; otherwise it is $h(y_z)$.

\paragraph{Lower Bounds of Gains.}
Suppose $u$ is matched with $v$.
By the gain sharing rules, if $u$ is active, then its gain is at least $(g(y_u)- h(y_u))w_{uv}$; if it is passive, then its gain is $(1-g(y_u))w_{uv}$.
Thus the gain of a matched vertex $u$ is lower bounded by $m(y_u) w_{uv}$.
While the amount of compensation a victim $v$ (of $z$) receives depends on the gain of $v$ in the first step, our compensation rule guarantees $\alpha_v \geq h(y_z)\cdot w_{zu}$ afterwards, where $u$ is the perfect partner of $v$ and is matched by $z$ in $M(\vecy)$.

\medskip

We follow the previous framework by fixing an arbitrary pair of perfect partners $(u,v)$, and fixing the ranks of all vertices other than $u,v$ arbitrarily.
We derive lower bounds on $\alpha_u + \alpha_v$ for every $M(y_u,y_v)$, and show that the integration (over $y_u$ and $y_v$) of the lower bound is at least $0.5014$.
For convenience, we assume $w_{uv} = 1$.

In the remaining part of this section, we say that a vertex $u$ is matched \emph{strictly earlier} than $v$ in matching $M(\vecy)$ if $v$ remains unmatched after $u$ is matched; we say that $u$ is matched \emph{earlier} than $v$ if either $u$ is strictly earlier than $v$, or $u$ actively matches $v$.

\begin{fact} \label{fact:matched_ealier}
	Suppose $u$ is matched earlier than $v$ in $M(y_u,y_v)$.
	\begin{compactitem}
		\item[1.] Increasing the rank of $v$ does not change the matching status of $u$.
		\item[2.] If $u$ is active, $\alpha_u \ge g(y_u)-h(y_u)$; if $u$ is passive, $\alpha_u \ge 1-g(y_u)$.
	\end{compactitem}
\end{fact}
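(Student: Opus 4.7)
The plan is to combine the weighted alternating path property (Lemma~\ref{lemma:weighted_alternating}) with the key monotonicity observation that increasing $y_v$ only (weakly) decreases the perturbed weight of every edge incident to $v$, while every other edge's perturbed weight stays fixed. This drives both parts of the statement.

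For Part~1, I split on the definition of \emph{matched earlier}. If $u$ actively matches $v$ in $M(y_u,y_v)$, then the perturbed weight of $(u,v)$ equals $(1-g(y_u))w_{uv}$ and is independent of $y_v$, while every other $v$-edge is weakly delayed as $y_v$ grows; so when the rank is raised to $y_v' \ge y_v$, the edge $(u,v)$ keeps its slot in the probing order and every competing $v$-edge is pushed no earlier, meaning $v$ is still unmatched when $(u,v)$ is probed and the edge is again added. If instead $u$ is matched strictly earlier than $v$, let $p$ denote the perturbed weight of $u$'s matched edge $(u,w)$, where $w \ne v$. Since $v$ is unmatched when $(u,w)$ is probed in the original execution, every $v$-edge that preceded $(u,w)$ was a no-op whose other endpoint had already been matched by some non-$v$ edge. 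When $y_v$ is raised, the non-$v$ edges retain their perturbed weights and mutual order while every $v$-edge is delayed, so the set of $v$-edges preceding $(u,w)$ in the new order is a weak subset. An induction along the new probing order shows that each surviving $v$-edge still finds its other endpoint matched by the same non-$v$ edge (hence remains a no-op), so the non-$v$ matching at threshold $p$ is unchanged and $v$ is still unmatched; therefore $u$ is again matched to $w$. Since greedy never unmatches a vertex, $u$'s matching status is preserved thereafter.

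For Part~2, combine the gain-sharing rule with a perturbed-weight comparison. The compensation step subtracts at most $h(y_u)w_{uw}$ from an active $u$'s share, leaving $\alpha_u \ge (g(y_u)-h(y_u))w_{uw}$; passive vertices never send compensation, so a passive $u$ matched by $z$ has $\alpha_u = (1-g(y_z))w_{uz}$. Since $u$ is matched earlier than $v$, the edge $u$ is matched on is probed no later than $(u,v)$, so its perturbed weight is at least $(1-g(\min\{y_u,y_v\}))w_{uv} = 1-g(\min\{y_u,y_v\})$. Using the monotonicity $g(\min\{y_u,y_v\}) \le g(y_u)$, the active case gives $(1-g(y_u))w_{uw} \ge 1-g(y_u)$, hence $w_{uw}\ge 1$ and $\alpha_u \ge g(y_u)-h(y_u)$; the passive case gives $(1-g(y_z))w_{uz} \ge 1-g(y_u)$, which is exactly $\alpha_u \ge 1-g(y_u)$.

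The main obstacle is the discrete reordering argument in the second case of Part~1: the induction must justify that no $v$-edge that remains in the prefix up to $p$ suddenly becomes active after the reordering. This hinges on the hypothesis that $v$ is unmatched throughout the prefix in the original execution, which is precisely what makes every preceding $v$-edge a no-op to begin with.
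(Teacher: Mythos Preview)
Your proof is correct and follows essentially the same approach as the paper: for Part~1 both argue that raising $y_v$ can only delay $v$-edges, so the probing prefix up to $u$'s matched edge is unaffected (the paper compresses this into one sentence while you spell out the induction and split on whether $u$ matches $v$, which is unnecessary but harmless); for Part~2 both compare the perturbed weight of $u$'s matched edge to that of $(u,v)$ to extract $w_{uz}\ge 1$ in the active case and $(1-g(y_z))w_{uz}\ge 1-g(y_u)$ in the passive case. The alternating-path lemma you mention in the plan is not actually used or needed.
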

\begin{proof}
	For the first statement, increasing the rank of $v$ can not increase the perturbed weights of edges adjacent to $v$.
	Thus before $u$ is matched, all probes have the same results before and after the increment of $y_v$, which means $u$ is matched to the same neighbor.
	
	For the second statement, suppose $u$ is matched with some vertex $z$ (which can be $v$).
	If $u$ is active, then $(1-g(y_u)) w_{uz} \ge (1-g(y_u)) w_{uv}$ since edge $(u,z)$ is probed no later than $(u,v)$.
	Hence $\alpha_u \geq (g(y_u)-h(y_u)) w_{uz} \geq g(y_u)-h(y_u)$.
	When $u$ is passive, the perturbed weight of $(u,z)$ equals $(1-g(y_z))w_{uz}$, which is at least the perturbed weight of $(u,v)$. Thus we have $\alpha_u \ge (1-g(y_z))w_{uz} \ge 1-g(y_u)$.
\end{proof}

\begin{lemma} \label{lemma:no-compen}
	Suppose $u$ is active and matched earlier than $v$ in $M(y_u,y_v)$.
	If $v$ is matched with some $x$ such that $w_{vx}\geq \frac{1}{2}$ in $M_{u}(y_u,y_v)$, then we have $\alpha_u+\alpha_v \ge g(y_u)$ in $M(y_u,y_v)$.
\end{lemma}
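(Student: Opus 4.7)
The plan is to mirror the structure of the unweighted Lemma~\ref{lemma:unweighted_gen_no_compen}, but with two new ingredients: the monotone decrease of perturbed weights along the alternating path $M\triangle M_u$ from Lemma~\ref{lemma:weighted_alternating}, together with the hypothesis $w_{vx}\ge 1/2$, which force every edge on the path to carry perturbed weight at least $(1-g(1))\cdot 1/2\ge 0.2$. The trivial sub-case where $u$ actively matches $v$ gives $\alpha_u+\alpha_v = w_{uv} = 1 \ge g(y_u)$ immediately, since $g\le 0.6$. So I assume $u$ matches some $z'\ne v$; a perturbed-weight comparison between $(u,z')$ and $(u,v)$, both dominated by the factor $1-g(y_u)$ because $u$ is active and matched earlier than $v$, yields $w_{uz'}\ge 1$, and hence $\alpha_u\ge (g(y_u)-h(y_u))w_{uz'}\ge g(y_u)-h(y_u)$ regardless of any compensation $u$ sends.

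Next I would locate $v$ on the alternating path starting at $u=u_0$. If $v$ is off the path, then $v$ still matches $x$ in $M$, and $\alpha_v\ge m(y_v)w_{vx}\ge 0.17$. If $v$ lies on the path at a non-endpoint position, then in $M$ it matches some neighbour $y$; the edge $(v,y)$ inherits perturbed weight at least $0.2$ from the decreasing chain, and dividing by the factor $1-g(\min\{y_v,y_y\})\le 0.6$ gives $w_{vy}\ge 1/3$, hence $\alpha_v\ge m(y_v)/3\ge 0.11$. The possibility that $v$ is an odd-length endpoint of the path is ruled out, since such a $v$ would be unmatched in $M_u$, contradicting the hypothesis that $v$ matches $x$ in $M_u$. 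In each of these sub-cases $\alpha_v$ comfortably exceeds $h(y_u)\le 0.06$, so combining with the earlier bound on $\alpha_u$ yields $\alpha_u+\alpha_v\ge g(y_u)$.

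The hard part will be the remaining sub-case, where $v$ is the endpoint of an even-length path and therefore unmatched in $M$, so $\alpha_v=0$. The burden shifts entirely onto $\alpha_u$, and the argument must show that $u$'s possible compensation cannot drop $\alpha_u$ below $g(y_u)$. If the path has length $2$, then $u$'s only possible victim would coincide with $v$ itself, which is impossible since perfect partners are unique; so $u$ sends no compensation and $\alpha_u = g(y_u)w_{uz'}\ge g(y_u)$. Otherwise the path has length at least $4$, the potential victim $v'=u_2$ is strictly interior and matches $u_3$ in $M$, and the perturbed-weight floor again gives $w_{v'u_3}\ge 1/3$, so the existing gain $G_{v'}\ge m(y_{v'})w_{v'u_3}\ge 0.113$. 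I would then branch on the compensation rule: if $G_{v'}\ge h(y_u)w_{uz'}$ then no compensation is sent and $\alpha_u = g(y_u)w_{uz'}\ge g(y_u)$; otherwise $\alpha_u = (g(y_u)-h(y_u))w_{uz'} + G_{v'}$, and the target $\alpha_u\ge g(y_u)$ reduces, using $w_{uz'}\ge 1$, to $G_{v'}\ge h(y_u)$, which in turn follows from the paper's calibration $m(y)\ge 5\max_x h(x)$. The delicate step is precisely this accounting: verifying that the matched-victim branch of the compensation rule, combined with the perturbed-weight floor along the path, leaves $u$ with enough gain to carry the full bound unaided by $v$.
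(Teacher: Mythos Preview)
Your approach is essentially the paper's, but there is a gap in your case analysis. The claim in your first paragraph that ``every edge on the path carries perturbed weight at least $0.2$'' is only justified for edges that lie at or \emph{before} $(v,x)$ along the path, since the perturbed weights decrease from $u$ outward and $(v,x)$ need not be the last edge. This matters in your non-endpoint sub-case: when $v$ sits at an even index $u_{2j}$ and is not the endpoint, its $M$-partner is $u_{2j+1}$, and the edge $(u_{2j},u_{2j+1})$ comes \emph{after} $(v,x)=(u_{2j-1},u_{2j})$ on the path. Its perturbed weight, and hence its true weight, may be arbitrarily small, so the bound $\alpha_v\ge m(y_v)/3$ is unjustified there.

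The paper sidesteps this by never trying to lower-bound $\alpha_v$ once $v$'s match has changed. It branches only on whether $u$ owes positive compensation. If not, $\alpha_u\ge g(y_u)w_{uz'}\ge g(y_u)$ directly. If so, the victim $z^*$ is necessarily $u_2$, and one checks (exactly as in your length-$2$ argument, using uniqueness of perfect partners) that $(v,x)$ cannot be the first $M_u$-edge $(u_1,u_2)$; hence $z^*=u_2$ is matched in $M$ via the edge $(u_2,u_3)$, which \emph{does} appear no later than $(v,x)$ and so inherits the $0.2$ floor. Your own case-(d) computation then yields $\alpha_u\ge g(y_u)$ unaided by $v$. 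In short, the reasoning you give for the even-endpoint case already covers the troublesome even non-endpoint case as well; you just need to route the argument through $z^*$'s gain rather than $v$'s.
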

\begin{proof}
	By Fact~\ref{fact:matched_ealier}, we have $\alpha_u \geq g(y_u)-h(y_u)$.
	If $v$ remains matched with $x$ in $M(y_u,y_v)$, then the lemma holds since $\alpha_u+\alpha_v \geq g(y_u) - h(y_u) + \frac{1}{2} m(y_v) \ge g(y_u)$ (recall that $\min_{y}\{m(y)\} \geq 5\cdot \max_{x}\{h(x)\}$.).
	If $u$ does not have a victim, or only need to send $0$ compensation to its victim, then we are also done.
	
	Otherwise, by Lemma~\ref{lemma:weighted_alternating}, the symmetric different between $M(y_u,y_v)$ and $M_{u}(y_u,y_v)$ is an alternating path starting from $u$ that contains $(v,x)$.
	Let $z$ be matched by $u$ in $M(y_u,y_v)$ and $z^*$ be the victim of $u$.
	Since the edge $z^*$ matches in $M(y_u,y_v)$ appears no later than $(v,x)$ in the alternating path, by Lemma~\ref{lemma:weighted_alternating}, the perturbed weight of this edge is at least $(1-g(\min\{y_v,y_x\}))\frac{1}{2}\geq 0.2$ (recall that $g(\cdot)\in [0.4,0.6]$).
	Hence the gain of $z^*$ in $M(y_u,y_v)$ before the compensation step is at least $m(y_{z^*})\cdot \frac{0.2}{1-g(0)} = m(y_{z^*})\cdot \frac{1}{3}$.
	Consequently the gain of $u$ after the compensation step is $\alpha_u \ge (g(y_u)-h(y_u))w_{uz} + m(y_{z^*})\cdot \frac{1}{3}\ge g(y_u)$.
\end{proof}

We regard the above lemma as a weighted generalization of Lemma~\ref{lemma:unweighted_gen_no_compen}, which says when $u$ is active, it need not send compensation if $v$ matches an edge that is not too bad when $u$ is removed.

Now we are ready to derive lower bounds on $\alpha_u + \alpha_v$ for every $M(y_u,y_v)$.
Similar as before, we divide our analysis into two cases depending on whether $(u,v)\in M(1,1)$.

\subsection{Symmetric Case: $(u,v) \in M(1,1)$}

Since $u,v$ are matched together in $M(1,1)$, we define $\theta$ to be the transition rank such that $u$ matches $v$ in $M(y_u,1)$ when $y_u > \theta$; matches a vertex other than $v$ in $M(y_u,1)$ when $y_u\in (0,\theta)$.
We define $\lambda$ analogously for $v$.
Assume w.l.o.g. that $\theta \geq \lambda$.

First, we show that for $y_u > \theta$ and $y_v > \lambda$, $u,v$ are matched together in $M(y_u, y_v)$.
Suppose otherwise, and assume $u$ is matched strictly earlier.
By Fact~\ref{fact:matched_ealier}, if we increase $y_v$ to $1$, the matching status of $u$ should not be affected. 
Hence $u$ is not matched with $v$ in $M(y_u,1)$, which contradicts the definition of $\theta$ (recall that $y_u > \theta$).
The same argument implies that $v$ is not matched strictly earlier. 
Hence $\alpha_u + \alpha_v = 1$ when $y_u > \theta$ and $y_v > \lambda$ (recall that $u,v$ are perfect partners).

Under the same logic, when $y_u < \theta$ and $y_v > \lambda$, $v$ is not matched strictly earlier than $u$.
Otherwise $v$ is also not matched with $u$ in $M(1,y_v)$, which contradicts the definition of $\lambda$.
Hence when $y_u < \theta$ and $y_v > \lambda$, $u$ is active and matched earlier than $v$.
By Lemma~\ref{fact:matched_ealier} we have $\alpha_u + \alpha_v \geq g(y_u)-h(y_u)$\footnote{It is possible that $u$ is active and matched strictly earlier than $v$ even when $y_u > y_v$. This is a key difference between the weighted and unweighted case: smaller rank does not necessarily imply earlier decision time.}.
Symmetrically, we have $\alpha_u + \alpha_v \geq g(y_v)-h(y_v)$ when $y_u > \theta$ and $y_v < \lambda$.

Finally, we consider the case when $y_u < \theta$ and $y_v < \lambda$.
We show that in this case we must have $\alpha_u + \alpha_v \geq \min\{g(y_u), g(y_v)\}$.
Suppose $u$ is matched earlier than $v$.
Then $u$ must be active, as otherwise $u$ is also passive in $M(y_u,1)$.
By definition of $\lambda$, we know that in $M_{u}(y_u,y_v)$, the edge $v$ matches has weight at least $w_{uv}=1$.
Applying Lemma~\ref{lemma:no-compen}, we have $\alpha_u + \alpha_v \geq g(y_u)$.
Similarly, when $v$ is matched earlier than $u$, we have $\alpha_u+\alpha_v \geq g(y_v)$.
Therefore, $\alpha_u + \alpha_v \ge \min\{g(y_u),g(y_v)\}$.
Given that $g$ is non-decreasing, $\alpha_u + \alpha_v \geq g(y_u)$ when $y_u < y_v$ and $\alpha_u + \alpha_v \geq g(y_v)$ when $y_v < y_u$.

In summary, we have (refer to Figure~\ref{fig:weighted_general_uvmatch})
\begin{equation} \label{eq:weighted_general_uvmatch}
	\begin{split}
		\expect{y_u,y_v}{\alpha_u + \alpha_v} \ge & (1-\lambda) \int_0^\theta \Big( g(y_u)-h(y_u) \Big)dy_u + (1-\theta) \int_0^\lambda \Big( g(y_v)-h(y_v) \Big) dy_v \\
		& + \int_0^\lambda (\lambda-y_u)g(y_u) dy_u + \int_0^\lambda (\theta-y_v)g(y_v) dy_v + (1-\theta)(1-\lambda).
	\end{split}
\end{equation}

\vspace*{-10pt}
\begin{figure}[H]
	\centering
	\begin{subfigure}[H]{.3\textwidth}
		\centering
		\includegraphics*[width=0.9\textwidth]{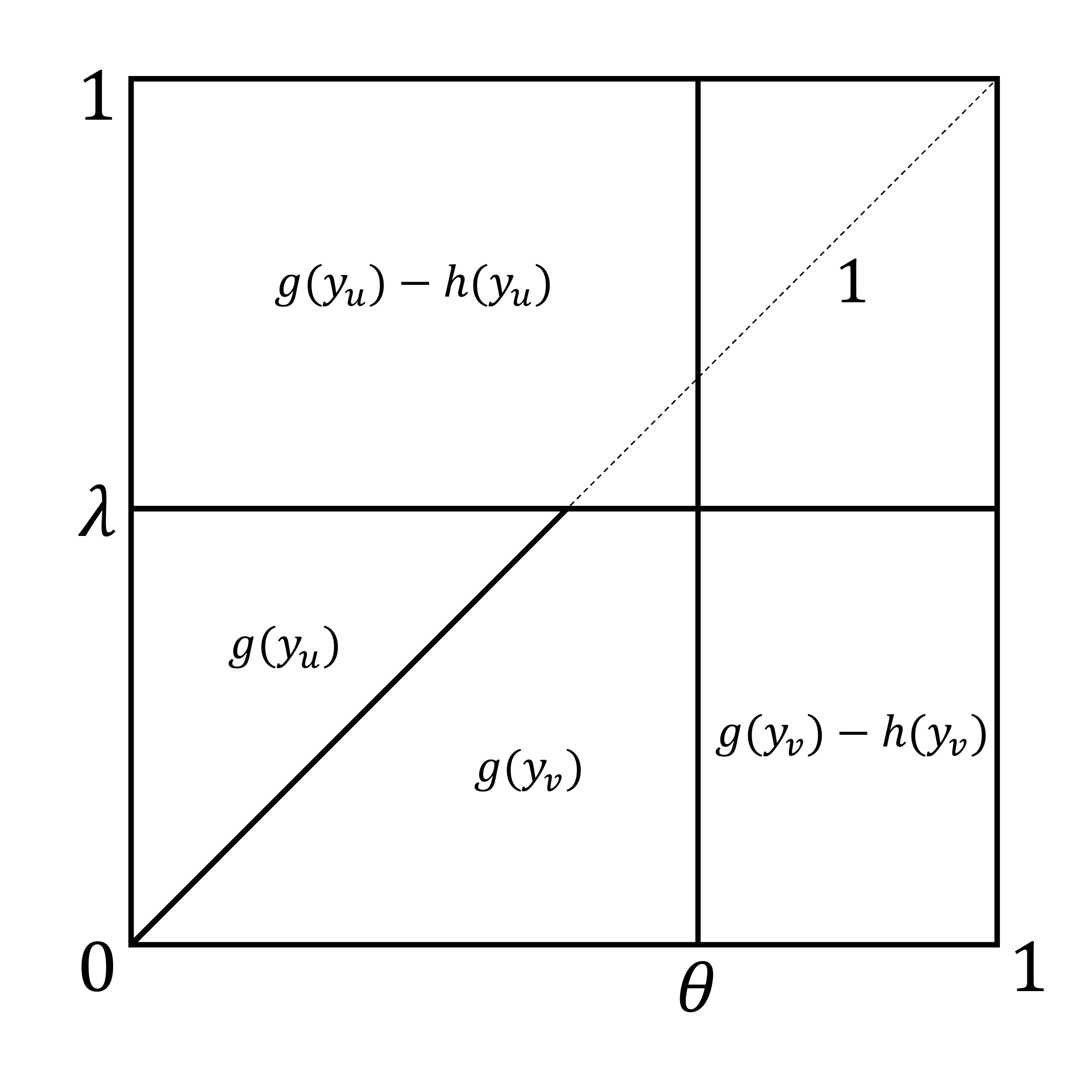}
		\vspace*{-10pt}
		\caption{$(u,v) \in M(1,1)$.}
		\label{fig:weighted_general_uvmatch}
	\end{subfigure}
	\begin{subfigure}[H]{.3\textwidth}
		\centering
		\includegraphics[width=0.9\linewidth]{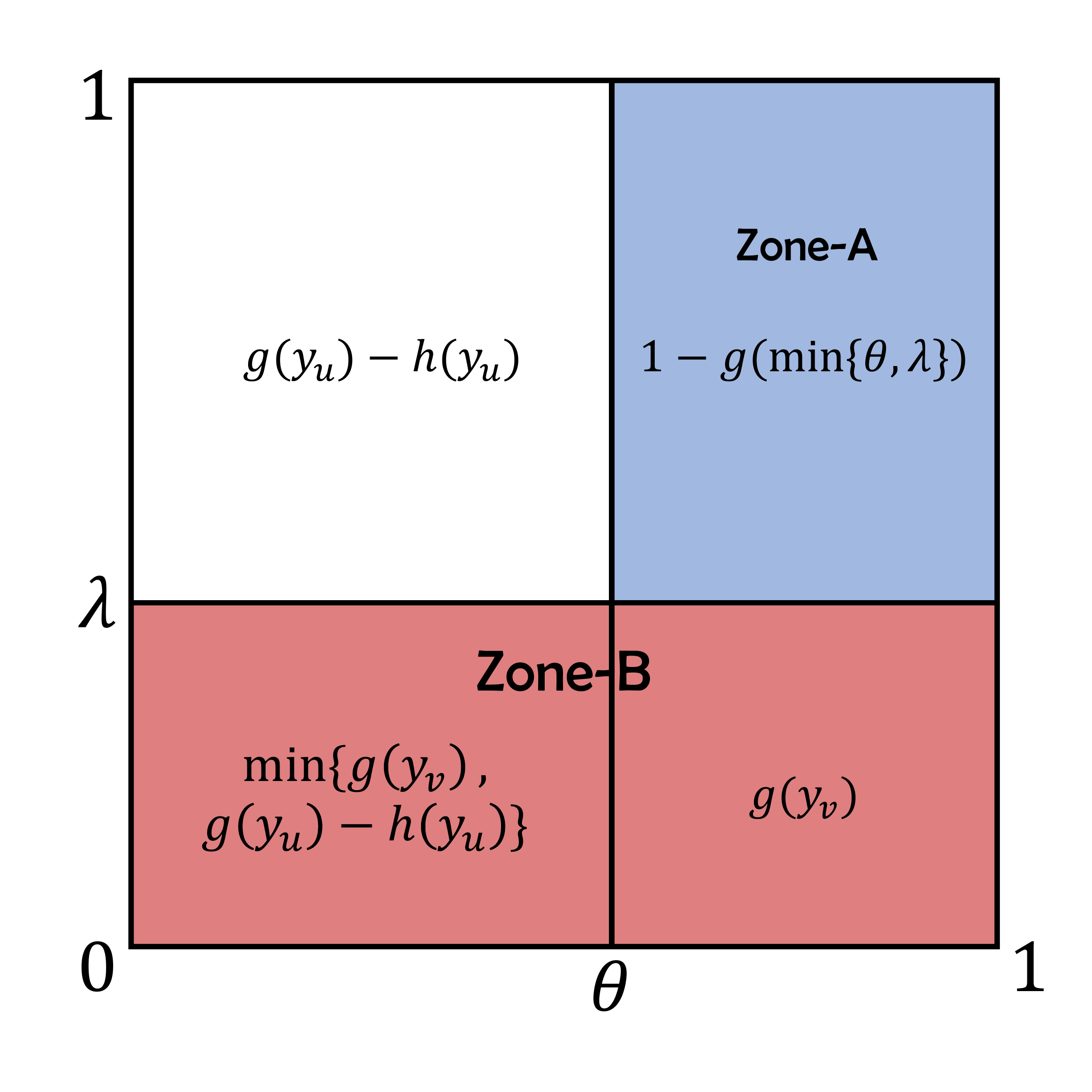}
		\vspace*{-10pt}
		\caption{Basic lower bounds}
		\label{fig:weighted_basic}
	\end{subfigure}%
	\begin{subfigure}[H]{.3\textwidth}
		\centering
		\includegraphics[width=0.9\linewidth]{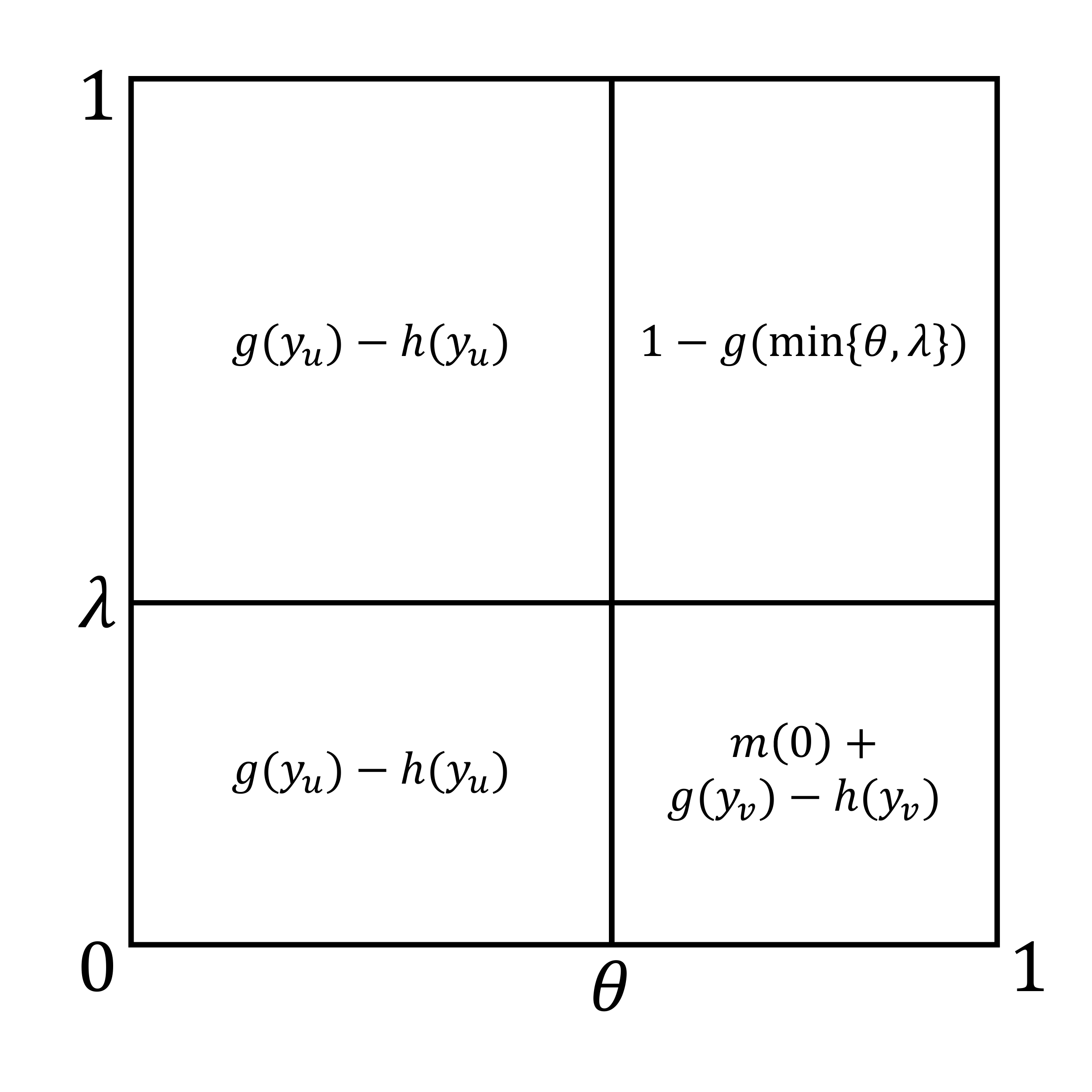}
		\vspace*{-10pt}
		\caption{$u$ strictly earlier in $M_{z}(1,1)$}
		\label{fig:we_ge_u_better}
	\end{subfigure}%
	\caption{Lower bounds on $\alpha_u + \alpha_v$.}
	\label{fig:we_ge_1}
\end{figure}

\subsection{Asymmetric Case: $(u,v)\notin M(1,1)$}

Without loss of generality, suppose $u$ is matched strictly earlier than $v$ in $M(1,1)$ by $z$.
Let $\theta$ be the transition rank such that $u$ is active in $M(y_u,1)$ when $y_u < \theta$ and passive when $y_u > \theta$.
Note that when $y_u > \theta$, $u$ is always passively matched by $z$ in $M(y_u,1)$.
Let $\lambda$ be the transition rank such that $v$ is matched earlier than $u$ in $M(1,y_v)$ when $y_v < \lambda$ and later than $u$ when $y_v > \lambda$.

Notably, $\lambda=\theta=y_z$ in the unweighted case, while all three parameters might differ for weighted graphs.
For example, suppose $u$ has another neighbor $x$ and $\theta$ is the critical rank such that the perturbed weight of edge $(u,x)$ beats the perturbed weight of the edge $x$ is matched with in $M(1,1)$.

\medskip

First of all, $u$ is matched by $z$ when $y_u > \theta$ and $y_v > \lambda$, which means that $\alpha_u = (1-g(y_z))w_{uz}$.
Since edge $(u,z)$ is probed earlier than $(u,v)$ when $y_u > \theta$, we have $(1-g(y_z))w_{uz} \ge (1-g(\theta))w_{uv}$. Similarly $(1-g(y_z))w_{uz} \ge (1-g(\lambda))w_{uv}$.
Thus $\alpha_u \geq 1-g(\min\{\theta,\lambda\})$ when $y_u > \theta$ and $y_v > \lambda$.

When $y_u < \theta$ and $y_v > \lambda$, we know that $u$ is matched earlier than $v$, as otherwise $v$ would be matched strictly earlier than $u$ in $M(1,y_v)$, violating the definition of $\lambda$. Moreover, $u$ must be active, as otherwise it is also passive in $M(y_u,1)$, violating the definition of $\theta$.
Then by Fact~\ref{fact:matched_ealier}, we have $\alpha_u \geq g(y_u)-h(y_u)$.
Similarly, when $y_u > \theta$ and $y_v < \lambda$, $v$ must be active and matched earlier than $u$.
Observe that $w_{uz} \ge \frac{1-g(1)}{1-g(y_z)} > \frac{1}{2}$. Applying Lemma~\ref{lemma:no-compen}, we have $\alpha_u+\alpha_v \ge g(y_v)$.

Finally, when $y_u <\theta$ and $y_v < \lambda$, the vertex matched earlier is active.
Moreover, since $u$ is matched strictly earlier than $v$ in $M(1,1)$, the edge $u$ matches in every $M_{v}(y_u,1)$ has weight at least $w_{uz}\geq \frac{1}{2}$.
Hence by Lemma~\ref{lemma:no-compen}, for all $y_u <\theta$ and $y_v < \lambda$, if $v$ is matched earlier then we have $\alpha_u + \alpha_v \geq g(y_v)$.
By Fact~\ref{fact:matched_ealier}, if $u$ is matched earlier we have $\alpha_u \ge g(y_u)-h(y_u)$. Thus, $\alpha_u + \alpha_v \ge \min \{g(y_u) -h(y_u), g(y_v)\}$.
Note that symmetrically, if we can show the edge $v$ matches in every $M_{u}(1,y_v)$, where $y_v < \lambda$, has weight at least $\frac{1}{2}$, then by applying Lemma~\ref{lemma:no-compen} we can improve the lower bound to $\alpha_u + \alpha_v \ge \min \{g(y_u), g(y_v)\}$.

In summary, we have the following lower bounds that serve as our basic gains (refer to Fig.~\ref{fig:weighted_basic}).
\begin{enumerate}
	\item[(L1)]  $\alpha_u \ge 1-g(\min\{\theta,\lambda\})$ when $y_u > \theta$ and $y_v > \lambda$;
	\item[(L2)] $\alpha_u  \geq g(y_v) - h(y_u)$ when $y_u < \theta$ and $y_v > \lambda$;
	\item[(L3)] $\alpha_u+\alpha_v \ge g(y_v)$ when $y_u > \theta$ and $y_v < \lambda$;
	\item[(L4)] $\alpha_u + \alpha_v \ge \min \{g(y_u) -h(y_u), g(y_v)\}$ when $y_u <\theta$ and $y_v < \lambda$.
\end{enumerate}

\paragraph{Extra Gains.}
Similar to our analysis for the unweighted case, we define \textsf{Zone-A} to be the matchings $M(y_u,y_v)$ when $y_u>\theta$ and $y_v>\lambda$ (where lower bound (L1) is applied), and \textsf{Zone-B} to be the matchings $M(y_u,y_v)$ when $y_v<\lambda$ (where lower bounds (L3) (L4) are applied).
In the following, we show that better lower bounds can be obtained for at least one of these bounds.
Roughly speaking, (like the unweighted case) if only one of $u,v$ is matched in \textsf{Zone-B}, then we should be able to recover some extra gain in \textsf{Zone-A}, e.g., the compensation received by $v$.
We continue our analysis according to the matching status of $u$ and $v$ in $M_{z}(1,1)$, i.e., when $z$ is removed from the graph.

\subsubsection{Case 1: $(u,v) \notin M_{z}(1,1)$}
In this case, at least one of $u,v$ is passively matched by other vertices in $M_{z}(1,1)$.
We first consider the case when $u$ is matched strictly earlier.
In such case, we show that $u$ has gain at least $m(y_u)$ in \textsf{Zone-B}, as it has a ``backup'' neighbor other than $z,v$.
The following lemma is a weighted version of Lemma~\ref{lemma:uw_ge_uearlier}, and the proof is similar. We formalize it in a general way so that we can also apply it in later analysis.

\begin{lemma} \label{lemma:weighted_backup}
	When $y_u > \theta$ and $y_v < \lambda$, if $u$ is matched strictly earlier than $v$ in $M_{z}(y_u, 1)$, we have $\alpha_u \ge m(y_u)$ in $M(y_u,y_v)$.
\end{lemma}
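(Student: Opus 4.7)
The plan is to mirror the proof of Lemma~\ref{lemma:uw_ge_uearlier}, using the weighted alternating-path lemma (Lemma~\ref{lemma:weighted_alternating}) in place of its unweighted counterpart, and then extract the required weight bound via Fact~\ref{fact:matched_ealier}. Concretely, the target is to show that $u$ is matched \emph{earlier} than $v$ in $M(y_u,y_v)$: once this is established, Fact~\ref{fact:matched_ealier} immediately yields $\alpha_u \geq m(y_u)$, regardless of whether $u$ is active or passive.

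First I would collect the structural preliminaries. Since $y_u > \theta$, Lemma~\ref{lemma:monotonicity} combined with the definition of $\theta$ gives $M(y_u,1) = M(1,1)$; in particular $u$ is passively matched by $z$ strictly earlier than $v$ in $M(y_u,1)$. Because $v$ is unmatched at the moment $u$ gets matched, deleting $v$ does not disturb $u$'s match with $z$, so $z$ also matches $u$ in $M_v(y_u,y_v) = M_v(y_u,1)$. Furthermore, the basic lower bound (L3) already guarantees that $v$ is matched in $M(y_u,y_v)$ whenever $y_v < \lambda$.

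The core step is a contradiction. Suppose $u$ is not matched earlier than $v$ in $M(y_u,y_v)$, i.e., either $u$ is unmatched or it is matched via an edge of perturbed weight strictly less than that of $(u,v)$. Since $v$ is matched in $M$ while $u$ is matched in $M_v$ (by $z$), Lemma~\ref{lemma:weighted_alternating} supplies an alternating path in $M(y_u,y_v) \oplus M_v(y_u,y_v)$ starting at $v$ with strictly decreasing perturbed weights, on which $u$ appears with the $M_v$-edge $(z,u)$ as the edge incident to $u$. I would then simultaneously delete both $v$ and $z$ and analyze $M_{v,z}(y_u,y_v)$: walking through the probes in decreasing perturbed-weight order, neither $v$ nor $z$ can trigger any earlier match, so the interior vertices of the alternating path end up with exactly their $M_v$-matches; the probe $(z,u)$ is skipped, and the contradiction hypothesis forces the remaining probes touching $u$ to find the other endpoint already occupied. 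Hence $u$ is unmatched in $M_{v,z}(y_u,y_v)$. On the other hand, the hypothesis that $u$ is matched strictly earlier than $v$ in $M_z(y_u,1)$ means that $u$ matches some vertex $x \neq z$ while $v$ is still unmatched, so further removing $v$ cannot disturb $u$; thus $u$ is matched in $M_{v,z}(y_u,1) = M_{v,z}(y_u,y_v)$, a contradiction.

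The main obstacle I anticipate is the precise $M_{v,z}$-tracking. In the unweighted proof, the monotone decision times along the alternating path make it almost syntactic that removing $v$ and $z$ reverses the matchings only on the segment between them. In the weighted setting one must replace ``decision time'' with ``perturbed-weight threshold'' and verify carefully that the chain of probes restricted to the alternating path proceeds identically before and after $z$ is deleted, relying on the strictly decreasing perturbed weights to rule out any probe that could re-match a path-interior vertex differently. Once this bookkeeping is carried out, the rest of the argument carries over from the unweighted version essentially verbatim, and invoking Fact~\ref{fact:matched_ealier} concludes $\alpha_u \geq m(y_u)$.
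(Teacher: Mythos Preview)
Your plan has a genuine gap: the intermediate target ``$u$ is matched earlier than $v$ in $M(y_u,y_v)$'' is \emph{false} in the region $y_u>\theta$, $y_v<\lambda$. The paper's own basic analysis (the paragraph establishing (L3)) shows that here $v$ is active and matched earlier than $u$: if $u$ were matched strictly earlier one could first raise $y_v$ to $1$ (Fact~\ref{fact:matched_ealier}) to see $u$ is matched by $z$, then raise $y_u$ to $1$ (monotonicity for a passive vertex) to contradict the definition of $\lambda$. Consequently you cannot invoke Fact~\ref{fact:matched_ealier} for $u$ in $M(y_u,y_v)$, and your contradiction argument would have to derive an absurdity from a hypothesis that always holds. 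The specific failure in the argument is the claim that $u$ sits at the end of the alternating path $M\oplus M_v$ (and hence is unmatched in $M_{v,z}$): your characterisation ``unmatched, or matched via an edge of perturbed weight below that of $(u,v)$'' is not equivalent to ``$u$ not earlier than $v$'', and in the typical case $u$ appears in the interior of the path (at an even position, right after $z$), matched in $M$ to some edge of positive weight.

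The paper's proof does not aim for ``$u$ earlier than $v$'' at all. Instead it compares $u$'s match in $M(y_u,y_v)$ with its match in $M_v(y_u,y_v)$ (which is $(z,u)$) and splits into three cases: unchanged, earlier, or later. The first two are handled by a direct weight computation. In the third case $u$ lies at an even position on the alternating path with $z$ immediately preceding it; deleting $v$ and $z$ simultaneously leaves $u$'s match exactly as in $M(y_u,y_v)$. But this double deletion also equals removing $v$ from $M_z(y_u,1)$, where by hypothesis $u$ is matched strictly earlier than $v$ and hence unaffected. So the edge $u$ gets in $M(y_u,y_v)$ is the \emph{same} edge it gets in $M_z(y_u,1)$, and Fact~\ref{fact:matched_ealier} applied \emph{there} (where $u$ genuinely is earlier) yields $\alpha_u\ge m(y_u)$. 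The missing idea in your plan is precisely this transport: you should be lower-bounding the weight of $u$'s edge by identifying it with the edge from $M_z(y_u,1)$, not trying to reorder $u$ and $v$ in $M(y_u,y_v)$.
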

\begin{proof}
	First, by definition $u$ is passively matched by $z$ in $M_{v}(y_u,y_v)$. Now consider $M(y_u,y_v)$.
	
	If the insertion of $v$ does not change the matching status of $u$, i.e., $u$ remains passively matched with $z$, then we have $\alpha_u \geq 1-g(\min\{\theta,\lambda\})\geq m(y_u)$.
	
	If $u$ is matched even earlier after the insertion, then the edge $(u,x)$ vertex $u$ matches has perturbed weight larger than $1-g(\min\{\theta,\lambda\})$.
	Hence if $u$ is passive, $\alpha_u \geq 1-g(\min\{\theta,\lambda\}) \geq m(y_u)$; if $u$ is active, $\alpha_u \geq (g(y_u)-h(y_u))\frac{1-g(\min\{\theta,\lambda\})}{1-g(y_u)} \geq g(y_u)-h(y_u)\ge m(y_u)$.
	
	Otherwise $u$ appears in the alternating path triggered by the insertion of $v$ as a vertex $v_{2i}$, which is matched later after the insertion (refer to Lemma~\ref{lemma:weighted_alternating}).
	Note that in this case, the vertex $v_{2i-1}$ right before $v_{2i} = u$ must be $z$.
	Hence the matching status of $u$ is not changed if we remove both $v$ and $z$ simultaneously in $M(y_u,y_v)$ (following the same argument as in the proof of Lemma~\ref{lemma:uw_ge_uearlier}).
	On the other hand, by Fact~\ref{fact:matched_ealier}, we have $\alpha_u \geq m(y_u)$ in $M_{z}(y_u,1)$.
	Moreover, the matching status of $u$ is not changed if we further remove $v$ in $M_{z}(y_u,1)$.
	Thus we have $\alpha_u \geq m(y_u)$ in $M(y_u,y_v)$.
\end{proof}

Note that if $u$ is matched strictly earlier than $v$ in $M_{z}(1,1)$, then by above lemma, for all $y_v < \lambda$, the gain of $u$ in $M(1,y_v)$ is at least $m(1)$.
Now consider decreasing the rank of $u$.
By Lemma~\ref{lemma:monotonicity}, when $u$ is passive, the gain of $u$ does not change until it becomes active, which gives $\alpha_u \ge 1-g(y_u)$; when $u$ is active, decreasing $u$'s rank would not decrease the edge weight it matches, and hence $\alpha_u \ge g(y_u)-h(y_u)$. 
For ease of analysis, we choose function $g$ such that $m$ achieves its minimum at $m(0)$.
To sum up, we have $\alpha_u \geq m(0)$ in $M(y_u,y_v)$ for all $y_v < \lambda$.

Therefore, we improve (L3) to $\alpha_u+\alpha_v \geq m(0)+g(y_v)-h(y_v)$ and (L4) to $\alpha_u + \alpha_v \ge \min \{g(y_u) -h(y_u), m(0)+g(y_v)-h(y_v)\}$.
Note that by restricting $g(y)\in [0.4,0.6]$, $m(0)+g(y_v)-h(y_v)$ is larger than $g(y_u) - h(y_u)$ for all $y_u,y_v$.
Consequently, we have the following (refer to Figure~\ref{fig:we_ge_u_better}). 
\begin{equation} \label{eq:we_ge_ubetter}
	\begin{split}
		\expect{y_u,y_v}{\alpha_u+\alpha_v}\ge & \int_0^\theta \Big( g(y_u)-h(y_u) \Big) dy_u + (1-\theta) \int_0^\lambda \Big(g(y_v) - h(y_v)\Big) dy_v \\
		& + (1-\theta)(1-\lambda)(1-g(\min\{\lambda,\theta\})) + (1-\theta)\lambda\cdot m(0). 
	\end{split}
\end{equation}

It remains to consider the case when $v$ is matched strictly earlier than $u$ in $M_{z}(1,1)$.
As we will show later, this case can actually be regarded as a ``better'' case of $(u,v)\in M_z(1,1)$, and thus we defer its analysis to the next subsection (see Remark~\ref{remar:vbetter}).

\subsubsection{Case 2: $(u,v) \in M_{z}(1,1)$} \label{subsec:weighted_case_2_2}

Finally, we consider the case when $u,v$ match each other in $M_{z}(1,1)$.
By definition, $v$ is the victim of $z$ in $M(1,1)$, and hence
\begin{equation*}
	\textstyle \alpha_v \geq h(y_z)w_{uz} = \frac{1}{10}(1-g(y_z))w_{uz} \geq \frac{1}{10}(1-g(\min\{\theta,\lambda\}))=h(\min\{\theta,\lambda\}).
\end{equation*}

We remark that this is the major reason for restricting $h=\frac{1}{10}(1-g)$, as otherwise we do not have an immediate connection between $h(y_z) w_{uz}$ and $h(\min\{\theta,\lambda\})$.
Next, we show that $v$ receives this compensation in (part of) \textsf{Zone-A}.

\begin{lemma}[Compensation] \label{lemma:compensation}
	For all $y_u>\theta$ and $y_v >\lambda$, if $u$ matches $v$ in $M_{z}(y_u,1)$, then we have $\alpha_v \geq h(\min\{\theta,\lambda\})$ in $M(y_u,y_v)$.
\end{lemma}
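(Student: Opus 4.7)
The plan is to reduce the lemma to showing that $v$ is a victim of $z$ in $M(y_u,y_v)$, and then invoke the compensation rule. If $v$ is such a victim, the rule guarantees $\alpha_v\geq h(y_z)\,w_{uz}$. Since $u$ is passively matched by $z$ rather than actively matched by $v$ both when $y_u>\theta$ in $M(y_u,1)$ and when $y_v>\lambda$ in $M(1,y_v)$, the perturbed weight $(1-g(y_z))\,w_{uz}$ of the edge $(u,z)$ is at least $1-g(\theta)$ and also at least $1-g(\lambda)$. Using $h=\tfrac{1}{10}(1-g)$, this gives $h(y_z)\,w_{uz}\geq h(\min\{\theta,\lambda\})$, the desired bound.

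To verify that $v$ is a victim of $z$ in $M(y_u,y_v)$ I need to check (i) $z$ actively matches $u$ in $M(y_u,y_v)$, and (ii) $u,v$ match each other in $M_z(y_u,y_v)$. For (i), I start from $M(y_u,1)$, where $z$ actively matches $u$ by the definition of $\theta$; decreasing $y_v$ to a value above $\lambda$ cannot break this match, as otherwise Lemma~\ref{lemma:weighted_alternating} would yield an alternating path of strictly decreasing perturbed weights from $v$ to $u$, which when projected back to $M(1,y_v)$ contradicts $y_v>\lambda$. For (ii), I begin with the hypothesis $u,v\in M_z(y_u,1)$ and apply Lemma~\ref{lemma:monotonicity} to the passive vertex $v$, obtaining a threshold $y^{\ast}\leq 1$ such that for $y_v\in(y^{\ast},1)$ the matching $M_z(y_u,y_v)$ is unchanged; in this regime $(u,v)$ still belongs to $M_z(y_u,y_v)$ and (ii) holds.

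The main obstacle is the remaining regime $y_v<y^{\ast}$, where $v$ becomes active in $M_z(y_u,y_v)$ and may grab some partner $x\neq u$, breaking (ii). I would handle this by adapting the unweighted argument of Lemma~\ref{lemma:compensation_unweighted}: since $u$ and $v$ do not match each other in $M_z(y_u,y_v)$, adding $z$ back alters the matching only along an alternating path starting with the edge $(z,u)$ whose perturbed weights strictly decrease by Lemma~\ref{lemma:weighted_alternating}. Using this monotonicity together with the fact that $v$ actively matches $x$ in $M_z(y_u,y_v)$, one can argue the alternating path cannot terminate at $v$, so $v$ remains matched with $x$ in $M(y_u,y_v)$. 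Then $\alpha_v\geq m(y_v)\,w_{vx}$, and a weight comparison forces $w_{vx}\geq 1$: the hypothesis that $u$ actively matches $v$ in $M_z(y_u,1)$ already pins the perturbed weight of $(v,x)$ below that of $(u,v)$ at $y_v=1$, so a case analysis on whether $y_x\lessgtr y_v$ shows that $v$ can only actively grab $x$ over the still-available $u$ when $w_{vx}\geq 1$. Combined with the calibration $\min_y m(y)\geq 5\max_x h(x)\geq h(\min\{\theta,\lambda\})$ built into the choice of $g$ and $h$, this yields $\alpha_v\geq m(y_v)\geq h(\min\{\theta,\lambda\})$, completing the argument.
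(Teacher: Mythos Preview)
Your high-level strategy matches the paper's: split on whether $u$ and $v$ are matched together in $M_z(y_u,y_v)$, use the victim/compensation rule in the ``yes'' case, and argue a direct lower bound on $\alpha_v$ in the ``no'' case. Point (i) is also fine---in fact it is already established in the setup of Section~5.2 (``$u$ is matched by $z$ when $y_u>\theta$ and $y_v>\lambda$''), so it need not be re-argued.

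The gap is in the ``no'' case. You assert that ``one can argue the alternating path cannot terminate at $v$'' and that $v$ chooses $x$ ``over the still-available $u$'', but you never justify that $u$ is still available---i.e., that $v$ is matched \emph{strictly earlier} than $u$ in $M_z(y_u,y_v)$. This is the crux, and both of your downstream claims hinge on it. The paper supplies exactly this missing step: if $u$ were matched strictly earlier than $v$ in $M_z(y_u,y_v)$, then by Fact~\ref{fact:matched_ealier} increasing $y_v$ to $1$ would leave $u$'s match unchanged, contradicting the hypothesis that $u$ matches $v$ in $M_z(y_u,1)$. Once you know $v$ is strictly earlier, the edge $(v,x)$ has larger perturbed weight than $u$'s edge in $M_z$ and hence than every edge of the alternating path past $(z,u)$, so $(v,x)$ cannot lie on the path; and since $u$ is unmatched when $(v,x)$ is probed, the comparison $(1-g(y_v))w_{vx}\ge (1-g(\min\{y_u,y_v\}))w_{uv}\ge (1-g(y_v))$ gives $w_{vx}\ge 1$ directly. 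Your attempt to get $w_{vx}\ge 1$ by looking at $y_v=1$ does not work: at $y_v=1$ you only learn that $(v,x)$ was \emph{not} chosen there, which bounds its perturbed weight from above, not below.

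Finally, note that the paper avoids the alternating-path bookkeeping altogether with a cleaner device: since $z$ and $u$ are matched to each other in $M(y_u,y_v)$, removing both of them leaves every other probe unaffected, so $v$'s matching status in $M(y_u,y_v)$ coincides with its status in $M_{\{z,u\}}(y_u,y_v)$; the latter equals $v$'s status in $M_z(y_u,y_v)$ because $v$ is strictly earlier than $u$ there. This ``remove the matched pair'' trick is worth internalizing---it gives the same conclusion as your path argument with less case analysis.
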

\begin{proof}
	Consider $M_z(y_u,y_v)$. If $u,v$ match each other, then $v$ is the victim of $z$ in $M(y_u,y_v)$ and the lemma holds.
	Otherwise we know that $v$ must be matched strictly earlier than $u$: if $u$ is matched strictly earlier, than it will also be matched strictly earlier than $v$ in $M_z(y_u,1)$, which contradicts the assumption that $u$ matches $v$ in $M_z(y_u,1)$.
	
	Thus we have $\alpha_v \geq m(y_v)$ in $M_z(y_u,1)$.
	Moreover, if we further remove $u$ in $M_z(y_u,1)$, the matching status of $v$ is not changed.
	Since $z$ matches $u$ in $M(y_u,y_v)$, removing both $z$ and $u$ does not change the matching status of $v$, which means that $\alpha_v \geq m(y_v) > h(\min\{\theta,\lambda\})$ in $M(y_u,y_v)$.
\end{proof}

The above lemma indicates that it is crucial to determine whether $u$ matches $v$ in $M_z(y_u,1)$ when $y_u >\theta$.
Recall that $(u,v)\in M_z(1,1)$.
We define $\gamma \in [\theta,1]$ to be the transition rank such that $u$ matches $v$ in $M_{z}(y_u,1)$ when $y_u \in (\gamma,1)$ and matches other vertex when $y_u\in (\theta, \gamma)$.

\medskip

We first consider the case that $u$ matches $v$ in $M_z(y_u,1)$ for all $y_u >\theta$.
In this case $\gamma = \theta$, and thus by Lemma~\ref{lemma:compensation}, we have $\alpha_u + \alpha_v \geq 1-g(\min\{\theta,\lambda\})+h(\min\{\theta,\lambda\})$ in \textsf{Zone-A}.
The gain in \text{Zone-B} is more complicated. In particular, we need to consider whether $u,v$ match each other in $M(1,y_v)$ when $y_v < \lambda$.
Let $\tau\in [0,\lambda]$ be the transition rank such that $v$ matches $u$ in $M(1,y_v)$ when $y_v \in (\tau,\lambda)$ and matches strictly earlier than $u$ when $y_v\in (0,\tau)$. 

\vspace*{-10pt}
\begin{figure}[H]
	\centering
	\begin{subfigure}{.32\textwidth}
		\centering
		\includegraphics[width=0.9\linewidth]{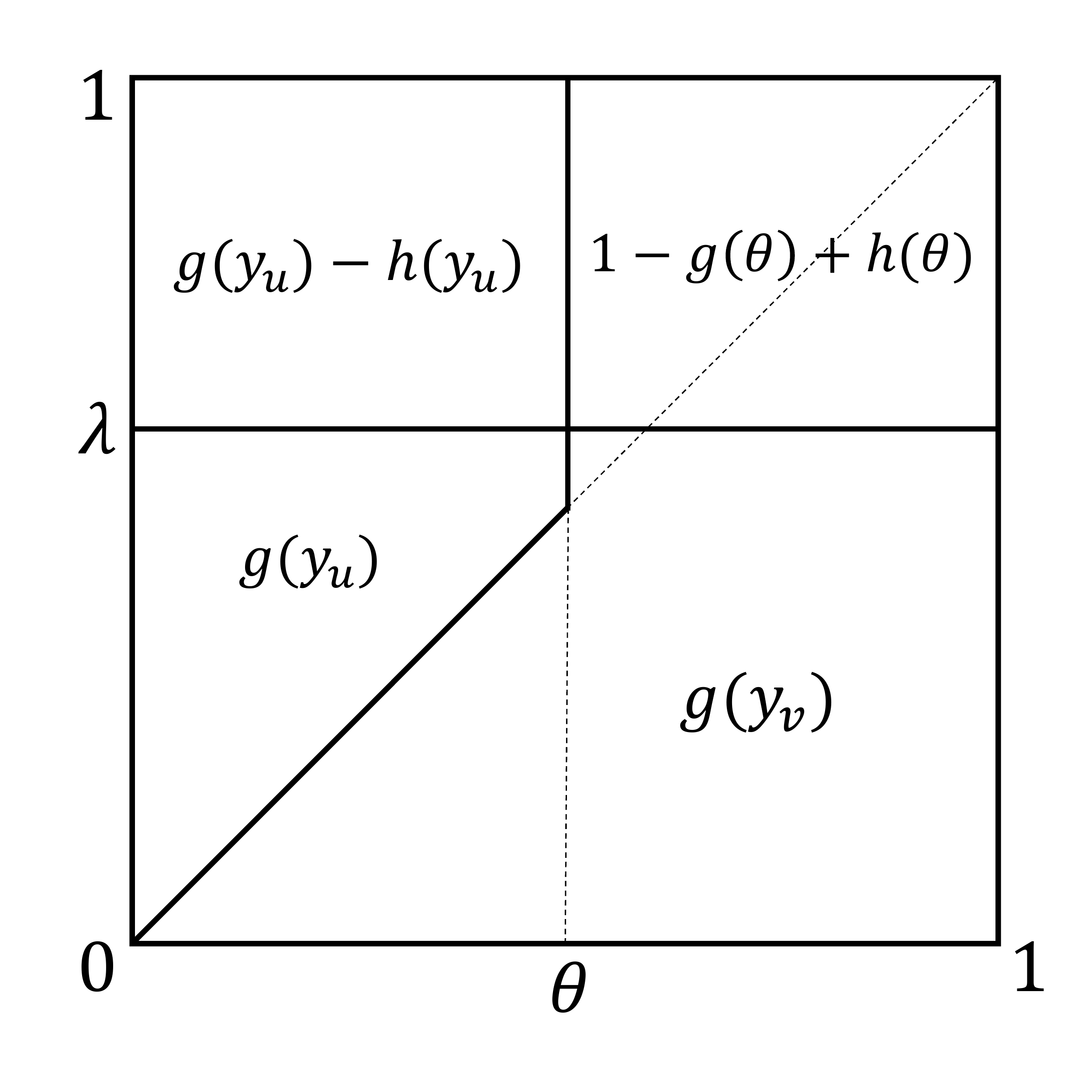}
		\vspace*{-10pt}
		\caption{$\gamma=\theta$ and $\lambda > \theta$}
		\label{fig:we_ge_gamma1}
	\end{subfigure}%
	\begin{subfigure}{.32\textwidth}
		\centering
		\includegraphics[width=0.9\linewidth]{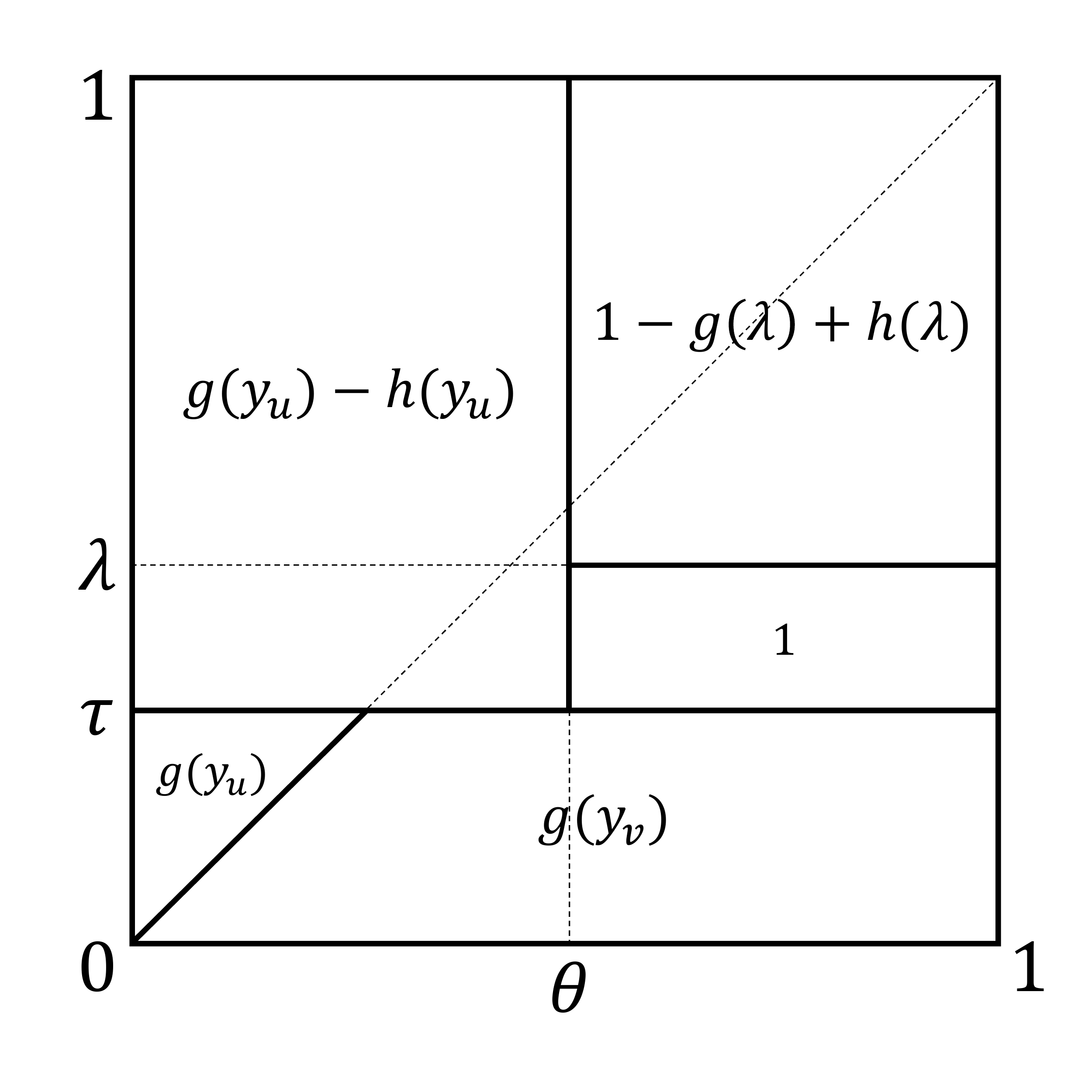}
		\vspace*{-10pt}
		\caption{$\gamma=\theta$ and $\lambda \le \theta$}
		\label{fig:we_ge_gamma2}
	\end{subfigure}
	\begin{subfigure}{.32\textwidth}
		\centering
		\includegraphics[width=0.9\linewidth]{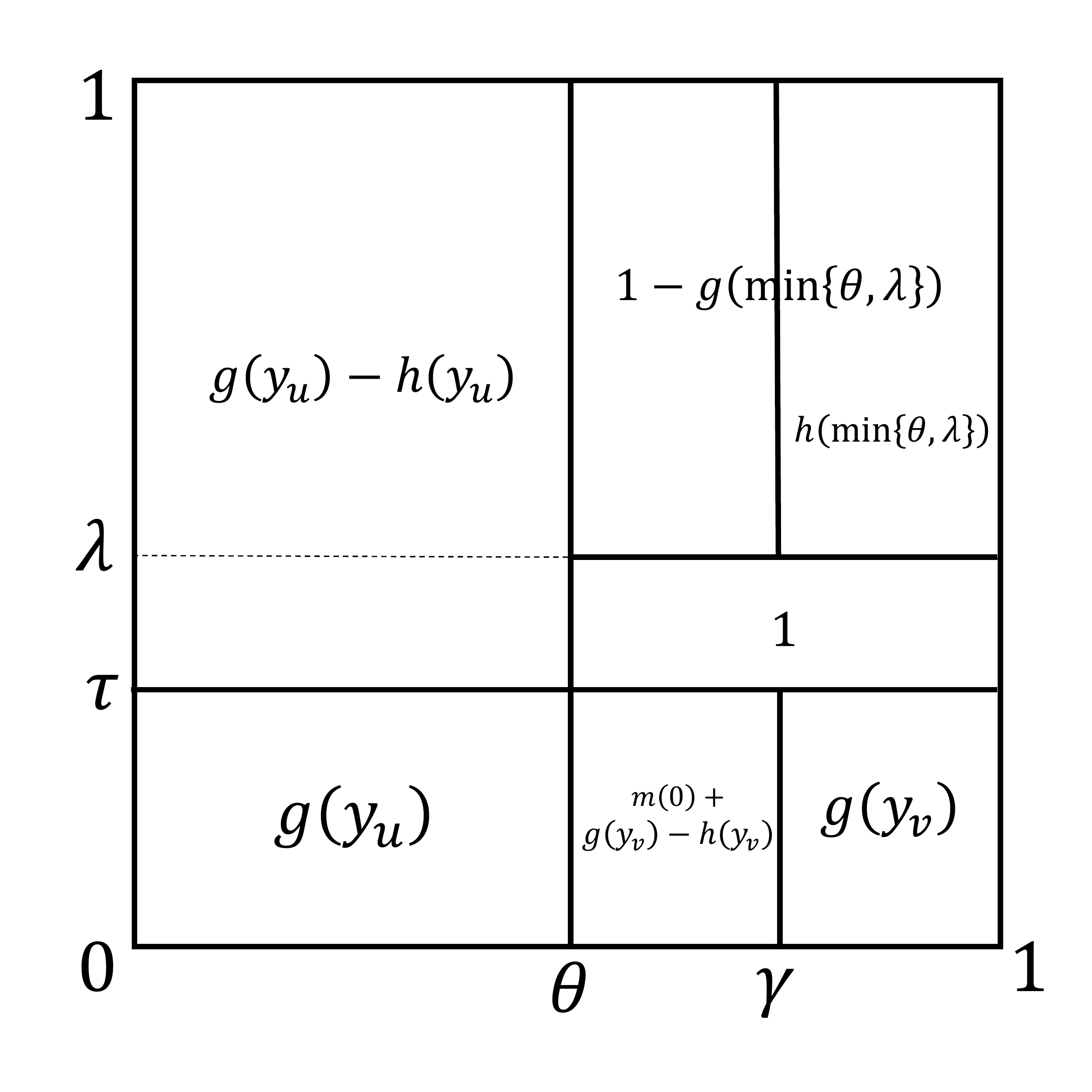}
		\vspace*{-10pt}
		\caption{$\gamma \in (\theta,1)$}
		\label{fig:we_ge_gamma3}
	\end{subfigure}%
	\caption{$(u,v)\in M_z(1,1)$.}
\end{figure}
\vspace*{-10pt}

To proceed, we compare $\theta$ and $\lambda$ and consider the following two cases.

\paragraph{When $\lambda > \theta$.}
First, observe that when $\lambda > \theta$, we must have $\tau = \lambda$.
In other words, $v$ never matches $u$ in $M(1,y_v)$ when $y_v < \lambda$.
The reason is, when $y_v = \lambda^-$, the perturbed weight of edge $(u,v)$ is $1-g(y_v) < 1-g(\theta) < (1-g(y_z))w_{zu}$. However, we know that $v$ is matched earlier than $z$. Thus, $v$ actively matches an edge with weight larger than $1$. Consequently by Lemma~\ref{lemma:monotonicity}, $v$ does not match $u$ in $M(1,y_v)$ for all $y_v < \lambda$. 

In other words, $v$ is matched in $M_u(1,y_v)$ with an edge of weight at least $1$ for all $y_v < \lambda$.
Thus we apply Lemma~\ref{lemma:no-compen} and improve (L4) to $\alpha_u + \alpha_v \geq \min\{g(y_u),g(y_v)\}$. To sum up, we have the following lower bound (refer to Figure~\ref{fig:we_ge_gamma1}).
\begin{equation}
	\begin{split}
		\expect{y_u,y_v}{\alpha_u+\alpha_v} \ge & (1-\theta)(1-\lambda)(1-g(\theta)+h(\theta)) + (1-\lambda) \int_0^\theta \Big( g(y_u)-h(y_u) \Big) dy_u \\
		& + (1-\theta) \int_0^\lambda g(y_v)dy_v + \int_0^\theta (\theta+\lambda-2y)g(y)dy.
	\end{split} 
\end{equation}

\paragraph{When $\lambda \le \theta$.}
In this case $v$ matches $u$ in $M(1,y_v)$ when $y_v \in (\tau,\lambda)$.
Moreover, $u$ remains matched by $v$ in $M(y_u,y_v)$ when $y_u>\theta$ and $y_v\in (\tau,\lambda)$.
When $y_v < \tau$, we have $\alpha_u + \alpha_v \geq \min\{g(y_u),g(y_v)\}$, by a similar argument as before.
In summary, we have (refer to Figure~\ref{fig:we_ge_gamma2})
\begin{equation}
	\begin{split}
		\expect{y_u,y_v}{\alpha_u+\alpha_v} \ge & (1-\theta)(1-\lambda)(1-g(\lambda)+h(\lambda)) + (1-\tau) \int_0^\theta \Big( g(y_u)-h(y_u) \Big) dy_u \\
		& + (1-\theta)(\lambda-\tau) +(1-\theta) \int_0^\tau g(y_v)dy_v + \int_0^\tau (\theta+\tau-2y)g(y)dy.
	\end{split} 
\end{equation}

The derivative over $\lambda$ of the RHS of above is $(1-\theta)\big(g(\lambda)-h(\lambda)-(1-\lambda)(g'(\lambda)-h'(\lambda))\big)$.
Our choice of $g,h$ guarantees that $g(y)-h(y)$ is always larger than $g'(y)-h'(y)$, and thus this derivative is positive.
Therefore, the minimum is achieved when $\lambda=\tau$.

To sum up, the two lower bounds can be unified as follows.
\begin{equation} \label{eq:we_ge_comp}
	\begin{split}
		\expect{y_u,y_v}{\alpha_u+\alpha_v} \ge & (1-\theta)(1-\lambda)(1-g(\min\{\theta,\lambda\})+h(\min\{\theta, \lambda\}))  + (1-\theta) \int_0^\lambda g(y_v)dy_v\\
		& +(1-\lambda)\int_0^\theta \Big( g(y_u)-h(y_u)\Big) dy_u
		+ \int_0^{\min\{\theta,\lambda\}} (\lambda+\theta-2y) g(y)dy.
	\end{split}
\end{equation}

\begin{remark}\label{remar:vbetter}
	Now if we turn our attention back to the case when $v$ is matched strictly earlier than $u$ in $M_z(1,1)$, we can see that~\eqref{eq:we_ge_comp} also serves as a lower bound.
	For \textsf{Zone-A}, the lower bound $1-g(\min\{\theta,\lambda\})+h(\min\{\theta,\lambda\})$ holds since $v$ is matched strictly earlier than $u$ in $M_z(1,y_v)$ when $y_v > \lambda$, which implies $\alpha_v \geq m(y_v)$.
	Exactly the same analysis on lower bounds for \textsf{Zone-B} can also be applied to this case.
\end{remark}

Finally, we consider the case when $\gamma > \theta$.
By the definition of $\gamma$ and Lemma~\ref{lemma:compensation}, $\alpha_v \ge h(\min\{\theta,\lambda\})$ in the part of \textsf{Zone-A} when $y_u > \gamma$.
When $y_u>\theta$ and $y_v\in (\tau,\lambda)$, $u,v$ match each other and $\alpha_u + \alpha_v = 1$.
It remains to give lower bounds when $y_v < \tau$.

Since $v$ is matched strictly earlier than $u$ in $M(1,y_v)$, Lemma~\ref{lemma:no-compen} applies and we have $\alpha_u + \alpha_v \geq \min\{ g(y_u),g(y_v) \}$ for all $y_u\in (0,1)$ and $y_v < \tau$.
Furthermore, since $\gamma > \theta$, we can actually improve this bound further.
First, by Lemma~\ref{lemma:weighted_backup}, we have $\alpha_u \ge m(y_u) \geq m(0)$ when $y_u \in (\theta, \gamma)$ and $y_v < \tau$.
Second, when $y_u < \theta$ and $y_v < \tau$, we can improve (L3) to $\alpha_u + \alpha_v \geq \min\{ g(y_u),g(y_v)-h(y_v)+m(y_u) \} = g(y_u)$.
In summary, we have the lower bounds as shown in Figure~\ref{fig:we_ge_gamma3}.
\begin{equation} 
	\label{eq:we_gamma}
	\begin{split}
		\expect{}{\alpha_u+\alpha_v} \ge & (1-\theta)(1-\lambda)(1-g(\min\{\theta,\lambda\})) + (1-\gamma)(1-\lambda)h(\min\{\theta,\lambda\}) \\
		& +  \int_0^\theta \Big( g(y_u)-(1-\tau) h(y_u) \Big) dy_u
		+ \int_0^\tau \Big( g(y_v) - (\gamma-\theta) h(y_v) \Big) dy_v \\
		& + (1-\theta)(\lambda-\tau) + (\gamma-\theta)\tau \cdot m(0).
	\end{split}
\end{equation}

It is straightforward to see that for every fixed $\tau$, the minimum of the lower bound is achieved when $\lambda=\tau$ (given that $1-g(y)+h(y)$ is always smaller than $1$).
Moreover, for every fixed $\lambda = \tau$, the derivative over $\gamma$ is a constant.
Hence the minimum must be achieved when $\gamma\in\{\theta,1\}$.
It is easy to check that when $\gamma=1$, Equation~\eqref{eq:we_ge_ubetter} serves as an lower bound for Equation~\eqref{eq:we_gamma}; when $\gamma=\theta$, Equation~\eqref{eq:we_ge_comp} serves as an lower bound.

\subsection{Lower Bounding the Approximation Ratio}

Unlike the unweighted case, the performance of \textsf{Perturbed Greedy} on edge-weighted graphs depends on the design of the function $g$. To this end, we explicitly construct one and analytically prove the approximation ratio, rather than running a factor revealing LP.
In particular, we construct a function $g$ that satisfies all pre-specified constraints such that the lower bounds~\eqref{eq:weighted_general_uvmatch}~\eqref{eq:we_ge_ubetter}~\eqref{eq:we_ge_comp} are at least $0.5014$, which completes the proof of Theorem~\ref{th:weighted_general}.
We remark that the piece-wise linear function is just an artifact of the proof, and is not optimal for maximizing the approximation ratio.

In the following, we fix function $g(y) = \begin{cases}
0.365y + 0.48926, & y\leq 0.13 \\
0.067y + 0.528, & y\in (0.13,0.4) \\
0.5548 & y\geq 0.4.
\end{cases}$

First, it is easy to see that the constraints we put on $g$ are satisfied: for every $y\in (0,1)$, we have $g(y)\in (0.4,0.6)$ and $g(y)-h(y) \geq g'(y)-h'(y)$.
It is also easy to check that for the function $g$ we fix, $\min_y\{m(y)\} = m(0) = g(0)-h(0) = 0.438186$.

\subsubsection{Equation~\eqref{eq:weighted_general_uvmatch}}\label{sssec:analysis_1}

We prove that the RHS of~\eqref{eq:weighted_general_uvmatch} (shown as follows) is at least $0.5014$ (over $\theta\geq \lambda$).
\begin{align*}
	& (1-\theta)(1-\lambda) + (1-\lambda) \int_0^\theta \Big( g(y_u)-h(y_u) \Big)dy_u + \int_0^\lambda (\lambda-y_u)g(y_u) dy_u \\
	& + (1-\theta) \int_0^\lambda \Big( g(y_v) - h(y_v) \Big) dy_v + \int_0^\lambda (\theta-y_v)g(y_v) dy_v.
\end{align*}

First, if we take derivative over $\theta$, we have
\begin{equation*}
	(1-\lambda)(g(\theta)-h(\theta)-1)+\int_0^\lambda h(y)dy,
\end{equation*}
which is negative (for any $\theta$) when $\lambda \leq 0.9$.

Thus for $\lambda\leq 0.9$, the minimum is achieved when $\theta = 1$:
\begin{equation*}
	(1-\lambda)\int_0^1 \Big(g(y)-h(y)\Big) dy +\int_0^\lambda (\lambda-y)g(y) dy
	+ \int_0^\lambda (1-y)g(y) dy.
\end{equation*}

By taking derivative over $\lambda$, we have
\begin{equation*}
	\int_0^\lambda g(y)dy + (1-\lambda)g(\lambda) - \int_0^1 \Big( g(y)-h(y) \Big) dy.
\end{equation*}

Since this derivative is non-decreasing, the minimum is achieved when $\lambda = \lambda^* = 0.0344402$ (solution for $\int_0^\lambda g(y)dy + (1-\lambda)g(\lambda) = \int_0^1 g(y)-h(y) dy$), and the value is at least $0.5014$.

For $\lambda > 0.9$, we relax $(1-\theta)(1-\lambda)$ to be $(1-\lambda) \int_\theta^1 \Big( g(y_u)-h(y_u) \Big)dy_u$, then we have
\begin{equation*}
	(1-\lambda) \int_0^1 \Big( g(y_u)-h(y_u) \Big)dy_u + \int_0^\lambda (\lambda-y_u)g(y_u) dy_u
	+ (1-\theta) \int_0^\lambda \Big( g(y_v) - h(y_v) \Big) dy_v + \int_0^\lambda (\theta-y_v)g(y_v) dy_v,
\end{equation*}
which attains its minimum when $\theta = \lambda$: (the inequality holds since $h$ is non-increasing)
\begin{align*}
	& (1-\lambda) \int_0^1 \Big( g(y)-h(y) \Big)dy + (1-\lambda) \int_0^\lambda \Big( g(y) - h(y) \Big) dy + 2 \int_0^\lambda (\lambda-y)g(y) dy \\
	\geq  & (1-\lambda) \int_0^1 \Big( g(y)-h(y) \Big)dy + (1 + \lambda) \int_0^\lambda g(y) dy - 2 \int_0^\lambda y\cdot g(y) dy - (1-\lambda)\lambda \cdot h(0).
\end{align*}

Observe that the derivative (for $\lambda \geq 0.9$)
\begin{equation*}
	\int_0^\lambda g(y) dy + (1-\lambda) g(\lambda) - \int_0^1 \Big( g(y)-h(y) \Big) dy - (1-2\lambda)h(0) > 0.
\end{equation*}

Thus the minimum is achieved when $\lambda = 0.9$, which is at least $0.53$.

\subsubsection{Equation~\eqref{eq:we_ge_ubetter}}

We prove the RHS of \eqref{eq:we_ge_ubetter} (shown as follows) is at least $0.5016$ (over all $\theta$ and $\lambda$):
\begin{align*}
	&  \int_0^\theta \Big( g(y_u)-h(y_u) \Big) dy_u + (1-\theta)\int_0^\lambda \Big(g(y_v) - h(y_v)\Big) dy_v \\
	& + (1-\theta)(1-\lambda)(1-g(\min\{\lambda,\theta\})) + (1-\theta) \lambda\cdot m(0).
\end{align*}

First observe that if $\lambda > \theta$, then the derivative over $\lambda$ is non-negative, which means that the minimum in this case is achieved when $\lambda = \theta$.
Thus it suffices to consider the case when $\lambda \leq \theta$. For this case, the derivative over $\theta$ is given by
\begin{equation*}
	g(\theta) - h(\theta) - (1-\lambda)(1-g(\lambda)) - \lambda\cdot m(0) - \int_0^\lambda \Big( g(y)-h(y) \Big) dy.
\end{equation*}

It can be verified (by taking another derivative over $\lambda$) that the maximum value of the above derivative is achieved when $\lambda = 0$:
\begin{equation*}
	g(\theta) - h(\theta) - (1-g(0)) \leq g(1)-h(1)-(1-g(0)) = 0.51028 - 0.51074 < 0.
\end{equation*}

Thus the minimum of \eqref{eq:we_ge_ubetter} is attained when $\theta = 1$:
\begin{equation*}
	\int_0^1 \Big( g(y)-h(y) \Big) dy \geq 0.5016.
\end{equation*}

\subsubsection{Equation~\eqref{eq:we_ge_comp}}

We prove the RHS of \eqref{eq:we_ge_comp} (shown as follows) is at least $0.5014$ (over all $\theta$ and $\lambda$):
\begin{align*}
	& (1-\theta)(1-\lambda)(1-g(\min\{\theta,\lambda\})+h(\min\{\theta, \lambda\})) + (1-\theta) \int_0^\lambda g(y_v)dy_v \\
	& + (1-\lambda)\int_0^\theta \Big( g(y_u)-h(y_u) \Big) dy_u
	+ \int_0^{\min\{\theta,\lambda\}} (\theta + \lambda - 2y) g(y)dy.
\end{align*}

First, observe that except for a $(1-\lambda)\int_0^\theta h(y) dy$ term, the lower bound is symmetric for $\theta$ and $\lambda$.
Moreover, if $\theta < \lambda$, then $(1-\theta)\int_0^\lambda h(y) dy > (1-\lambda)\int_0^\theta h(y) dy$, which means that by swapping the values of $\theta$ and $\lambda$, the lower bound decreases.
Hence it suffices to consider the case when $\theta\geq \lambda$.

When $\theta\geq \lambda$, the derivative over $\theta$ is given by
\begin{equation*}
	(1-\lambda)\Big( (g(\lambda)-h(\lambda)) + (g(\theta)-h(\theta)) - 1 \Big)
	= \frac{11}{10}(1-\lambda)\Big( (g(\lambda) + g(\theta)) - \frac{12}{11} \Big).
\end{equation*}

Let $\lambda_0\approx 0.12835$ be the solution for $g(\lambda)+ g(1) = \frac{12}{11}$.
Since $g(y)$ is non-decreasing, for $\lambda < \lambda_0$, we have $g(\lambda)+ g(1) < \frac{12}{11}$, which implies that the above derivative is negative. Hence the minimum is achieved when $\theta = 1$:
\begin{equation*}
	(1-\lambda)\int_0^1 \Big(g(y)-h(y)\Big) dy +\int_0^\lambda (1+\lambda-2y)g(y) dy.
\end{equation*}

Then following the same argument as in Section~\ref{sssec:analysis_1}, the minimum value is at least $0.5014$.

For $\lambda \geq \lambda_0$, the minimum is achieved when $\theta = \theta^*$, where $g(\lambda) + g(\theta^*) = \frac{12}{11}$.
Let $\lambda^* = 0.260516$ be the solution for $g(\lambda) = \frac{6}{11}$.
Note that we must have $\lambda \leq \lambda^* \leq \theta^*\leq 0.4$.

For $\lambda\in ( \lambda_0,0.13 ]$, by definition of function $g$, $g(\lambda) + g(\theta^*) = \frac{12}{11}$ is equivalent to
\begin{equation*}
	0.067\cdot \theta^* + 0.528 + 0.365\cdot \lambda + 0.48926 = \frac{12}{11}.
\end{equation*}

Plugging in $\theta^*$ and using $g(\lambda) = 0.365\cdot \lambda + 0.48926$, we can explicitly express the lower bound as a cubic function of $\lambda$, which achieves its minimum value $0.5026$ when $\lambda = 0.13$.

For $\lambda\in ( 0.13,\lambda^* ]$, we have $\theta^* = 2\lambda^* - \lambda$.
Plugging in $\theta^*$ and using $g(\lambda) = 0.067\cdot \lambda + 0.528$, we can explicitly express the lower bound as another cubic function of $\lambda$, which achieves its minimum value $0.50235$ when $\lambda \approx 0.204$.

Thus for all $\lambda,\theta\in [0,1]$, the lower bound is at least $0.5014$.

\section{Conclusion and Open Questions}

In this paper, we prove that \rdt breaks the $0.5$ barrier by using a random decision order and arbitrary preference orders. A natural question to ask is whether the random preferences help in $\mrg$, i.e., whether $\mrg$ has strictly larger approximation ratio than $\rdt$ in the worst case. We slightly believe so and would like to see techniques extending the current gain sharing framework to analyze random preference orders.

We also propose the first algorithm that achieves approximation ratio strictly greater than $0.5$ for the edge-weighted oblivious matching problem. 
Careful readers might wonder what is the approximation ratio of our algorithm when applied to edge-weighted bipartite graphs. 
Actually we are aware of a modified version of our algorithm that achieves $1-\frac{1}{e}$ approximation\footnote{We have a manuscript containing the proof. To avoid distraction, we decide not to include it in this paper.}, in which we only sample ranks on one side of the graph and then perturb the weight of each edge $(u,v)$ by a multiplicative factor $(1-g(y_u))$.
We conjecture that the \textsf{Perturbed Greedy} algorithm (with an appropriate choice of $g$) proposed in this paper has approximation ratio strictly greater than $1-\frac{1}{e}$ and we leave this as an open problem.

{
	\bibliography{matching}
	\bibliographystyle{plain}
}

\newpage

\appendix

\section{Missing Proofs from Section~\ref{sec:prelim}} \label{appendix:prelim}

\begin{proofof}{Lemma~\ref{lemma:monotonicity}}
	For every vertex $v$, let $E_v$ be the set of edges $(u,v)\in E$ such that $y_v < y_u$.
	Note that if we increase $y_v$, then the perturbed weights of edges in $E_v$ decrease.
	
	Suppose $v$ is passive.
	Then at the moment when $v$ is matched, if an edge in $E_v$ is probed, then the other endpoint must be matched already. 
	Hence when $y_v$ increases, all these probes remain unsuccessful, i.e., $v$ gets matched by the same vertex and nothing is changed to the matching.
	The case when $v$ is unmatched is similar.
	
	Consequently, imagine that we increase $y_v$ gradually from $0$ to $1$, then once $v$ becomes passive or unmatched, then the matching remains unchanged afterwards. Thus there exists threshold $y<y_v$ such that $v$ is active when $y_v\in(0,y)$; passive or unmatched otherwise.
	
	Finally, for the last argument, suppose $v$ actively matches $u$ when $y_v\in (0,y)$, then when $y_v$ decreases, perturbed weights of edges in $E_v$ increase.
	Thus when edge $(v,u)$ is probed, either $v$ is already matched (with some edge with a larger perturbed weight), or $v$ matches $u$.
	Since all edges in $E_v$ is perturbed by a factor of $1-g(y_v)$, for smaller value of $y_v\in (0,y)$, the weight of the edge $v$ matches is not smaller.
\end{proofof}

\begin{proofof}{Lemma~\ref{lemma:alternating}}
	Suppose $u$ is matched with $u_1$ in $M(\vecy)$, then after removing $u$, at decision time of edge $(u,u_1)$, $u_1$ is no longer matched.
	If $u_1$ is unmatched in $M_{u}(\vecy)$, then the symmetric difference is a single edge $(u,u_1)$ and the statements trivially hold.
	Otherwise let $u_2$ be matched with $u_1$ in $M_{u}(\vecy)$.
	Observe that the decision time of edge $(u_1,u_2)$ is no earlier than $(u,u_1)$, as otherwise $u_1$ will remain matched with $u_2$ in $M(\vecy)$.
	Then by induction on the number of remaining vertices, the symmetric difference between $M(\vecy)$ and $M_{u_2}(\vecy)$ is an alternating path $P$ starting from $u_2$ such that the decision times of edges are non-decreasing.
	Thus the symmetric difference between $M(\vecy)$ and $M_{u}(\vecy)$ is an alternating path starting from $u,u_1,u_2$, followed by path $P$.
	Moreover, the decision time of $(u,u_1)$ is no later than $(u_1,u_2)$, and the first edge of $P$ has decision time no earlier than $(u_1,u_2)$, which implies the first statement.
	For every vertex $u_i$ with odd index in the alternating path, since the decision time of $(u_{i-1},u_{i})\in M(\vecy)$ is no later than $(u_{i},u_{i+1})\in M_{u}(\vecy)$, compared with $M_{u}(\vecy)$, $u_{i}$ is matched no later in $M(\vecy)$.
\end{proofof}

\section{Factor Revealing LP} \label{sec:factor_revealing_lp}

Recall that to lower bound the approximation ratio, we need to define appropriate functions $g$ and $h$ such that the lower bounds we formulate on $\expect{}{\alpha_u + \alpha_v}$ are at least some ratio $r > 0.5$ for all parameters $\theta,\lambda,\tau$ and $\gamma$.

For instance, for the unweighted bipartite case\footnote{The same approach can be applied to formulate a factor revealing LP for the unweighted general case.}, we formulated the following two lower bounds:
\begin{align*}
	L^1(\theta,\tau) = & \int_0^\theta (1-y)g(y)dy + \int_0^\theta \min\{y,\tau\} m(y) dy + (1-\theta)(1-\theta+\tau)(1-g(\theta)) \\
	& + \int_0^\tau (1-y)g(y) dy + \int_0^\tau y m(y) dy + \frac{1}{2}(2-\tau-\theta)(\theta-\tau),\\
	L^2(\theta,\tau) = & \int_0^\theta (1-y)g(y)dy+ \int_0^\theta \min\{y,\tau\} m(y) dy + \frac{1}{2}(1-\theta)^2 \\
	& + \int_0^\tau (1-y)g(y) dy + \int_0^\tau y m(y) dy + \frac{1}{2}(1-\tau)^2.
\end{align*}

To prove Theorem~\ref{th:unweighted_bipartite}, it remains to find function $g$ such that
\begin{equation*}
	\min_{\theta,\tau}\{ L^1(\theta,\tau), L^2(\theta,\tau) \} \geq 0.639.
\end{equation*}

This can be proved by solving the following continuous optimization problem and showing that the optimal solution is at least $0.639$.
\begin{align*}
	\max. \quad & r \\
	\text{s.t.} \quad & r \leq L^1(\theta,\tau), \quad \forall \theta,\tau \in [0,1] \\
	& r \leq L^2(\theta,\tau), \quad \forall \theta,\tau \in [0,1]\\
	& g(y) \in [0,1], \quad \forall y\in[0,1] \\
	& g(y) \text{ is non-decreasing.}
\end{align*}

Since $L^1(\theta,\tau), L^2(\theta,\tau)$ are both linear in $g(y)$, we can discretize them as follows.
Let $n$ be the size of discretization.
The larger $n$ is, the more accurate we can solve for the above continuous optimization problem. 
Let $g$ be a step function such that $g(y) = g_i$ for all $y\in [ \frac{i}{n},\frac{i+1}{n})$.
By doing so, the integrations in the equations $L^1,L^2$ can be represented by linear summations over $g_i$'s.

Moreover, if we can formulate linear functions $D^1(i,j),D^2(i,j)$ in a way that $D^1(i,j) \leq L^1(\theta,\tau)$ and $D^2(i,j) \leq L^2(\theta,\tau)$ for all $\theta \in [ \frac{i}{n},\frac{i+1}{n} )$ and $\tau \in [ \frac{j}{n},\frac{j+1}{n} )$, then the optimal solution of the following finite LP provides a lower bound on the approximation ratio.
\begin{align*}
	\max. \quad & r \\
	\text{s.t.} \quad & r \leq D^1(i,j), \quad \forall i,j\in [n] \\
	& r \leq D^2(i,j), \quad \forall i,j\in [n] \\
	& g_{k-1} \leq g_{k}, \quad \forall k\in [n] \\
	& g_0 \geq 0,\quad g_n \leq 1. 
\end{align*}

We use $L^1(\theta,\tau)$ to illustrate how we derive the discretized relaxation $D^1(i,j)$.
$L^2(\theta, \tau)$ and other lower bounds derived in Section~\ref{sec:unweighted_gen} can be relaxed in a similar way. Thus, a similar factor revealing LP can be formulated for unweighted general graphs. We omit the tedious details. Our codes for solving the linear programmings are available upon request.

Suppose $\theta \in [ \frac{i}{n},\frac{i+1}{n} )$ and $\tau \in [ \frac{j}{n},\frac{j+1}{n} )$.
We consider the terms of $L^1(\theta,\tau)$ one by one and obtain the following lower bounds.
\begin{enumerate}
	\item $\int_0^\theta (1-y)g(y)dy \geq \frac{1}{n}\sum_{k=0}^{j-1} (1-\frac{k+1}{n})g_k$ and $\int_0^\tau (1-y)g(y) dy \geq \frac{1}{n}\sum_{k=0}^{i-1} (1-\frac{k+1}{n})g_k$;
	\item $\int_0^\theta \min\{y,\tau\} m(y) dy \geq \frac{1}{n}\sum_{k=0}^{i-1}\frac{\min\{k,j\}}{n}\cdot m_k$ and $\int_0^\tau y\cdot m(y) dy \geq \frac{1}{n}\sum_{k=0}^{j-1}\frac{k}{n}\cdot m_k$, where each $m_k$ is a new variable and we introduce two extra constraints $m_k \leq g_k$ and $m_k\leq 1-g_k$;
	\item $(1-\theta)(1-\theta+\tau)(1-g(\theta)) \geq (1-\frac{i+1}{n})(1-\frac{i+1}{n}+\frac{j}{n})(1-g_{i+1})$;
	\item $\frac{1}{2}(2-\tau-\theta)(\theta-\tau) \geq \frac{1}{2}(2-\frac{j+1}{n}-\frac{i+1}{n})(\frac{i}{n}-\frac{j+1}{n})$.
\end{enumerate}

Observe that $D^1(i,j)$ asymptotically approaches $L^1(\theta,\tau)$ when $n$ approaches infinity.
Hence the optimal value of the discretized program approaches that of the continuous program when $n\rightarrow \infty$.

\section{Hardness Results} \label{app:hardness}
\subsection{\rdt on Bipartite Graphs}
In this section, we construct a bipartite graph called \tbomb that is similar to the one in \cite{sicomp/ChanCWZ18}. We evaluate the average performance of \rdt on the given graph by experiments. By doing so, we suggest an experimental hardness result for \rdt. 

\begin{figure}[H]
	\centering
	\includegraphics[width=0.4\linewidth]{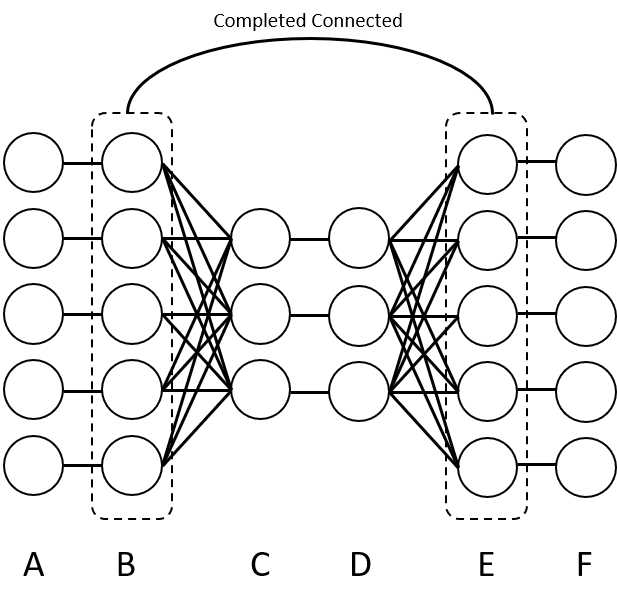}
	\caption{\tbomb}
	\label{fig:neg-bip}
\end{figure}

Refer to Figure~\ref{fig:neg-bip},  vertices of \tbomb consists of 6 parts $(A,B,C,D,E,F)$. $C,D$ contain $n_1$ vertices each, and $A,B,E,F$ contain $n_2$ vertices each. $n_1,n_2$ will be specified later in the experiment. 
The edges of the graph are defined as the following:
\begin{enumerate}
	\item $\forall i \in [n_1]$, let there be an edge $(C[i], D[i])$;
	\item $\forall i \in [n_2]$, let there be edges $(A[i],B[i])$ and $(E[i], F[i])$;
	\item $\forall i \in [n_1], j \in [n_2]$, let there be edges $(B[j],C[i])$ and $(D[i], E[j])$;
	\item $\forall i,j \in [n_1]$, let there be an edge $(B[i],E[j])$.
\end{enumerate}

Each group of vertices share the same preference order. Vertices in $A,F$ have only one neighbor, hence we don't need to specify the preferences. For vertices in $B$, they prefer vertices in $E$ to vertices in $C$ and finally to vertices in $A$. For vertices in $C$, they prefer vertices in $B$ to vertices in $D$. For vertices in $E$, they prefer vertices in $B$ to vertices in $D$ and finally to vertices in $F$. For vertices in $D$, they prefer vertices in $E$ to vertices in $C$. 
Here, within a group of vertices, the preference order is always from small index to large index. 

We run experiments for different $n_1,n_2$ (each for $10^5$ times), the average performance is shown in the following table. 
\begin{table}[H]
	\centering
	\begin{tabular}{|l|l|l|l|l|}
		\hline
		$n_1=$       & 100    & 200    & 500    & 1000   \\ \hline
		$n_2/n_1=1$   & 0.6514 & 0.6504 & 0.6499 & 0.6497 \\ \hline
		$n_2/n_1=1.3$ & 0.6479 & 0.6471 & 0.6465 & 0.6464 \\ \hline
		$n_2/n_1=1.5$ & 0.6474 & 0.6467 & 0.6461 & 0.646  \\ \hline
		$n_2/n_1=1.8$ & 0.6477 & 0.6471 & 0.6466 & 0.6465 \\ \hline
		$n_2/n_1=2$   & 0.6484 & 0.6478 & 0.6473 & 0.6471 \\ \hline
	\end{tabular}
\end{table}

We observe that the worst performance achieves when $n_2/n_1=1.5$, which is close to 0.646. We leave as future work to analyze it theoretically.

\subsection{\rdt  on General Graphs}
In this section, we construct a non-bipartite graph with $4$ vertices for which the maximum matching matches $4$ vertices while the \rdt algorithm matches $2.5$ vertices in expectation.
In other words, the approximation ratio of \rdt on general graphs is at most $0.625$.
Together with Theorem~\ref{th:unweighted_bipartite}, we show a separation on the approximation ratio of \rdt on bipartite and general graphs.

\begin{theorem}
	\rdt is at most $0.625$-approximate for general graphs.
\end{theorem}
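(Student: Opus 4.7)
The plan is to exhibit a small non-bipartite graph on four vertices together with fixed preference orders such that the ratio between the expected size of the matching produced by \rdt and the size of a maximum matching is exactly $\frac{5}{8}=0.625$. Since the decision order is uniformly random, it suffices to consider the four possible vertices that can have the smallest rank.

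I would construct $G=(V,E)$ with $V=\{a,b,c,d\}$, edges $\{(a,b),(b,c),(a,c),(a,d)\}$ (a triangle on $a,b,c$ plus a pendant edge $(a,d)$), and fix the following preference orders: $a$ prefers $b\succ c\succ d$; $b$ prefers $a\succ c$; $c$ prefers $a\succ b$; and $d$ has the sole neighbor $a$. The unique maximum matching is $\{(a,d),(b,c)\}$ of size $2$, which matches all four vertices.

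Next I would do the case analysis based on the vertex with the smallest decision time (each case occurring with probability $\tfrac{1}{4}$ by symmetry of the uniform random order):
\begin{compactitem}
\item If $a$ decides first, it picks its favourite neighbor $b$; afterward $c$ and $d$ are non-adjacent, so the matching has size $1$.
\item If $b$ decides first, it picks $a$; then $c$ and $d$ are non-adjacent, giving size $1$.
\item If $c$ decides first, it picks $a$; then $b$ and $d$ are non-adjacent, giving size $1$.
\item If $d$ decides first, it picks $a$ (its only neighbor); then among $b,c$ the earlier one picks the other, giving size $2$.
\end{compactitem}
Therefore the expected matching size is $\frac{1+1+1+2}{4}=\frac{5}{4}$, and the approximation ratio is at most $\frac{5/4}{2}=\frac{5}{8}=0.625$.

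There is no real obstacle here beyond picking the preferences so that all three triangle vertices ``waste'' their opportunity on $a$ in three out of four cases, while only $d$'s early decision lets $b,c$ pair up. Once the graph and preferences are specified, the argument is a direct enumeration over the identity of the first-deciding vertex; the key structural feature exploited is that $G$ is non-bipartite (the triangle $abc$), which is precisely what fails in the bipartite setting where Corollary~\ref{corollary:insertion_bipartite} would have prevented this wastefulness.
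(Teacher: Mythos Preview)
Your proposal is correct and is essentially the same as the paper's proof: both use a triangle with a single pendant vertex and set preferences so that the pendant-adjacent triangle vertex is everyone's favourite, yielding a matching of size $1$ unless the pendant vertex decides first. The only difference is a relabeling of vertices (the paper attaches the pendant at $c$ rather than $a$ and uses a common preference order $c>b>a>d$), so the construction and the $\tfrac{5}{4}$-versus-$2$ computation are identical.
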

\begin{proof}
	Let the four vertices be $\{a,b,c,d\}$, and let there be $4$ edges: $(a,b), (a,c), (b,c)$ and $(c,d)$.
	Obviously there exists a perfect matching.
	
	Let the preference of all vertices be $c > b > a > d$.
	It is easy to check that unless $d$ has the earliest decision time, \rdt matches only one edge.
	Thus the expected size of the matching produced by \rdt is $\frac{5}{4}$ while the maximum matching has size $2$.
\end{proof}

\subsection{\irp}

\begin{theorem}
	\irp is no better than $0.5$-approximate, even for bipartite graphs. 
\end{theorem}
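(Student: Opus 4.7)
The plan is to exhibit a bipartite graph with an adversarial decision order on which \irp outputs a matching of expected size only $(\tfrac12+o(1))\cdot \mathrm{OPT}$. The construction is the ``fan plus pendants'' graph familiar from the online bipartite matching literature.

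Let $L=\{u_1,\ldots,u_n\}\cup\{a_1,\ldots,a_n\}$ and $R=\{v_1,\ldots,v_n\}\cup\{b_1,\ldots,b_n\}$. Include all edges $u_i v_j$ for every $i,j\in[n]$, together with the pendant edges $u_i b_i$ and $a_i v_i$. The set $\{(u_i,b_i):i\in[n]\}\cup\{(a_i,v_i):i\in[n]\}$ is a perfect matching of size $2n$, so $\mathrm{OPT}=2n$.

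Fix the decision order $u_1,u_2,\ldots,u_n,a_1,\ldots,a_n$, with the right-hand passive vertices appended in any order. Because the preference of a vertex is a uniformly random permutation of its neighbors and is independent of everything that happened earlier, conditional on the current set of unmatched neighbors the top of that permutation is uniform on that set; equivalently, at its turn each vertex selects a uniformly random unmatched neighbor. At $u_i$'s turn, $b_i$ is certainly unmatched (having no other neighbor), while some subset of $\{v_1,\ldots,v_n\}$ is still free. Writing $Y_i$ for the number of free $v_j$'s at that instant, we have $Y_1=n$, and
\[
 Y_{i+1}=\begin{cases} Y_i & \text{w.p. } 1/(Y_i+1) \text{ (}u_i\text{ picks }b_i\text{)},\\ Y_i-1 & \text{w.p. } Y_i/(Y_i+1) \text{ (}u_i\text{ picks a }v\text{)}.\end{cases}
\]
Every $u_i$ is matched, and each $a_i$ is matched afterwards if and only if $v_i$ is still free at its turn, so the number of matched $a$-vertices equals $Y_{n+1}$. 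Hence the expected matching size equals $n+\mathbb{E}[Y_{n+1}]$.

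It remains to bound $\mathbb{E}[Y_{n+1}]$. Since $Y$ decreases by at most one per step and starts from $n$, a trivial induction gives $Y_i\ge n-i+1$ for every $i\le n+1$, so $1/(Y_i+1)\le 1/(n-i+2)$ almost surely. Writing $Y_{n+1}=\sum_{i=1}^n \mathbf{1}[Y_{i+1}=Y_i]$ and taking expectations,
\[
 \mathbb{E}[Y_{n+1}]=\sum_{i=1}^{n}\mathbb{E}\!\left[\frac{1}{Y_i+1}\right]\le \sum_{i=1}^{n}\frac{1}{n-i+2}=H_{n+1}-1=O(\log n).
\]
Therefore the approximation ratio is at most $(n+O(\log n))/(2n)=\tfrac12+o(1)$, which proves the theorem. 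The only conceivably delicate step is the reduction from ``uniformly random preference order'' to ``uniformly random unmatched neighbor''; the harmonic estimate itself is routine.
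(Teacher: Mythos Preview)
Your proof is correct and uses essentially the same construction as the paper: your ``fan plus pendants'' graph is isomorphic to the Dyer--Frieze bomb graph the paper cites (their $\{u_1,\ldots,u_{n/2}\}$, $\{u_{n/2+1},\ldots,u_n\}$, and pendants $v_i$ correspond to your $u$'s, $v$'s, and $a$'s/$b$'s respectively). Your adversarial decision order differs slightly from theirs---you let only the $u$'s act in the biclique phase and then the pendant $a$'s, whereas the paper lets both halves of the biclique act---but this only simplifies the recursion, and your explicit harmonic computation $\mathbb{E}[Y_{n+1}]\le H_{n+1}-1$ supplies the details the paper omits with ``it is easy to check.''
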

\begin{proof}
	We state the instance from~\cite{rsa/DyerF91}. 
	Let the set of vertices be $\{u_i,v_i\}_{i\in[n]}$, where $u_i$ is connected to $v_i$, for all $i\in[n]$ and there is a complete bipartite graph between $\{u_1,\ldots,u_{\frac{n}{2}}\}$ and $\{u_{\frac{n}{2}+1},\ldots,u_{n}\}$.
	If in the decision order, all $u_i$'s appear before $v_i$'s, it is easy to check that the approximation ratio is $0.5+o(1)$. We omit the formal proof.
\end{proof}

\end{document}